\documentclass[11pt,english]{article}
\usepackage{mathptmx}
\usepackage[T1]{fontenc}
\usepackage[latin9]{inputenc}
\setcounter{secnumdepth}{2}
\setcounter{tocdepth}{2}
\usepackage{float}
\usepackage{mathtools}
\usepackage{enumitem}
\usepackage{amsmath}
\usepackage{amsthm}
\usepackage{amssymb}
\usepackage[a4paper]{geometry}
\geometry{verbose,tmargin=30mm,bmargin=30mm,lmargin=30mm,rmargin=30mm}
\usepackage{setspace}
\usepackage[authoryear]{natbib}
\setstretch{1.1}

\makeatletter
\newlength{\lyxlabelwidth}      
	\newenvironment{elabeling}[2][]%
	{\settowidth{\lyxlabelwidth}{#2}
		\begin{description}[font=\normalfont,style=sameline,
			leftmargin=\lyxlabelwidth,#1]}
	{\end{description}}
\theoremstyle{definition}
\newtheorem{defn}{\protect\definitionname}
\theoremstyle{remark}
\newtheorem*{rem*}{\protect\remarkname}
\theoremstyle{plain}
\newtheorem{lem}{\protect\lemmaname}
\theoremstyle{plain}
\newtheorem{thm}{\protect\theoremname}
\theoremstyle{plain}
\newtheorem{cor}{\protect\corollaryname}
\theoremstyle{plain}
\newtheorem{prop}{\protect\propositionname}


\usepackage{algorithm,algpseudocode}

\usepackage{caption}

\numberwithin{equation}{section}

\setcounter{secnumdepth}{3}

\newcommand*\dif{\mathop{}\!\mathrm{d}}

\usepackage{enumitem}  
\setlist[itemize]{noitemsep,topsep=2pt}

\setlist{noitemsep,topsep=2pt} 

\setlength{\abovecaptionskip}{4pt}  
\setlength{\belowcaptionskip}{-7pt} 
\setlength{\bibsep}{0pt plus 0.3ex} 

\usepackage[font=small,labelfont=bf]{caption}

\usepackage{enumitem}
\setlist[enumerate,1]{label=(\alph*),font=\normalfont}

\makeatother

\usepackage{babel}
\providecommand{\corollaryname}{Corollary}
\providecommand{\definitionname}{Definition}
\providecommand{\lemmaname}{Lemma}
\providecommand{\propositionname}{Proposition}
\providecommand{\remarkname}{Remark}
\providecommand{\theoremname}{Theorem}

\begin{document}
\title{Sequential Portfolio Selection under Latent Side Information-Dependence
Structure: Optimality and Universal Learning Algorithms}
\author{Duy Khanh Lam\thanks{Some of the main results of this paper were presented in a poster
titled ``\emph{Universal Investment Strategies by Online Learning
under Latent Information-Dependence Structure}'' at the workshop
on ``\emph{Learning and Inference from Structured Data: Universality,
Correlations, and Beyond}'' held at the Abdus Salam International
Centre for Theoretical Physics in Italy, from July 3 to 7, 2023. I
would also like to express my gratitude to my advisor, Giulio Bottazzi,
and to Daniele Giachini at the Scuola Superiore Sant'Anna for their
valuable discussions.}\\Scuola Normale Superiore\\ \\(Working paper)}
\maketitle
\begin{abstract}
This paper investigates the investment problem of constructing an
optimal no-short sequential portfolio strategy in a market with a
latent dependence structure between asset prices and partly unobservable
side information, which is often high-dimensional. The results demonstrate
that a dynamic strategy, which forms a portfolio based on perfect
knowledge of the dependence structure and full market information
over time, may not grow at a higher rate infinitely often than a constant
strategy, which remains invariant over time. Specifically, if the
market is stationary, implying that the dependence structure is statistically
stable, the growth rate of an optimal dynamic strategy, utilizing
the maximum capacity of the entire market information, almost surely
decays over time into an equilibrium state, asymptotically converging
to the growth rate of a constant strategy.\smallskip

Technically, this work reassesses the common belief that a constant
strategy only attains the optimal limiting growth rate of dynamic
strategies when the market process is identically and independently
distributed. By analyzing the dynamic log-optimal portfolio strategy
as the optimal benchmark in a stationary market with side information,
we show that a random optimal constant strategy almost surely exists,
even when a limiting growth rate for the dynamic strategy does not.
Consequently, two approaches to learning algorithms for portfolio
construction are discussed, demonstrating the safety of removing side
information from the learning process while still guaranteeing an
asymptotic growth rate comparable to that of the optimal dynamic strategy.\smallskip
\end{abstract}
{\small\textit{Keywords}}{\small : Online Learning, Dynamic Strategy,
Universality, Optimal Growth, Side Information, Latent Dependence
Structure.}{\small\par}

\pagebreak\setlength{\abovedisplayskip}{2.5pt} 
\setlength{\abovedisplayshortskip}{2.5pt}
\setlength{\belowdisplayskip}{2.5pt} 
\setlength{\belowdisplayshortskip}{2.5pt} 

\section{Introduction}

In sequential investment, the task of collecting and analyzing data
on market events that correlate with future assets' prices, to make
well-informed decisions at each investment period, is undoubtedly
one of the most significant concerns. In practice, a market is often
represented by latent dynamics, designed to capture the hidden dependence
structures among assets' prices, their historical data, and other
sources of side market information, such as energy prices, news, and
other potentially correlated features. However, investors consistently
face three fundamental challenges of uncertainty: the incompleteness
of observable side market information, the computational difficulty
associated with high-dimensional data, and the unknown nature of the
market\textquoteright s dependence structure. The first challenge
stems from the fact that valuable side market information is often
costly and usually available only to a small subset of investors at
any point in time, such as insiders, which makes most of the side
information unobservable. The second challenge lies in the uncertainty
and potentially large volume of market features correlated with future
assets' prices, making estimation from collected data computationally
expensive in terms of both time and processing power, which complicates
high-frequency decision-making. Lastly, the third challenge arises
from the hidden dependence structure of the market, where the statistical
properties of its dynamics are unknown, leaving open the possibility
that the process could either be identically and independently distributed
(i.i.d.) or exhibit an undetermined memory length.\smallskip

\textbf{Motivation and overall result}. In this paper, we explore
the problem of constructing sequential no-short portfolios and address
the three aforementioned challenges based on a general model of a
market process with partially observable side market information.
We demonstrate that the best dynamic strategy, defined as a sequence
of portfolios determined using the complete history of market observations
over time, including even the unobservable features, cannot infinitely
often outperform an optimal constant strategy, represented as a unique
and random portfolio, in terms of both growth rate and expected growth
rate as time progresses to infinity. This finding implies that the
constant strategy does not require knowledge of periodic side market
information, yet it remains asymptotically optimal. Such a result
contradicts conventional perspectives in the fields of information
theory, finance, and potentially learning theory, which suggest that
possessing full knowledge of market information and dependence structures
should unequivocally outperform ignorance. Based on our findings,
a class of learning algorithms that relies solely on past assets'
prices, without any prior knowledge of latent dependence structures
or side information, is shown to be optimal and comparable to another
innovative learning algorithm that is more sophisticated and leverages
complete market knowledge. This analysis underscores the safety of
omitting side feature data in the learning process while still ensuring
optimality in the asymptotic growth rate.\smallskip

\textbf{Paper organization and main novelties}. We briefly describe
the main novel results, which contradict popular conjectures and the
status quo, along with the analysis techniques and literature references
for comparison:\smallskip
\begin{elabeling}{00.00.0000}
\item [{\textit{Section~2}:}] The assets' returns and all side information,
including unobservable features, are modeled as a pairwise process
to capture various dependence structures. Then, the log-optimal strategy
is introduced as the optimal benchmark among all strategies. Moreover,
it is defined without requiring the integrability assumption for the
variables, broadening the scope of market investigations. Note that
integrability is a common assumption in the literature, ensuring that
expectations are well-defined.$\vspace{0.5ex}$
\item [{\textit{Section~3}:}] The optimality properties of the log-optimal
strategy are established as a generalization of \citet{Algoet1988}.
The analysis is performed under a measure transformation, which bypasses
the required integrability by decomposing the portfolio's return into
an integrable part. The properties demonstrate the advantages of utilizing
full side market information and guarantee the optimal expected growth
rate and asymptotic growth rate for the log-optimal strategy over
other competitive strategies using less information.$\vspace{0.5ex}$
\item [{\textit{Section~}4.1:}] This section reassesses the status quo
in contemporary literature. Since the works of \citet{Kelly1956,Breiman1960,Breiman1961},
it has been widely accepted that a constant strategy of a single portfolio
only achieves the optimal asymptotic growth rate under an i.i.d. process
of assets' returns without side information, as reiterated in many
later articles such as \citet{Cover1991,Morvai1991,Algoet1992,Morvai1992,Cover1996,Ordentlich1996,Cover1996a,Gyorfi2006,Cover2006,CesaBianchi2006,Bhatt2023},
etc. Additionally, after demonstrating the optimal limiting growth
rate of the log-optimal strategy under stationary and ergodic processes
without side information in \citet{Algoet1988}; the author conjectured
that no constant strategy can match the optimal growth rate of a dynamic
strategy under the same process in \citet{Algoet1992}\footnote{See the last paragraph in Remark 4, pages 929-930, where the author
also restates that a constant strategy attains the optimal asymptotic
growth rate only under an i.i.d. process without side market information.}. However, we show that a random optimal constant strategy can almost
surely asymptotically approach the optimal growth rate of the log-optimal
strategy under a stationary market process with side information,
becoming non-random if the process is ergodic.$\vspace{0.5ex}$
\item [{\textit{Section~}4.2:}] This section establishes two equalities
between the optimal limits of the growth rate and expected growth
rate of the log-optimal strategy utilizing the infinitely large amount
of market information, and those of the random optimal strategy, which
is not conditioned on side information. It implies that under the
stationarity condition, the advantages of utilizing the maximum capacity
of market information will almost surely decay over time into an equilibrium
state.$\vspace{0.5ex}$
\item [{\textit{Section~}4.3:}] The general process is reduced to capture
several models of market phenomena, including a market with a non-decaying
impact of past information on future market, a market with finite-order
memory, and one where future assets' returns depend only on side information
but not on the past. Generally, if the latent dependence structure
remains unchanged over time, a random optimal constant strategy could
exist.$\vspace{0.5ex}$
\item [{\textit{Section~}5:}] Under an unknown distribution, making the
optimal strategies undetermined in advance, two approaches for universal
learning algorithms to construct strategies are discussed. The first
approach requires learning only past assets' returns to asymptotically
attain the same rate as the random optimal constant strategy. This
generalizes the optimality of the Universal portfolio by \citet{Cover1991}
beyond an i.i.d. process without side information. Alternatively,
the second approach requires learning the entire past market information
and dependence structure to asymptotically attain the same rate as
the log-optimal strategy. This is based on approaches proposed by
\citet{Algoet1992} and \citet{Gyorfi2006}, but it can learn less
information while still guaranteeing optimality.$\vspace{0.5ex}$
\end{elabeling}

\section{Model settings and formalizations}

We consider modeling an investment scenario where an investor sequentially
constructs no-short portfolios in a multi-asset market without frictions.
Over time, the investor observes market information, including historical
asset prices and side information such as oil prices, news, and events
from other markets. While future asset prices remain inherently unpredictable,
the revealed information can be utilized in various ways to develop
investment strategies aimed at achieving favorable cumulative wealth
across diverse market possibilities. However, not all side information
correlated with asset prices is observable, and its relevance varies.
Additionally, collecting extensive side information can be prohibitively
expensive. Thus, an ideal model should be universal, capable of accommodating
diverse dependence structures and incorporating unobservable side
information, even when such information is infinite in scope. This
universality naturally introduces the challenge of high dimensionality.
Furthermore, the model should facilitate the formulation of an optimal
strategy from the set of all accessible strategies, serving as a benchmark
despite the uncertainties of future market behavior and incomplete
information. Such a desired modeling framework, which addresses these
challenges, is presented as follows.\smallskip

Let us consider sequential portfolio construction for investment in
a stock market with $m\geq2$ risky assets, whose prices are causally
affected by $k\geq1$ features of observable and unobservable side
information, over discrete time periods $n\in\mathbb{N}_{+}$. Assume
that the assets' returns and side information are realized with respect
to a pairwise random process $\big\{ X_{n},Y_{n}\big\}_{n=1}^{\infty}$
that is jointly defined on the canonical probability space $\text{\ensuremath{\big(\Omega},\ensuremath{\mathbb{F}},\ensuremath{\mathbb{\mathbb{P}}\big)}}$,
i.e., $\big(X_{n},Y_{n}\big)\big(\omega\big)=\big(X_{n}\big(\omega\big),Y_{n}\big(\omega\big)\big)$
for all $n$, where $\omega\in\Omega$. Specifically, the positive
real-valued vector $X_{n}\coloneqq\big(X_{n,1},...,X_{n,m}\big)\in\mathbb{R}_{++}^{m}$
represents the positive returns of the assets at time $n$, which
is defined as the ratio of the assets' prices at time $n$ to those
at the previous time $n-1$; while the real-valued random vector $Y_{n}\coloneqq\big(Y_{n,1},...,X_{n,k}\big)\in\mathbb{R}^{k}$
represents the side information related to the market's features at
time $n$. We further denote $X_{1}^{n}\coloneqq\left\{ X_{1},...,X_{n}\right\} $
and $Y_{1}^{n}\coloneqq\left\{ Y_{1},...,Y_{n}\right\} $ as the finite
sequences of asset returns and side information variables, respectively,
from $1$ to $n$. It should be remarked that the side information
variables $Y_{n}$ could take values in some Polish spaces rather
than $\mathbb{R}^{k}$ as the default setting.\smallskip

At each time $n$, the sub-$\sigma$-field $\mathcal{F}_{n}\coloneqq\sigma\big(X_{1}^{n},Y_{1}^{n}\big)\subseteq\mathbb{F}$
embodies all past information up to the present of the random pairs
$\big(X_{n},Y_{n}\big)$, up to the limiting $\sigma$-field $\mathcal{F}_{\infty}\coloneqq\sigma\big(\cup_{n=1}^{\infty}\mathcal{F}_{n}\big)\subseteq\mathbb{F}$,
which encompasses the infinite past as $n\to\infty$ and is the smallest
$\sigma$-field containing the largest possible information the investor
can know. Consequently, it is possible to define a further sub-$\sigma$-field
$\mathcal{F}_{n}^{X}=\sigma\big(X_{1}^{n-1},Y_{1}^{n}\big)\subset\mathcal{F}_{n+1}$,
which includes only the side information just before the realization
of the assets' returns at each time $n$. Thus, the corresponding
limiting information field generated by $\mathcal{F}_{n}^{X}$, as
an infinite union, is denoted by the $\sigma$-field $\mathcal{F}_{\infty}^{X}\coloneqq\sigma\big(\cup_{n=1}^{\infty}\mathcal{F}_{n}^{X}\big)\subseteq\mathbb{F}$.
Noting that although we could model the side information events that
occur between two consecutive $X_{n}$ and $X_{n+1}$ as $Y_{n}$
instead of $Y_{n+1}$, which would make the market process simpler,
it seems difficult to model a scenario where the present assets' returns
depend solely on new market events but are independent of past ones.
Hence, our modeling framework allows us to conveniently capture various
market behaviors, as discussed in the section on investigating the
stationary market.\smallskip

For each $n$, given the $\sigma$-field $\mathcal{F}_{n}^{X}$, a
portfolio $b_{n}:\mathbb{R}_{++}^{m\times(n-1)}\times\mathbb{R}^{k\times n}\mathbb{\rightarrow}\mathcal{B}^{m}$,
where the simplex $\mathcal{B}^{m}\coloneqq\big\{\beta\coloneqq\big(\beta_{1},...,\beta_{m}\big)\in\mathbb{R}_{++}^{m}:\,{\displaystyle {\textstyle \sum}}_{{\scriptstyle j=1}}^{{\scriptstyle m}}\beta_{j}=1\big\}$
denotes the domain of all no-short-constrained portfolios. This means
each portfolio is causally selected based on the realizations of $X_{1}^{n-1}$
and $Y_{1}^{n}$, but not $X_{1}^{n}$, in order to avoid the unrealistic
and impossible case of the future assets' returns being always predicted
perfectly. We subsequently denote the corresponding \emph{strategy},
formed from the portfolios $b_{n}$ for all $n$, as the infinite
sequence $\big(b_{n}\big)\coloneqq\left\{ b_{n}\right\} _{n=1}^{\infty}$.
Accordingly, if a strategy uses only a fixed portfolio over time,
such that $b_{n}\coloneqq b$ for all $n$, it is termed a \emph{constant
strategy} and denoted as $\big(b\big)$ without the time index. Additionally,
the return of a portfolio $b$ with respect to assets' returns $X_{n}$
is denoted by $\langle b,X_{n}\rangle$, where $\langle\cdot,\cdot\rangle$
represents the scalar product of two vectors.\smallskip

With the above settings, let the initial capital $S_{0}\big(b_{0}\big)\eqqcolon S_{0}=1$
by convention for any strategy, and assume that the portfolios are
constructed without commission fees for arbitrary fractions. The \emph{cumulative
wealth} and its corresponding exponential average \emph{growth rate}
after $n$ periods of investment, yielded by a self-financed strategy
$\big(b_{n}\big)$, are respectively defined as follows:
\[
S_{n}\big(b_{n}\big)\coloneqq{\displaystyle {\displaystyle \prod_{i=1}^{n}}\left\langle b_{i},X_{i}\right\rangle }\text{ and }W_{n}\big(b_{n}\big)\coloneqq\dfrac{1}{n}\log S_{n}\big(b_{n}\big)={\displaystyle \dfrac{1}{n}\sum_{i=1}^{n}\log\left\langle b_{i},X_{i}\right\rangle ,\,\forall n,}
\]
which shorthand for $S_{n}\big(\left\{ b_{i}\right\} _{i=1}^{n},X_{1}^{n}\big)$
and $W_{n}\big(\left\{ b_{i}\right\} _{i=1}^{n},X_{1}^{n}\big)$,
respectively. It is worth noting that the assumption of positive assets'
returns is not only realistic but also serves to exclude the case
where a strategy's cumulative wealth is depleted to zero, which would
halt the investment.\smallskip

\subsection{The benchmark strategy and normalized assets' returns}

In general, to investigate the intrinsic growth rate of a market,
it is sufficient to analyze a single representative strategy that
is guaranteed not to yield a lower growth rate than any other strategy
as time evolves. Therefore, any potentially competitive strategy must
be compared solely with this representative strategy for performance
benchmarking. The so-called \emph{log-optimal strategy}, as defined
in Definition \ref{Definition 1}, provides such a strategy for the
purpose of our investigation. However, to better understand the nature
of the market under consideration, it should be noted that the benchmarking
role of the log-optimal strategy may no longer hold when market frictions,
such as commission fees, are introduced.
\begin{defn}
Given the filtration $\left\{ \mathcal{F}_{n}^{X}\right\} _{n=1}^{\infty}$,
consider the portfolios $b_{n}^{*}$ over periods $n$ that satisfy
the following condition:\label{Definition 1}
\[
\mathbb{E}\big(\log\left\langle b,X_{n}\right\rangle -\log\left\langle b_{n}^{*},X_{n}\right\rangle |\mathcal{F}_{n}^{X}\big)\leq0,\,\forall b\in\mathcal{B}^{m},\forall n.
\]
Any sequence of such $b_{n}^{*}$ defines a log-optimal strategy and
is denoted as $\big(b_{n}^{*}\big)$ henceforth.
\end{defn}
We stress that Definition \ref{Definition 1} implies the log-optimal
portfolios are ones maximizing the conditional expectation of the
differences between logarithmic returns of two portfolios, given all
past market assets' returns and side information up to the present
period, regardless of their observability. This way of formalization
does not necessitate the finite existence of individually conditional
expected values of each portfolio's logarithmic returns but their
well-defined conditional expected difference. Therefore, it allows
us to broaden our investigation of a stochastic market to the general
case of infinite or even ill-defined expectations by bypassing the
required integrability of measurable variables, which will be discussed
in the next sections.\smallskip
\begin{rem*}
In the literature, the integrability of $X_{n}$, i.e., the expected
values $\mathbb{E}\big(\log\left\langle b,X_{n}\right\rangle \big)$,
is often assumed to finitely exist in some form as a minimal requirement
for analysis in several papers, such as \citet{Morvai1991,Cuchiero2019,Gyorfi2006},
etc. In the paper \citet{Algoet1988}, the definition of the log-optimal
strategy is defined as in Definition \ref{Definition 1}; however,
all subsequent analyses in that paper proceed under the assumption
that integrability is satisfied, leading to its main theorems being
stated for finite expectations. In fact, the technique we discuss
next is also utilized in that paper, but only to investigate the continuity
of the maximal expected logarithmic return of a portfolio and the
existence of the maximizer. In contrast, we apply the same technique
but use it to bypass the required integrability by bounding the variables,
thus expanding the market scenarios in which expectations are not
well-defined.
\end{rem*}
\textbf{Variable transformation}. In order to guarantee the well-definedness
of the conditional expected difference between the logarithmic returns
of portfolios, we show that any individual conditional expectation
can be decomposed into the sum of a well-defined expected value and
a common redundant part, which is then ignorable by subtraction. Specifically,
by adapting the following approach in a manner similar to that originally
used by \citet{Algoet1988} to demonstrate the existence of log-optimal
portfolios, any expected logarithmic portfolio return can be decomposed
by performing a transformation of the probability measure through
a change of variables.\smallskip

Let us consider the equally allocated portfolio $\hat{b}\coloneqq\big(1/m,...,1/m\big)$
as a reference portfolio and the variable of assets' returns $X$.
We subsequently define the variable $U$ of \emph{normalized assets'
returns} as the following scaling function of $X$ with parameter
$\hat{b}$:
\begin{equation}
U\coloneqq u\left(X\right)\coloneqq X/\big<\hat{b},X\big>\in\mathcal{U},\label{normalized U 1}
\end{equation}
where $\mathcal{U}$ denotes the corresponding set of all normalized
assets' returns as:
\begin{equation}
\mathcal{U}\coloneqq\left\{ u\coloneqq\big(u_{1},...,u_{m}\big)\in\mathbb{R}_{++}^{m}:\,\big<\hat{b},u\big>=1\right\} .\label{normalized U 2}
\end{equation}
Thus, for any distribution $P$ of the random variable $X$, the corresponding
distribution $Q^{P}$ of the normalized random variable $U$ is an
image measure of $P$, defined as follows:
\[
Q^{P}\big(U\in A\big)=P\big(X:\,u\big(X\big)\in A\big),\,\forall A\subseteq\mathcal{U}.
\]
As a result, while the support of the original distribution $P$ ranges
over the whole space $\mathbb{R}_{+}^{m}$, the support of the transformed
distribution $Q^{P}$ is constrained to a bounded region. This fact,
along with other critical properties relevant to our analysis, which
result directly from the transformation of the probability measure
for decomposition, are summarized in Lemma \ref{Lemma 1}.
\begin{lem}
\textup{Consider the reference portfolio $\hat{b}$ and the variable
of normalized assets' returns $U$, defined in terms of the variable
$X$ according to (\ref{normalized U 1}) and (\ref{normalized U 2}).
We have: \label{Lemma 1}}
\begin{enumerate}
\item Given a process of assets' returns $\big\{ X_{n}\big\}_{n=1}^{\infty}$,
the difference in growth rates between two generic strategies $\big(\bar{b}_{n}\big)$
and $\big(\tilde{b}_{n}\big)$ is always identical to the corresponding
difference associated with the normalized assets' returns $U_{n}=u\left(X_{n}\right)$
over time, i.e.,
\[
W_{n}\big(\bar{b}_{n}\big)-W_{n}\big(\tilde{b}_{n}\big)=\dfrac{1}{n}\sum_{i=1}^{n}\log\big<\bar{b}_{n},U_{n}\big>-\dfrac{1}{n}\sum_{i=1}^{n}\log\big<\tilde{b}_{n},U_{n}\big>,\text{ }\forall n.
\]
\item For any distribution, the corresponding expected difference between
the logarithmic returns of two portfolios remains unchanged through
the change of variable, as:
\[
\mathbb{E}\big(\log\big<\bar{b},X\big>-\log\big<\tilde{b},X\big>\big)=\mathbb{E}\big(\log\big<\bar{b},U\big>-\log\big<\tilde{b},U\big>\big),\text{ }\forall\bar{b},\tilde{b}\in\mathcal{B}^{m}.
\]
Additionally, \textup{$0\leq\max_{b\in\mathcal{B}^{m}}\mathbb{E}\big(\log\big<b,U\big>\big)\leq\log\big(m\big)$.}
\end{enumerate}
\end{lem}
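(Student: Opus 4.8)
The plan is to reduce everything to one elementary scaling identity and then read off both claims. For any portfolio $b\in\mathcal{B}^{m}$ and any $x\in\mathbb{R}_{++}^{m}$, linearity of the scalar product in its vector argument together with $u(x)=x/\langle\hat{b},x\rangle$ gives $\langle b,u(x)\rangle=\langle b,x\rangle/\langle\hat{b},x\rangle$, hence
\[
\log\langle b,u(x)\rangle=\log\langle b,x\rangle-\log\langle\hat{b},x\rangle .
\]
The subtracted term $\log\langle\hat{b},x\rangle$ is finite --- since $x\in\mathbb{R}_{++}^{m}$ forces $0<\langle\hat{b},x\rangle<\infty$ --- and does not depend on $b$, so it cancels in any difference of logarithmic returns. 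Every assertion in Lemma \ref{Lemma 1} will follow from this cancellation.

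For part (a), I would expand $W_{n}(\bar{b}_{n})-W_{n}(\tilde{b}_{n})=\frac{1}{n}\sum_{i=1}^{n}\big(\log\langle\bar{b}_{i},X_{i}\rangle-\log\langle\tilde{b}_{i},X_{i}\rangle\big)$ and insert the scaling identity inside each summand with $x=X_{i}(\omega)$; the $\log\langle\hat{b},X_{i}\rangle$ terms cancel within each bracket, leaving precisely $\frac{1}{n}\sum_{i=1}^{n}\log\langle\bar{b}_{i},U_{i}\rangle-\frac{1}{n}\sum_{i=1}^{n}\log\langle\tilde{b}_{i},U_{i}\rangle$ with $U_{i}=u(X_{i})$. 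Since this is a pointwise identity in $\omega$, no integrability enters.

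For the first equality in part (b), the same scaling identity applied to the random vector $X$ gives the pointwise equality of random variables $\log\langle\bar{b},X\rangle-\log\langle\tilde{b},X\rangle=\log\langle\bar{b},U\rangle-\log\langle\tilde{b},U\rangle$, and taking expectations of both sides yields the claim; I would stress that the two sides are literally the same random variable, so their expectations coincide whenever either is defined --- this is exactly the point that later licenses replacing $X$ by the bounded variable $U$. For the two-sided bound, the lower estimate comes from testing $b=\hat{b}$: then $\langle\hat{b},U\rangle=\langle\hat{b},X\rangle/\langle\hat{b},X\rangle\equiv1$, so $\mathbb{E}\big(\log\langle\hat{b},U\rangle\big)=0$ and $\max_{b\in\mathcal{B}^{m}}\mathbb{E}\big(\log\langle b,U\rangle\big)\ge0$. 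For the upper estimate, every $u\in\mathcal{U}$ satisfies $\sum_{j=1}^{m}u_{j}=m$ (rewriting $\langle\hat{b},u\rangle=1$) with all $u_{j}>0$, so $\max_{j}u_{j}\le m$ and therefore $\langle b,u\rangle=\sum_{j}b_{j}u_{j}\le\max_{j}u_{j}\le m$ for every $b\in\mathcal{B}^{m}$; thus $\log\langle b,U\rangle\le\log m$ pointwise, whence $\mathbb{E}\big(\log\langle b,U\rangle\big)\le\log m$ for each $b$ and the bound is preserved under the maximum. The same uniform upper bound also makes each such expectation well-defined in $[-\infty,\log m]$, so the maximum is meaningful.

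I expect essentially no obstacle: the proof is a short computation. The only points needing a word of care are the strict positivity and finiteness of $\langle\hat{b},X\rangle$ (immediate from $X\in\mathbb{R}_{++}^{m}$), which legitimize the division and the logarithm, and the observation --- already built into phrasing the identities in terms of differences --- that passing to expectations is valid without assuming integrability of the individual logarithmic returns. If one additionally wants the maximum in the bound to be attained rather than merely a supremum, the hard part will be upper semicontinuity of $b\mapsto\mathbb{E}(\log\langle b,U\rangle)$ on the compact simplex $\mathcal{B}^{m}$; this follows from Fatou's lemma together with the uniform dominating bound $\log m$, and is the one place where a limiting argument would be needed.
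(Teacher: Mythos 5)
Your proposal is correct and follows essentially the same route as the paper: the scaling identity $\log\langle b,X\rangle=\log\langle b,U\rangle+\log\langle\hat{b},X\rangle$ with cancellation of the common term for part (a) and the first equality in (b), then the test portfolio $\hat{b}$ (giving $\langle\hat{b},U\rangle=1$) for the lower bound and the coordinatewise bound $U_{j}\leq m$ for the upper bound. Your extra emphasis that the two sides of the expected difference are literally the same random variable, so no integrability is needed, is exactly the point the paper intends, and the closing remark on attainment of the maximum is a harmless aside (the paper defers that existence question to its Lemma 3 via Algoet's results).
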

\begin{proof}
Since the scaling function is defined in (\ref{normalized U 1}) as
$u\big(X\big)=X/\big<\hat{b},X\big>$, given the variable $X$, the
logarithmic returns of any portfolio can be decomposed into the following
sum:
\begin{equation}
\log\big<b,X\big>=\log\big<b,u\big(X\big)\big>+\log\big<\hat{b},X\big>,\text{ }\forall b\in\mathcal{B}^{m}.\label{decomposition 1}
\end{equation}
Therefore, given any two generic strategies $\big(\bar{b}_{n}\big)$
and $\big(\tilde{b}_{n}\big)$ and the process $\big\{ X_{n}\big\}_{n=1}^{\infty}$,
we have:
\begin{align*}
W_{n}\big(\bar{b}_{n}\big)-W_{n}\big(\tilde{b}_{n}\big) & =\dfrac{1}{n}\sum_{i=1}^{n}\log\big<\bar{b}_{n},X_{n}\big>-\dfrac{1}{n}\sum_{i=1}^{n}\log\big<\tilde{b}_{n},X_{n}\big>\\
 & =\dfrac{1}{n}\sum_{i=1}^{n}\log\big<\bar{b}_{n},u\left(X_{n}\right)\big>-\dfrac{1}{n}\sum_{i=1}^{n}\log\big<\tilde{b}_{n},u\left(X_{n}\right)\big>,\text{ }\forall n,
\end{align*}
which establishes assertion (a), given that $U_{n}=u\big(X_{n}\big)$.\smallskip

To prove assertion (b), we first note that the variable $U$ is bounded
as:
\begin{equation}
0<U_{i}=\frac{X_{i}}{\big<\hat{b},X\big>}=\frac{mX_{i}}{\sum_{i=1}^{m}X_{i}}\leq m,\text{ }\forall i\in\left\{ 1,...,m\right\} ,\label{U is bounded}
\end{equation}
due to $\hat{b}=\big(1/m,...,1/m\big)$. As a result, for any portfolio,
we have the following inequality:
\[
0=\mathbb{E}\big(\log\big<\hat{b},U\big>\big)\leq\max_{b\in\mathcal{B}^{m}}\mathbb{E}\big(\log\big<b,U\big>\big)\leq\mathbb{E}\big(\log\big(m\big)\big)=\log\big(m\big),
\]
which attests to the finite non-negativity of the maximal expected
value $\max_{b\in\mathcal{B}^{m}}\mathbb{E}\big(\log\big<b,U\big>\big)$.
Furthermore, we also obtain the decomposition deduced directly from
(\ref{decomposition 1}) as follows:
\[
\mathbb{E}\big(\log\big<b,X\big>\big)=\mathbb{E}\big(\log\big<b,U\big>\big)+\mathbb{E}\big(\log\big<\hat{b},X\big>\big).
\]
Then, the needed equality of the assertion follows immediately as:
\begin{align*}
\mathbb{E}\big(\log\big<\bar{b},X\big>-\log\big<\tilde{b},X\big>\big) & =\mathbb{E}\big(\log\big<\bar{b},U\big>\big)-\mathbb{E}\big(\log\big<\tilde{b},U\big>\big)\\
 & =\mathbb{E}\big(\log\big<\bar{b},U\big>-\log\big<\tilde{b},U\big>\big),\text{ }\forall\bar{b},\tilde{b}\in\mathcal{B}^{m},
\end{align*}
which finalizes our proof.
\end{proof}
The properties established in Lemma \ref{Lemma 1} are helpful for
our discussion in the sense that the log-optimal portfolios defined
according to Definition \ref{Definition 1}, in which the random variables
$X_{n}$ and thus the associated logarithmic returns of portfolios
$\log\big<b_{n}^{*},X_{n}\big>$ might not be integrable, can be investigated
through the equivalent conditional expected difference as:
\begin{equation}
\mathbb{E}\big(\log\big<b,X_{n}\big>-\log\big<b_{n}^{*},X_{n}\big>|\mathcal{F}_{n}^{X}\big)=\mathbb{E}\big(\log\big<b,U_{n}\big>-\log\big<b_{n}^{*},U_{n}\big>|\mathcal{F}_{n}^{X}\big)\leq0,\,\forall b\in\mathcal{B}^{m},\forall n,\label{log-optimal normalized}
\end{equation}
in which the optimally conditional expected value $\max_{b\in\mathcal{B}^{m}}\mathbb{E}\big(\log\big<b,U\big>|\mathcal{F}_{n}^{X}\big)$
is always well-defined, i.e., the variables $\log\big<b_{n}^{*},U\big>$
associated with the log-optimal portfolios $b_{n}^{*}$ are integrable
given any probability measure. Moreover, the approach of considering
the difference between the logarithmic returns of two portfolios rather
than their individual values allows us to remove the redundantly inherent
term $\log\big<\hat{b},X\big>$, thus reducing the originally unbounded
growth rate of the log-optimal strategy $W_{n}\big(b_{n}^{*}\big)$
to the upper-bounded growth rate $n^{-1}\sum_{i=1}^{n}\log\big<b_{n}^{*},U_{n}\big>$
over all periods. In this context, some important properties of the
log-optimal strategy under the normalized assets' returns will be
discussed in the next section.

\section{Supporting lemmas on the utilization of market information}

After introducing the log-optimal strategy and the normalized assets'
returns variable, we investigate various properties of competitive
strategies benchmarked against the log-optimal strategy in this section.
Through these comparisons, the optimality of the benchmarked log-optimal
strategy is demonstrated in terms of utilizing market information
to asymptotically achieve superior growth rates and expected cumulative
wealth. These properties also provide further insights into the maximum
potential of a strategy when all side information is fully observed.
The analysis is presented in the following three subsections, where
some results are straightforward extensions of those by \citet{Algoet1988},
adapted to the normalized assets' returns variable, while others are
improvements, offering clearer proofs and novel deductions.

\subsection{Optimality principles of the log-optimal strategy}

In addition to being optimal in maximizing the conditional expected
value of the logarithmic return of a portfolio, log-optimal portfolios
are also optimal in guaranteeing the highest conditional expected
ratio of return compared to all other portfolios, given any $\sigma$-field
$\mathcal{F}_{n}^{X}$. Consequently, the growth rate of the log-optimal
strategy will almost surely (a.s.) not be exceeded by that of any
other competitive strategy infinitely often by an arbitrary magnitude
as time evolves. These optimal properties of the log-optimal strategy
are established in Lemma \ref{Lemma 2}.
\begin{lem}
Consider a log-optimal strategy $\big(b_{n}^{*}\big)$ and any competitive
strategy $\big(b_{n}\big)$. We have: \label{Lemma 2}
\begin{enumerate}
\item $\mathbb{E}\big(\big<b_{n},U_{n}\big>\big/\big<b_{n}^{*},U_{n}\big>|\mathcal{F}_{n}^{X}\big)\leq1,\text{ }\forall n.$
\item $\limsup_{n\to\infty}\big(W_{n}\big(b_{n}\big)-W_{n}\big(b_{n}^{*}\big)\big)\leq0\text{, a.s}.$
\end{enumerate}
Noting that the growth rate of the log-optimal strategy $W_{n}\left(b_{n}^{*}\right)$
does not necessarily have a limit.
\end{lem}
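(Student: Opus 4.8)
The plan is to deduce assertion~(a) from the normalized log-optimality relation~(\ref{log-optimal normalized}) through a perturbation argument, and then to obtain~(b) by recognizing the ratio of cumulative wealths as a nonnegative supermartingale and invoking its almost-sure convergence. For~(a), I would fix $n$ and an arbitrary deterministic portfolio $b\in\mathcal{B}^{m}$, and for $\lambda\in(0,1/2]$ form the mixture $b_{\lambda}\coloneqq(1-\lambda)b_{n}^{*}+\lambda b$, which stays in $\mathcal{B}^{m}$ because $b_{n}^{*}$ has strictly positive coordinates. Writing $R\coloneqq\langle b,U_{n}\rangle/\langle b_{n}^{*},U_{n}\rangle$ --- a finite, nonnegative random variable since $0<\langle b_{n}^{*},U_{n}\rangle$ and $0\le\langle b,U_{n}\rangle\le m$ by $U_{n}\in\mathcal{U}$ --- one has $\langle b_{\lambda},U_{n}\rangle/\langle b_{n}^{*},U_{n}\rangle=1+\lambda(R-1)$, so that~(\ref{log-optimal normalized}) applied to $b_{\lambda}$ reads
\[
\mathbb{E}\Big(\tfrac{1}{\lambda}\log\big(1+\lambda(R-1)\big)\,\Big|\,\mathcal{F}_{n}^{X}\Big)\le0,\qquad\lambda\in(0,1/2].
\]
The analytic point is that $\lambda\mapsto\lambda^{-1}\log(1+\lambda(R-1))$ is the slope of the chord from the origin of the concave map $\lambda\mapsto\log(1+\lambda(R-1))$, hence it increases, as $\lambda\downarrow0$, to the right derivative $R-1$, while remaining bounded below on $(0,1/2]$ by its value $2\log\tfrac{1+R}{2}\ge-2\log2$ at $\lambda=1/2$ (this last step uses $R\ge0$). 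Shifting by $2\log2$ and applying the conditional monotone convergence theorem then gives $\mathbb{E}(R-1\mid\mathcal{F}_{n}^{X})\le0$, that is $\mathbb{E}(\langle b,U_{n}\rangle/\langle b_{n}^{*},U_{n}\rangle\mid\mathcal{F}_{n}^{X})\le1$ (and incidentally this conditional expectation is finite). Since the bound holds for every deterministic $b$ and $b\mapsto\langle b,U_{n}\rangle/\langle b_{n}^{*},U_{n}\rangle$ is continuous on the simplex, a standard freezing/approximation step would extend it to the $\mathcal{F}_{n}^{X}$-measurable competitive portfolio $b_{n}$, which is~(a).

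For~(b), I would set $T_{0}\coloneqq1$ and $T_{n}\coloneqq S_{n}(b_{n})/S_{n}(b_{n}^{*})=\prod_{i=1}^{n}\langle b_{i},U_{i}\rangle/\langle b_{i}^{*},U_{i}\rangle$ for $n\ge1$, the second equality holding because the factors $\langle\hat{b},X_{i}\rangle$ cancel, by Lemma~\ref{Lemma 1}(a). Each $T_{n}$ is nonnegative and $\mathcal{F}_{n}$-measurable; as $\mathcal{F}_{n-1}\subseteq\mathcal{F}_{n}^{X}$ and $T_{n-1}$ is $\mathcal{F}_{n-1}$-measurable, part~(a) with the tower property yields $\mathbb{E}(T_{n}\mid\mathcal{F}_{n-1})\le T_{n-1}$, so $\mathbb{E}(T_{n})\le1$ for all $n$ and $(T_{n})_{n\ge0}$ is a nonnegative $(\mathcal{F}_{n})$-supermartingale. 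Doob's supermartingale convergence theorem makes $T_{n}$ converge almost surely to a finite limit, hence the trajectory $(T_{n})_{n}$ is almost surely bounded, so $n^{-1}\log^{+}T_{n}\to0$ and thus $\limsup_{n\to\infty}n^{-1}\log T_{n}\le0$ almost surely. Since $W_{n}(b_{n})-W_{n}(b_{n}^{*})=n^{-1}\log T_{n}$ by Lemma~\ref{Lemma 1}(a), this is precisely~(b); in particular nothing is being said about $\lim_{n}W_{n}(b_{n}^{*})$ itself, consistently with the remark closing the lemma.

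The only step I expect to be genuinely delicate --- the rest being bookkeeping --- is the exchange of the limit $\lambda\downarrow0$ with the conditional expectation in~(a). This is exactly where passing to the normalized returns $U_{n}$ is essential: it bounds $R$ from above and supplies the uniform lower bound $-2\log2$ on the chord slopes over $\lambda\in(0,1/2]$, which is what legitimizes conditional monotone convergence; no such control is available for $X_{n}$, whose logarithmic returns need not be integrable. The supermartingale part is the classical argument of \citet{Algoet1988}, here only adapted to the filtration $\{\mathcal{F}_{n}^{X}\}$ and to the normalized variables.
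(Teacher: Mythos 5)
Your proof is correct, and its skeleton matches the paper's: part (a) is the Kuhn--Tucker condition for log-optimality, and part (b) rests on the wealth ratio $S_{n}(b_{n})/S_{n}(b_{n}^{*})$ being a non-negative supermartingale after the $\langle\hat{b},X_{i}\rangle$ factors cancel. The differences are in execution. For (a), the paper simply cites the Kuhn--Tucker conditions of \citet{Bell1980} and \citet{Algoet1988}, whereas you reprove them via the mixture $b_{\lambda}=(1-\lambda)b_{n}^{*}+\lambda b$, the monotone chord-slope bound $2\log\frac{1+R}{2}\geq-2\log 2$, and conditional monotone convergence; this is exactly where the normalization to $U_{n}$ earns its keep, and your derivation makes the lemma self-contained where the paper outsources it. One presentational caveat: applying the defining inequality to $b_{\lambda}$ already requires the pointwise (frozen, per-$\omega$) maximality of $b_{n}^{*}$ under the regular conditional distribution, since $b_{\lambda}$ is $\mathcal{F}_{n}^{X}$-measurable, so the ``standard freezing/approximation step'' you defer to the end is in fact needed at the outset (countable dense subset of $\mathcal{B}^{m}$ plus continuity in $b$); this is routine and the paper glosses it entirely, so it is not a gap, but you should reorder it. For (b), you finish with Doob's supermartingale convergence theorem (a.s. convergence implies boundedness of the trajectory, hence $n^{-1}\log T_{n}$ has non-positive limsup), while the paper uses $\mathbb{E}(T_{n})\leq1$ with Markov's inequality at level $n^{2}$ and Borel--Cantelli; both are standard, yours is slightly cleaner, and the paper's variant additionally yields the explicit $O(\log n/n)$ rate at which the exceedance probabilities are controlled.
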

\begin{proof}
Assertion (a) is simply a direct result from applying the Kuhn-Tucker
conditions for log-optimality in \citet{Bell1980} and \citet{Algoet1988}.
Specifically, noting that the log-optimal portfolios can be equivalently
obtained as (\ref{log-optimal normalized}) by Lemma \ref{Lemma 1},
where the maximal conditional expected values $\max_{b\in\mathcal{B}^{m}}\mathbb{E}\big(\log\big<b,U\big>|\mathcal{F}_{n}^{X}\big)$
are finite, the log-optimal portfolios always exist, and the Kuhn-Tucker
conditions yield the required inequality in the assertion.\smallskip

To prove assertion (b), we modify the arguments originally presented by \citet{Algoet1988}
to show that $\big(S_{n}\big(b_{n}\big)/S_{n}\big(b_{n}^{*}\big),\mathcal{F}_{n}^{X}\big)$
is a non-negative supermartingale. In detail, since $\big<\hat{b},X_{n}\big>>0$
for any $n$, we obtain the following inequality using assertion (a):
\begin{align*}
\mathbb{E}\Big(\frac{S_{n+1}\big(b_{n+1}\big)}{S_{n+1}\big(b_{n+1}^{*}\big)}\Big|\mathcal{F}_{n+1}^{X}\Big) & =\mathbb{E}\Big(\frac{S_{n}\big(b_{n}\big)\big<\hat{b},X_{n+1}\big>\big<b_{n+1},U_{n+1}\big>}{S_{n}\big(b_{n}^{*}\big)\big<\hat{b},X_{n+1}\big>\big<b_{n+1}^{*},U_{n+1}\big>}\Big|\mathcal{F}_{n+1}^{X}\Big)\\
 & =\frac{S_{n}\big(b_{n}\big)}{S_{n}\big(b_{n}^{*}\big)}\mathbb{E}\Big(\frac{\big<b_{n+1},U_{n+1}\big>}{\big<b_{n+1}^{*},U_{n+1}\big>}\Big|\mathcal{F}_{n+1}^{X}\Big)\leq\frac{S_{n}\big(b_{n}\big)}{S_{n}\big(b_{n}^{*}\big)},\forall n.
\end{align*}
Consequently, we obtain $\mathbb{E}\big(S_{n}\big(b_{n}\big)/S_{n}\big(b_{n}^{*}\big)\big)\leq\mathbb{E}\big(S_{0}/S_{0}\big)=1$
and then deduce the required result of the assertion by invoking the
Markov inequality as follows:
\begin{align*}
\mathbb{\mathbb{P}}\big(S_{n}\big(b_{n}\big)>n^{2}S_{n}\big(b_{n}^{*}\big) & <\frac{1}{n^{2}}\\
\Rightarrow\,\,\,\,\,\,\,\,\,\,\,\,\,\,\,\,\,\,\,\,\,\,\,\sum_{n=1}^{\infty}\mathbb{\mathbb{P}}\Big(\frac{1}{n}\log\frac{S_{n}\left(b_{n}\right)}{S_{n}\left(b_{n}^{*}\right)}>\frac{2\log n}{n}\Big) & \leq\sum_{n=1}^{\infty}\frac{1}{n^{2}}=\frac{\pi^{2}}{6}\,\text{(the Basel problem)}\\
\Rightarrow\mathbb{\mathbb{P}}\Big(\frac{1}{n}\log\frac{S_{n}\left(b_{n}\right)}{S_{n}\left(b_{n}^{*}\right)}>\epsilon\text{ infinitely often}\Big) & =0\,\text{ (by the Borel-Cantelli lemma),}\text{ }
\end{align*}
for any arbitrarily small $\epsilon>0$ as $n\to\infty$, i.e $\limsup_{n\to\infty}\big(W_{n}\big(b_{n}\big)-W_{n}\big(b_{n}^{*}\big)\big)\leq0$
almost surely.
\end{proof}
Besides the almost sure guarantee of optimality in asymptotic growth
rate compared with other competitive strategies by Lemma \ref{Lemma 2},
the log-optimal strategy is also the optimal one that maximizes the
expected logarithmic cumulative wealth among all accessible strategies
for any time horizon. This property is discussed in the next section,
which evaluates the benefit of using a larger amount of information
rather than a smaller amount when making decisions.

\subsection{Positive impact of large information size}

One might argue that the use of all information, including unobservable
side information, is very costly and challenging due to the difficulty
of accurately estimating the conditional distribution, particularly
since the side information is often high-dimensional in reality. However,
if there is a relationship between the assets' returns and all the
side information, removing some of this information to reduce dimensionality
only makes sense if its correlation with the assets' returns is insignificant
- something that, in fact, remains unknown if not all information
can be fully observed. In Lemma \ref{Lemma 3}, we establish a comparison
between two log-optimal portfolios that depend on different sizes
of observed information. This comparison highlights the advantages
of utilizing a larger set of information and justifies the optimality
of the log-optimal strategies over any other kind of strategy among
all possible ones.
\begin{lem}
Assume an assets' returns variable $X$, defined on $\big(\Omega,\mathbb{F},\mathbb{\mathbb{P}}\big)$,
admits a regular conditional distribution given any sub-$\sigma$-field
of $\mathbb{F}$. Then, the normalized assets' returns variable $U$
also admits the corresponding transformed conditional distribution,
and the following hold:\label{Lemma 3}
\begin{enumerate}
\item Given a sub-$\sigma$-field $\mathcal{G}\subseteq\mathbb{F}$, there
exists a $\mathcal{G}$-measurable optimal portfolio such that:
\[
\max_{b\in\mathcal{B}^{m}}\mathbb{E}\big(\log\big<b,U\big>|\mathcal{G}\big)=\max_{b\in\mathcal{G}}\mathbb{E}\big(\log\big<b,U\big>|\mathcal{G}\big),
\]
where the notation $b\in\mathcal{G}$ denotes the $\mathcal{G}$-measurability
of a portfolio $b$. Consequently:
\[
\mathbb{E}\big(\max_{b\in\mathcal{B}^{m}}\mathbb{E}\big(\log\big<b,U\big>|\mathcal{G}\big)\big)=\max_{b\in\mathcal{G}}\mathbb{E}\big(\log\big<b,U\big>\big).
\]
\item The following inequality holds for any two sub-$\sigma$-fields $\mathcal{G}\subseteq\mathcal{H}\subseteq\mathbb{F}$:
\[
\mathbb{E}\big(\max_{b\in\mathcal{B}^{m}}\mathbb{E}\big(\log\big<b,U\big>|\mathcal{G}\big)\big)\leq\mathbb{E}\big(\max_{b\in\mathcal{B}^{m}}\mathbb{E}\big(\log\big<b,U\big>|\mathcal{H}\big)\big).
\]
\end{enumerate}
\end{lem}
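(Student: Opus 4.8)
The plan is to work throughout with the \emph{bounded} normalized returns $U$ rather than $X$, so that no integrability issue ever arises: by Lemma~\ref{Lemma 1} the relevant differences are unchanged, and by~(\ref{U is bounded}) we have $\log\langle b,U\rangle\le\log m$ and $\mathbb{E}\big(\log\langle\hat{b},U\rangle\,\big|\,\mathcal{G}\big)=0$, so every conditional maximum appearing below takes values in the finite interval $[0,\log m]$. Since $U=u(X)$ with $u$ measurable, a regular conditional distribution $Q_{\omega}(\cdot)$ of $U$ given a sub-$\sigma$-field is obtained by pushing forward the one assumed for $X$, and (the target space being Polish) it may be taken as a measurable kernel. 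The substantive step is assertion~(a): a measurable-selection argument, in the spirit of \citet{Algoet1988}, producing a $\mathcal{G}$-measurable maximizer. Assertion~(b) will then be immediate from~(a), since enlarging $\mathcal{G}$ to $\mathcal{H}$ only enlarges the family of measurable portfolios over which one optimizes.

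\smallskip
\noindent\textbf{Assertion (a): the pointwise identity.} Set $f(\omega,\beta)\coloneqq\int_{\mathcal{U}}\log\langle\beta,u\rangle\,Q_{\omega}(du)$ for $\beta$ in the (closed) simplex. For each fixed $\beta$ this is a version of $\mathbb{E}\big(\log\langle\beta,U\rangle\,\big|\,\mathcal{G}\big)$, $\mathcal{G}$-measurable and bounded above by $\log m$; for each fixed $\omega$, $\beta\mapsto f(\omega,\beta)$ is concave, finite (hence continuous) on the open simplex and upper semicontinuous on its closure (Fatou, using the upper bound $\log m$), so it attains its maximum there, the maximal value being $\ge f(\omega,\hat{b})=0$. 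I would first prove ``$\le$'': any $\mathcal{G}$-measurable portfolio $b$ satisfies $\mathbb{E}\big(\log\langle b,U\rangle\,\big|\,\mathcal{G}\big)(\omega)=f(\omega,b(\omega))\le\sup_{\beta}f(\omega,\beta)$, and restricting to a countable dense $D\subset\mathcal{B}^{m}$ and invoking the continuity of $\beta\mapsto\log\langle\beta,u\rangle$ gives $\sup_{\beta\in D}\mathbb{E}\big(\log\langle\beta,U\rangle\,\big|\,\mathcal{G}\big)=\sup_{\beta}f(\cdot,\beta)$ a.s., which is by definition $\max_{b\in\mathcal{B}^{m}}\mathbb{E}\big(\log\langle b,U\rangle\,\big|\,\mathcal{G}\big)$. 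For the reverse inequality I would invoke a measurable maximum theorem: $\omega\mapsto Q_{\omega}$ is a measurable kernel and $(\omega,\beta)\mapsto f(\omega,\beta)$ is jointly measurable and u.s.c.\ in $\beta$ on a compact domain, so the argmax correspondence has nonempty closed values and admits a $\mathcal{G}$-measurable selection $b^{*}$ (e.g.\ via Kuratowski--Ryll-Nardzewski); then $\mathbb{E}\big(\log\langle b^{*},U\rangle\,\big|\,\mathcal{G}\big)=\sup_{\beta}f(\cdot,\beta)=\max_{b\in\mathcal{B}^{m}}\mathbb{E}\big(\log\langle b,U\rangle\,\big|\,\mathcal{G}\big)$ a.s. Combining the two inequalities yields the stated equality, with this $\mathcal{G}$-measurable $b^{*}$ the announced optimal portfolio.

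\smallskip
\noindent\textbf{The ``consequently'' equality and assertion (b).} Taking $\mathbb{E}(\cdot)$ of the identity just obtained and using the tower property, $\mathbb{E}\big(\max_{b\in\mathcal{B}^{m}}\mathbb{E}(\log\langle b,U\rangle\,|\,\mathcal{G})\big)=\mathbb{E}\big(\log\langle b^{*},U\rangle\big)\le\max_{b\in\mathcal{G}}\mathbb{E}\big(\log\langle b,U\rangle\big)$, because $b^{*}$ is $\mathcal{G}$-measurable; conversely, any $\mathcal{G}$-measurable $b$ satisfies $\mathbb{E}\big(\log\langle b,U\rangle\big)=\mathbb{E}\big(\mathbb{E}(\log\langle b,U\rangle\,|\,\mathcal{G})\big)\le\mathbb{E}\big(\max_{b'\in\mathcal{B}^{m}}\mathbb{E}(\log\langle b',U\rangle\,|\,\mathcal{G})\big)$, and taking the supremum over such $b$ gives the equality (all quantities lying in $[0,\log m]$). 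For~(b), apply~(a) to $\mathcal{G}$ and to $\mathcal{H}$: the two sides of the desired inequality equal $\max_{b\in\mathcal{G}}\mathbb{E}(\log\langle b,U\rangle)$ and $\max_{b\in\mathcal{H}}\mathbb{E}(\log\langle b,U\rangle)$ respectively, and since $\mathcal{G}\subseteq\mathcal{H}$ the first supremum ranges over a subset of the second. (Equivalently and more directly: the $\mathcal{G}$-measurable optimizer $b^{*}$ is also $\mathcal{H}$-measurable, so $\mathbb{E}\big(\max_{b}\mathbb{E}(\log\langle b,U\rangle\,|\,\mathcal{H})\big)\ge\mathbb{E}\big(\mathbb{E}(\log\langle b^{*},U\rangle\,|\,\mathcal{H})\big)=\mathbb{E}\big(\log\langle b^{*},U\rangle\big)=\mathbb{E}\big(\max_{b}\mathbb{E}(\log\langle b,U\rangle\,|\,\mathcal{G})\big)$.)

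\smallskip
\noindent\textbf{Expected main obstacle.} Everything outside the selection step is routine bookkeeping with the tower property and with the monotonicity of a supremum under enlargement of the feasible set. The points I expect to need care with are: (i) that the regular conditional distribution can genuinely be realized as a measurable kernel $\omega\mapsto Q_{\omega}$ and that $(\omega,\beta)\mapsto f(\omega,\beta)$ is jointly measurable with the upper-semicontinuous dependence on $\beta$ required to run the measurable maximum theorem and extract the $\mathcal{G}$-measurable selection; and (ii) that the ``$\max$'' in the statement, understood as the essential supremum of the uncountable family $\{\mathbb{E}(\log\langle\beta,U\rangle\,|\,\mathcal{G})\}_{\beta}$, really coincides almost surely with the pointwise maximum $\sup_{\beta}f(\omega,\beta)$ of the regular-conditional-distribution integrals; this reduction goes through a countable dense set of $\beta$'s using the continuity of $\beta\mapsto\log\langle\beta,u\rangle$, and one must track the boundary of the simplex, where this map can equal $-\infty$ but the maximizer provably does not (its value being $\ge 0$).
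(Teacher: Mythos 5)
Your proposal is correct and follows essentially the same route as the paper: a $\mathcal{G}$-measurable maximizer obtained by measurable selection, the tower property for the ``consequently'' equality, and, for assertion (b), the observation that the $\mathcal{G}$-measurable optimizer is also $\mathcal{H}$-measurable. The only difference is that the paper delegates the selection step to Theorem 2 of Algoet and Cover (1988), whereas you re-derive it (countable dense subset plus a measurable maximum/Kuratowski--Ryll-Nardzewski argument); both are valid.
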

\begin{proof}
Assertion (a) follows directly from Theorem 2 in \citet{Algoet1988},
which guarantees the existence of a measurably selected optimal portfolio,
depending on $Q^{P}$, that maximizes the expectation $\mathbb{E}\big(\log\big<b,U\big>\big)$
among all portfolios in $\mathcal{B}^{m}$. Here, $U\sim Q^{P}$,
and $Q^{P}$ is the transformation of the distribution $P$ of the
variable $X$. Since the conditional distribution given the $\sigma$-field
$\mathcal{G}$ is $\mathcal{G}$-measurable, there exists an optimal
portfolio, denoted as $b_{\mathcal{G}}^{*}\in\mathcal{G}$, as a maximizer
for $\max_{b\in\mathcal{B}^{m}}\mathbb{E}\big(\log\big<b,U\big>|\mathcal{G}\big)$.
Then, taking the expectation on both sides of the equality $\max_{b\in\mathcal{B}^{m}}\mathbb{E}\big(\log\big<b,U\big>|\mathcal{G}\big)=\max_{b\in\mathcal{G}}\mathbb{E}\big(\log\big<b,U\big>|\mathcal{G}\big)$,
we obtain the following:
\begin{align*}
\mathbb{E}\big(\max_{b\in\mathcal{B}^{m}}\mathbb{E}\big(\log\big<b,U\big>|\mathcal{G}\big)\big) & =\mathbb{E}\big(\mathbb{E}\big(\log\big<b_{\mathcal{G}}^{*},U\big>|\mathcal{G}\big)\big)\\
 & =\max_{b\in\mathcal{G}}\mathbb{E}\big(\log\big<b,U\big>\big),
\end{align*}
due to the law of total expectation applied to the random variable
$\log\big<b_{\mathcal{G}}^{*},U\big>$.\smallskip

Now, consider any two sub-$\sigma$-fields $\mathcal{G}\subseteq\mathcal{H}\subseteq\mathbb{F}$.
There exist two optimal portfolios, denoted as $b_{\mathcal{G}}^{*}\in\mathcal{G}$
and $b_{\mathcal{H}}^{*}\in\mathcal{H}$, for $\max_{b\in\mathcal{B}^{m}}\mathbb{E}\big(\log\big<b,U\big>|\mathcal{G}\big)$
and $\max_{b\in\mathcal{B}^{m}}\mathbb{E}\big(\log\big<b,U\big>|\mathcal{H}\big)$,
respectively. In addition, since $\mathcal{G}\subseteq\mathcal{H}$
implies $b_{\mathcal{G}}^{*}\in\mathcal{H}$, we have:
\begin{align*}
\mathbb{E}\big(\max_{b\in\mathcal{B}^{m}}\mathbb{E}\big(\log\big<b,U\big>|\mathcal{G}\big)\big) & =\mathbb{E}\big(\log\big<b_{\mathcal{G}}^{*},U\big>\big)\\
 & \leq\max_{b\in\mathcal{H}}\mathbb{E}\big(\log\big<b,U\big>\big)=\mathbb{E}\big(\max_{b\in\mathcal{B}^{m}}\mathbb{E}\big(\log\big<b,U\big>|\mathcal{H}\big)\big).
\end{align*}
This demonstrates assertion (b) and completes the proof.
\end{proof}
It is worth stressing again here that we are considering all possible
strategies $\big(b_{n}\big)$, where each causal portfolio $b_{n}$
in the sequence is selected based solely on past market information
up to time $n$. These strategies obviously include all the log-optimal
strategies. Hence, for any time horizon $m$, a log-optimal strategy
maximizes the expected logarithm of cumulative wealth among all strategies
in terms of normalized asset returns, as shown by the following chain
rule:
\begin{align}
\max_{\left(b_{n}\right)}\mathbb{E}\big(\sum_{i=1}^{m}\log\big<b_{i},U_{i}\big>\big) & =\sum_{i=1}^{m}\max_{b_{i}}\mathbb{E}\big(\log\big<b_{i},U_{i}\big>\big)\nonumber \\
 & =\sum_{i=1}^{m}\mathbb{E}\big(\max_{b\in\mathcal{B}^{m}}\mathbb{E}\big(\log\big<b,U_{i}\big>|\mathcal{F}_{i}^{X}\big)\big),\text{ }\forall m.\label{chain rule}
\end{align}
It should be emphasized that the optimality property described above
is general, in that it does not require specific functional characteristics
or measurability for the log-optimal strategy or any other strategy.
Consequently, any other strategy that causally uses only a smaller
subset of past information as a sub-$\sigma$-field of $\mathcal{F}_{n}^{X}$
at all times $n$, might be inferior in terms of the expected logarithm
of cumulative wealth due to $\mathbb{E}\big(\max_{b\in\mathcal{B}^{m}}\mathbb{E}\big(\log\big<b,U_{n}\big>|\mathcal{G}\big)\big)\leq\mathbb{E}\big(\max_{b\in\mathcal{B}^{m}}\mathbb{E}\big(\log\big<b,U_{n}\big>|\mathcal{F}_{n}^{X}\big)\big)$
for any $\sigma$-field $\mathcal{G}\subseteq\mathcal{F}_{n}^{X}$,
as shown in Lemma \ref{Lemma 3}. Moreover, the corresponding expected
growth rate of a strategy, in terms of normalized asset returns, is
also maximized by the log-optimal strategies for any time horizon.
This expected growth rate will almost surely converge to a finite
limit as time progresses, provided the market is stationary, as demonstrated
later in Corollary \ref{Corollary 1}.

\subsection{Limit behavior of the log-optimal strategy}

Since the log-optimal portfolio is derived using the conditional distributions
based on past observations over time, it is essential to investigate
its limit behavior in terms of the distributions, as described in
Lemma \ref{Lemma 4}. Generally, the expected logarithmic returns
of the portfolio, in terms of normalized asset returns, are continuous
in the space of all distributions equipped with the weak topology.
This implies that, given a sequence of converging distributions, any
accumulation of the corresponding log-optimal portfolios must also
be a log-optimal portfolio for the limiting distribution. Hence, as
the regular conditional distribution, given an increasing information
field, is a converging sequence, any choice of log-optimal strategy
ensures that the portfolio's returns will almost surely converge to
those of the limiting portfolio under the limiting distribution.\smallskip

Moreover, another important consequence of the lemma is that the normalized
return of the log-optimal portfolios $\big<b_{n}^{*},U_{n}\big>$
is uniquely defined for almost all outcomes of $U_{n}$ under the
associated conditional distributions, regardless of the choice of
$b_{n}^{*}$. The log-optimal $b_{n}^{*}$ is uniquely defined only
if the associated regular conditional distribution $\mathbb{P}\big(U_{n}|\mathcal{F}_{n}^{X}\big)$
has full-dimensional support, which makes the corresponding expected
values $\mathbb{E}\big(\log\big<b,U_{n}\big>|\mathcal{F}_{n}^{X}\big)$
strictly concave in $b\in\mathcal{B}^{m}$. Consequently, any log-optimal
strategy $\big(b_{n}^{*}\big)$ must also converge to a limiting portfolio
almost surely if the conditional distribution $\mathbb{P}\big(U_{1}|\mathcal{F}_{\infty}^{X}\big)$
almost surely has full-dimensional support.
\begin{lem}
With $P$-a.a $X$ representing the extension notation for an event
occurring for almost all outcomes of $X$ according to a specific
distribution $P$ of $X$, we have:\label{Lemma 4}
\begin{enumerate}
\item Consider two portfolios $\bar{b}$ and $\tilde{b}$ maximizing $\mathbb{E}\big(\log\big<b,U\big>\big)$,
where $U\sim Q$. Then:
\[
\big<\bar{b},U\big>=\big<\tilde{b},U\big>\text{, \ensuremath{Q}-a.a }U.
\]
\item Let $b^{*}\big(Q\big)$ denote a portfolio maximizing the expected
portfolio's logarithmic return in terms of normalized assets' returns
according to the distribution $Q$ as $\mathbb{E}\big(\log\big<b^{*}\big(Q\big),U\big>\big)=\max_{b\in\mathcal{B}^{m}}\mathbb{E}\big(\log\big<b,U\big>\big)$
where $U\sim Q$. Consider any sequence of distributions $\big\{ Q_{n}\big\}_{n=1}^{\infty}$
converging weakly to a limiting distribution $Q_{\infty}$ as $n\to\infty$,
then:
\[
\begin{cases}
\lim_{n\to\infty}\big<b^{*}\big(Q_{n}\big),U\big>=\big<b^{*}\big(Q_{\infty}\big),U\big>\text{, \ensuremath{Q_{\infty}}-a.a }U\\
\lim_{n\to\infty}\mathbb{E}\big(\log\big<b^{*}\big(Q_{n}\big),U\big>\big)=\mathbb{E}\big(\log\big<b^{*}\big(Q_{\infty}\big),U\big>\big)
\end{cases}
\]
\item Consider an assets' returns variable $X$ defined on $\big(\Omega,\mathbb{F},\mathbb{\mathbb{P}}\big)$
and admitting a regular conditional distribution given the $\sigma$-fields
$\mathcal{G}_{n}\subseteq\mathbb{F}$ for all $n$ and $\mathcal{G}_{\infty}\coloneqq\sigma\big(\cup_{n=1}^{\infty}\mathcal{G}_{n}\big)\subseteq\mathbb{F}$,
then the following equality holds for the normalized assets' returns
$U$ of $X$:
\[
\lim_{n\to\infty}\mathbb{P}\big(X|\mathcal{G}_{n}\big)=\mathbb{P}\big(X|\mathcal{G}_{\infty}\big)\text{, weakly a.s, implying }\lim_{n\to\infty}\big<\bar{b}_{n},U\big>=\big<\bar{b}_{\infty},U\big>,\text{ a.s},
\]
where the portfolios $\bar{b}_{n}$ are maximizers for $\max_{b\in\mathcal{B}^{m}}\mathbb{E}\big(\log\big<b,U\big>|\mathcal{G}_{n}\big)$
with $n\in\mathbb{N}_{+}\cup\left\{ \infty\right\} $.
\end{enumerate}
\end{lem}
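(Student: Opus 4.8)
The plan is to establish the three assertions sequentially, with assertion (a) following from strict concavity along a segment, assertion (b) from a compactness-plus-continuity argument in the weak topology on distributions supported in the bounded set $\mathcal{U}$, and assertion (c) from the martingale convergence of regular conditional distributions combined with (b).

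For assertion (a), I would argue by contradiction using strict concavity of the logarithm. Suppose $\bar b$ and $\tilde b$ both maximize $f(b) \coloneqq \mathbb{E}\big(\log\langle b,U\rangle\big)$ but $\langle \bar b,U\rangle \neq \langle \tilde b,U\rangle$ on a set of positive $Q$-measure. Consider the midpoint portfolio $b_\lambda \coloneqq \tfrac12\bar b + \tfrac12\tilde b \in \mathcal{B}^m$. Since $\log$ is strictly concave, $\log\langle b_\lambda,U\rangle \geq \tfrac12\log\langle\bar b,U\rangle + \tfrac12\log\langle\tilde b,U\rangle$ pointwise, with strict inequality exactly where $\langle\bar b,U\rangle\neq\langle\tilde b,U\rangle$. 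Taking expectations — all of which are finite and well-defined by Lemma~\ref{Lemma 1}(b), since $0 < U_i \leq m$ forces $\log\langle b,U\rangle \leq \log m$ and the lower tail is integrable — gives $f(b_\lambda) > \tfrac12 f(\bar b) + \tfrac12 f(\tilde b) = \max_{b}f(b)$, a contradiction. The only subtlety is ruling out $\langle b,U\rangle = 0$; but boundedness and positivity of $U$ together with $b\in\mathcal{B}^m$ keep $\langle b,U\rangle$ bounded away from $0$ only in expectation of the log, which is all that is needed.

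For assertion (b), I would use the fact that all the distributions $Q_n,Q_\infty$ are supported in the compact set $\overline{\mathcal{U}} = \{u \in \mathbb{R}_+^m : \langle\hat b,u\rangle = 1,\ 0\le u_i\le m\}$, so the map $b\mapsto \mathbb{E}_{Q}\big(\log\langle b,U\rangle\big)$ is bounded above by $\log m$; the issue is continuity in $Q$ under weak convergence because $u\mapsto\log\langle b,u\rangle$ is unbounded below near the boundary. This is precisely the point where I expect the main obstacle, and it is the technical heart of the lemma: one needs a uniform-integrability argument (as in \citet{Algoet1988}) showing that $\mathbb{E}_{Q_n}\big((\log\langle b,U\rangle)^-\big)$ cannot escape to $+\infty$, e.g.\ by bounding $\langle b,u\rangle \ge b_j u_j$ and controlling $\mathbb{E}_{Q_n}(|\log u_j|)$ uniformly, or by an argument that the maximizing portfolios $b^*(Q_n)$ stay in the interior uniformly. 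Granting joint continuity of $(b,Q)\mapsto \mathbb{E}_Q(\log\langle b,U\rangle)$, I would extract a subsequence of $b^*(Q_n)$ converging (by compactness of $\mathcal{B}^m$ — or its closure) to some $b_\star$; passing to the limit in the defining inequality $\mathbb{E}_{Q_n}(\log\langle b^*(Q_n),U\rangle) \ge \mathbb{E}_{Q_n}(\log\langle b,U\rangle)$ for all $b$ shows $b_\star$ maximizes under $Q_\infty$, hence by assertion (a) $\langle b_\star,U\rangle = \langle b^*(Q_\infty),U\rangle$ $Q_\infty$-a.a.\ $U$; since every subsequential limit shares this property, the full sequence $\langle b^*(Q_n),U\rangle$ converges, and the convergence of expected values follows from the continuity statement.

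For assertion (c), I would invoke the standard martingale-type convergence theorem for regular conditional distributions: since $\mathcal{G}_n \uparrow \mathcal{G}_\infty$, the conditional laws $\mathbb{P}(X|\mathcal{G}_n)$ converge weakly to $\mathbb{P}(X|\mathcal{G}_\infty)$ almost surely (this is the Lévy-type result on conditional distributions in Polish spaces). Applying the continuous transformation $u(\cdot)$ transports this to weak a.s.\ convergence of $\mathbb{P}(U|\mathcal{G}_n)$ to $\mathbb{P}(U|\mathcal{G}_\infty)$. Then, on the almost sure event where this weak convergence holds, I would apply assertion (b) pointwise in $\omega$ with $Q_n = \mathbb{P}(U|\mathcal{G}_n)(\omega)$ and $Q_\infty = \mathbb{P}(U|\mathcal{G}_\infty)(\omega)$, identifying $\bar b_n$ with $b^*(Q_n)$, to conclude $\lim_{n}\langle \bar b_n,U\rangle = \langle\bar b_\infty,U\rangle$ almost surely. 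The measurable-selection of the maximizers $\bar b_n$ is guaranteed by Lemma~\ref{Lemma 3}(a) (via Theorem~2 of \citet{Algoet1988}), so the expressions $\langle\bar b_n,U\rangle$ are genuine random variables and the statement is well-posed.
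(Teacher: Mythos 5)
Your proposal is correct and, for assertions (b) and (c), follows essentially the same route as the paper: both extract accumulation points of the maximizers $b^{*}\big(Q_{n}\big)$ in the compact simplex, show any such limit maximizes under $Q_{\infty}$, use (a) to conclude the limiting scalar product $\big<b^{*}\big(Q_{\infty}\big),U\big>$ is unique $Q_{\infty}$-a.a.\ $U$, and both defer the delicate continuity of $Q\mapsto\max_{b}\mathbb{E}^{Q}\big(\log\big<b,U\big>\big)$ (the unboundedness of $\log$ near the boundary of $\mathcal{U}$) to Theorem 2 of \citet{Algoet1988}; likewise, both obtain (c) from L\'evy's martingale convergence for the regular conditional distributions, pushed forward through the continuous normalization map, and then apply (b) $\omega$-wise with the measurable selection of Lemma \ref{Lemma 3}. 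The genuine difference is in assertion (a): you argue via strict concavity of $\log$ and the midpoint portfolio $\tfrac12\bar{b}+\tfrac12\tilde{b}$, whereas the paper instead invokes the Kuhn--Tucker ratio condition $\mathbb{E}\big(\big<\bar{b},U\big>\big/\big<\tilde{b},U\big>\big)=1$ from Lemma \ref{Lemma 2} and then forces equality in Jensen's inequality. Your argument is more elementary and self-contained, needing only that the maximal expected value is finite (guaranteed by Lemma \ref{Lemma 1}, since $\log\big<b,U\big>\leq\log m$ and the maximum is $\geq0$), while the paper's version reuses machinery already established for the log-optimal portfolios; both are valid. Two minor points at the same level of rigor as the paper, so not gaps: the passage from ``every subsequential limit gives the same scalar product a.s.''\ to a single exceptional null set is implicitly handled (e.g.\ via a countable dense set of maximizers and linearity in $b$), and your closing remark in (a) about ``ruling out $\big<b,U\big>=0$'' is moot since $U\in\mathbb{R}_{++}^{m}$ and $b\in\mathcal{B}^{m}$ make $\big<b,U\big>$ strictly positive pointwise.
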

\begin{proof}
To prove assertion (a), we utilize the Kuhn-Tucker condition for the
log-optimal portfolio. Specifically, since $\mathbb{E}\big(\big<\bar{b},U\big>\big/\big<\tilde{b},U\big>\big)=1$
where $U\sim Q$ by Lemma \ref{Lemma 2}, we have:
\[
0=\mathbb{E}\Big(\log\frac{\big<\bar{b},U\big>}{\big<\tilde{b},U\big>}\Big)\leq\log\mathbb{E}\Big(\frac{\big<\bar{b},U\big>}{\big<\tilde{b},U\big>}\Big)=0,
\]
due to Jensen's inequality, where the equality holds if and only if
$\big<\bar{b},U\big>/\big<\tilde{b},U\big>=1$ for $Q$-almost all
$U$, as required for the assertion.\smallskip

Regarding assertion (b), we note that the bottom equality in the statement
is directly justified by the continuity of the function $\max_{b\in\mathcal{B}^{m}}\mathbb{E}\big(\log\big<b,U\big>\big)$
in the distribution $Q$ of $U$, as demonstrated Theorem 2 of \citet{Algoet1988}.
Meanwhile, the upper equality in the statement is only a comment on
Theorem 3 in this paper, but without a clear argument. Thus, we provide
a clarified deduction. Specifically, suppose there exists a unique
maximizer $b^{*}\big(Q_{\infty}\big)$ for the problem $\max_{b\in\mathcal{B}^{m}}\mathbb{E}\big(\log\big<b,U\big>\big)$
where $U\sim Q_{\infty}$, then any sequence $\left\{ b^{*}\big(Q_{n}\big)\right\} _{n=1}^{\infty}$
must converge to $b^{*}\big(Q_{\infty}\big)$, and so $\left\{ \big<b^{*}\big(Q_{n}\big),U\big>\right\} _{n=1}^{\infty}$
must converge to $\big<b^{*}\big(Q_{\infty}\big),U\big>$ for almost
all outcomes of $U$. Meanwhile, in the case where there are many
maximizers for $\max_{b\in\mathcal{B}^{m}}\mathbb{E}\big(\log\big<b,U\big>\big)$,
the sequence $\left\{ b^{*}\big(Q_{n}\big)\right\} _{n=1}^{\infty}$
may not converge, but all its accumulation points must be among these
maximizers. Furthermore, since $\big<b^{*}\big(Q_{\infty}\big),U\big>$
must be unique for any choice of $b^{*}\big(Q_{\infty}\big)$, as
guaranteed by assertion (a) for $Q$-almost all $U$, the sequence
$\left\{ \big<b^{*}\big(Q_{n}\big),U\big>\right\} _{n=1}^{\infty}$
must converge to $\big<b^{*}\big(Q_{\infty}\big),U\big>$ for $Q$-almost
all $U$, as needed for the statement.\smallskip

The proof for assertion (c) follows from Theorem 4 in the aforementioned
paper, but with a clarified deduction. Specifically, since $\mathbb{P}\big(X|\mathcal{G}_{n}\big)\to\mathbb{P}\big(X|\mathcal{G}_{\infty}\big)$
weakly almost surely by Levy's martingale convergence theorem, we
have $\big<\bar{b}_{n},U\big>\to\big<\bar{b}_{\infty},U\big>$ for
$\mathbb{P}\big(U|\mathcal{G}_{\infty}\big)$-almost all $U$, almost
surely by assertion (b). Therefore, we conclude that $\big<\bar{b}_{n},U\big>\to\big<\bar{b}_{\infty},U\big>$
almost surely, completing the proof.
\end{proof}

\section{Limit and equality theorems of growth rate in stationary market}

Since information is gradually and continuously revealed, the amount
of past data used for decision-making, including the unobservable
events, increases to infinitely large amounts. This probably misleads
us to think that a dynamic log-optimal strategy, which varies depending
on increasing information fields and is optimal in both the sense
of asymptotic growth rate and expected cumulative wealth and growth
rate (as discussed in the previous sections), should also easily outperform
any constant strategy, which is invariant under changing market information.
In fact, as we mentioned in the introduction as our motivation, the
status quo of the dominating opinion in the relevant contemporary
literature reflects this evaluation, following the well-known theories
established in \citet{Algoet1988,Algoet1992,Cover1991,Cover1996,Cover1996a,Cover2006}.
However, the results established in Theorem \ref{Theorem 1}, Theorem
\ref{Theorem 3} and later Corollary \ref{Corollary 1} demonstrate
surprising findings that contradict this popular unsettled conjecture.\smallskip

Before investigating the evolutionary behavior of the log-optimal
strategy, let us remark on some properties of stationary processes
and introduce a new notation. Given the operator $T$ representing
the metrically transitive and invertible measure-preserving transformation
of the underlying probability space $\text{\ensuremath{\big(\Omega},\ensuremath{\mathbb{F}},\ensuremath{\mathbb{\mathbb{P}}}}\big)$,
let $\mathcal{I}\subset\mathbb{F}$ denote the \emph{invariant} sub-$\sigma$-field
henceforth as:
\[
\mathcal{I}\coloneqq\big\{ F\in\mathbb{F}:\,T^{-1}F=F\big\}.
\]
A stationary pairwise process $\big\{ X_{n},Y_{n}\big\}_{n=1}^{\infty}$
is defined as $\big(X_{n},Y_{n}\big)\big(\omega\big)=\big(X_{1},Y_{1}\big)\big(T^{(n-1)}\omega\big)$
for all $n$. This process could be extended infinitely to the left
as a two-sided process with the origin at $n=1$ by Kolmogorov's extension
theorem. Under this condition, the invariant $\sigma$-field $\mathcal{I}$
is contained in the completion of the $\sigma$-field $\mathcal{F}_{\infty}$
of infinite information (and thus also in the completion of the $\sigma$-field
$\mathcal{F}_{\infty}^{X}$), which implies that no finite amount
of information, but only the infinite one, suffices to exactly determine
the random ergodic mode of the process $\big\{ X_{n},Y_{n}\big\}_{n=1}^{\infty}$.
Fortunately, the true mode among a mixture of stationary ergodic ones
could be gradually identified by the log-optimal strategy, which depends
on the continuously increasing information fields. For convenience,
let us introduce the reduced case of a \emph{discrete market} in Definition
\ref{Def-of-discrete-market} and consider it the implicit default
market henceforth, unless otherwise specified.
\begin{defn}
A discrete market is defined as a pairwise process $\big\{ X_{n},Y_{n}\big\}_{n=1}^{\infty}$
in which the assets' returns variables $X_{n}$ take values in a countable
subset $\mathbb{X}\subset\mathbb{R}_{++}^{m}$, for all $n$. In other
words, they follow discrete (purely atomic) distributions with support
$\mathbb{X}$.\label{Def-of-discrete-market}\smallskip 
\end{defn}
\begin{rem*}
It is worth noting that the introduction of the discrete market in
Definition \ref{Def-of-discrete-market} is merely to simplify the
proof of Lemma \ref{Lemma 5}, and it does not affect the subsequent
theorems, which remain valid without this condition. However, since
the assets' returns in the discrete market may take an infinite number
of values, the loss of generality is minimal, particularly when considering
real-world markets. Additionally, the normalized assets' returns variable
$U$ is bounded within a small set as in (\ref{U is bounded}), implying
that a fine discretization can cover most, if not all, possible outcomes.
In the proof of Proposition \ref{Proposition 1}, which appears later,
an approach to bypass Lemma \ref{Lemma 5} is discussed, so the discrete
market condition is not required, though this involves a few more
auxiliary lemmas, making Theorem \ref{Theorem 1} less straightforward.
\end{rem*}

\subsection{Equilibrium state under infinitely increasing market information}

Lemma \ref{Lemma 4} provides critical tools to argue Theorem \ref{Theorem 1},
which serves as the basis to affirm that the periodic portfolios'
returns of the log-optimal strategy could not exceed those of all
constant strategies infinitely often as time progresses. This implies
that such a dynamically optimal strategy could outperform any constant
strategy, by utilizing the entirety of the increasing past information,
only over a finite time horizon; thereafter, this advantage of using
past information gradually degrades into an equilibrium state where
the difference between the portfolios' returns of the dynamically
optimal strategy and some constant strategies becomes negligible over
time. As the theorem states, the cause of this phenomenon is clearly
the stationarity of the market process, which directs the evolution
of the log-optimal portfolios toward some optimal ones with respect
to the limiting distribution, given the infinite past events as the
largest possible capacity of the information field. Roughly speaking,
although the cumulative wealth of the log-optimal strategy could be
significantly better, its growth rate will be asymptotically identical
to that of some constant strategies once it enters the equilibrium
state.\smallskip 
\begin{lem}
Assume the market process $\big\{ X_{n},Y_{n}\big\}_{n=1}^{\infty}$
is stationary. Then, the following convergence holds for any two generic
strategies $\big(\bar{b}_{n}\big)$ and $\big(\tilde{b}_{n}\big)$,
such that $\big(\log\big<\bar{b}_{n},U_{k}\big>-\log\big<\tilde{b}_{n},U_{k}\big>\big)\to0$
almost surely for all normalized assets' returns $U_{k}$ with $k\geq1$:\label{Lemma 5}
\[
\lim_{n\to\infty}\big(\log\big<\bar{b}_{n},U_{n}\big>-\log\big<\tilde{b}_{n},U_{n}\big>\big)=0,\text{ a.s}.
\]
\end{lem}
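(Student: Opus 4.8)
The plan is to exploit stationarity to transfer the "pointwise in the coordinate $k$" hypothesis into the "diagonal" conclusion where the coordinate equals the time index $n$. The key observation is that the strategies $\big(\bar b_n\big)$ and $\big(\tilde b_n\big)$ are each built from increasing information fields, so by Lemma~\ref{Lemma 4}(c) the maximizers stabilize: $\big<\bar b_n,U_k\big>\to\big<\bar b_\infty,U_k\big>$ and $\big<\tilde b_n,U_k\big>\to\big<\tilde b_\infty,U_k\big>$ almost surely for each fixed $k$, where $\bar b_\infty,\tilde b_\infty$ are maximizers with respect to the limiting conditional distribution given $\mathcal{F}_\infty^X$. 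The hypothesis then says precisely that $\log\big<\bar b_\infty,U_k\big>=\log\big<\tilde b_\infty,U_k\big>$ almost surely for every $k\geq1$. First I would set $D_n\coloneqq\log\big<\bar b_n,U_n\big>-\log\big<\tilde b_n,U_n\big>$ and rewrite it, for any fixed truncation level $N$, as $D_n = \big(\log\big<\bar b_n,U_n\big>-\log\big<\bar b_N,U_n\big>\big)-\big(\log\big<\tilde b_n,U_n\big>-\log\big<\tilde b_N,U_n\big>\big)+\big(\log\big<\bar b_N,U_n\big>-\log\big<\tilde b_N,U_n\big>\big)$, so that the last bracket is a fixed measurable function of $U_n$ whose value, by stationarity, has the same distribution as its value at $U_N$ — and that value is $0$ almost surely by the hypothesis combined with the stabilization above.

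The heart of the argument is controlling the first two brackets uniformly over $n\geq N$. Here the boundedness of the normalized returns is essential: by (\ref{U is bounded}) we have $0<U_{n,j}\leq m$, and since $\bar b_N$ is a log-optimal maximizer the return $\big<\bar b_N,U_n\big>$ is bounded away from $0$ in expectation-controlled fashion; more carefully, using the supermartingale/Kuhn-Tucker machinery of Lemma~\ref{Lemma 2}(a) applied to the strategies $(\bar b_n)$ versus the "frozen" strategy that plays $\bar b_N$ at all times $n\geq N$, the ratios $\big<\bar b_n,U_n\big>/\big<\bar b_N,U_n\big>$ form (the increments of) a nonnegative supermartingale, so a Borel--Cantelli argument identical to the one in the proof of Lemma~\ref{Lemma 2}(b) gives $\limsup_n\big(\log\big<\bar b_n,U_n\big>-\log\big<\bar b_N,U_n\big>\big)\leq0$ almost surely, and symmetrically for the reversed comparison and for $\tilde b$. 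Since $\bar b_N$ itself converges to $\bar b_\infty$ as $N\to\infty$ (again Lemma~\ref{Lemma 4}(c)), letting $N\to\infty$ after $n\to\infty$ squeezes the first two brackets to $0$. Combining, $\limsup_n|D_n|\leq 0$ almost surely, which is the claim.

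The main obstacle I anticipate is the interchange of the two limits $n\to\infty$ and $N\to\infty$: the diagonal nature of the conclusion means one cannot simply fix $N$, take $n\to\infty$ using the hypothesis at level $N$, and be done — one needs uniformity in $n$ of the approximation of $\bar b_n$ by $\bar b_N$ along the \emph{moving} argument $U_n$, not a fixed one. This is exactly where stationarity earns its keep: it lets one replace statements about $U_n$ with statements about $U_N$ (or $U_1$) in distribution, and the discrete-market assumption of Definition~\ref{Def-of-discrete-market} makes the "for almost all $U$" qualifiers in Lemma~\ref{Lemma 4} transfer cleanly to "for the random outcome $U_n$" by a union bound over the countable support $\mathbb{X}$. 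A secondary technical point is verifying that the frozen strategy "play $\bar b_N$ for all $n\geq N$" is admissible (it is $\mathcal{F}_n^X$-measurable since $\bar b_N\in\mathcal{F}_N^X\subseteq\mathcal{F}_n^X$) and that the Borel--Cantelli estimate survives the conditioning on the behavior up to time $N$; both are routine once the supermartingale is set up, paralleling the proof of Lemma~\ref{Lemma 2}.
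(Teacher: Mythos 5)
There is a genuine gap, and in fact two of the three pillars of your argument do not hold up. First, the lemma is stated for \emph{generic} strategies $\big(\bar{b}_{n}\big)$, $\big(\tilde{b}_{n}\big)$ whose only assumed property is the fixed-$k$ convergence $\big(\log\langle\bar{b}_{n},U_{k}\rangle-\log\langle\tilde{b}_{n},U_{k}\rangle\big)\to0$; your proof immediately assumes both are maximizers adapted to increasing information fields, so that the stabilization of Lemma \ref{Lemma 4}(c) and the Kuhn--Tucker/supermartingale machinery of Lemma \ref{Lemma 2} apply. Neither assumption is available, so at best you treat a special case. Even granting log-optimality, the quantitative step fails twice: (i) the Borel--Cantelli argument of Lemma \ref{Lemma 2}(b) controls $\frac{1}{n}\log\big(S_{n}(\cdot)/S_{n}(\cdot)\big)$, i.e.\ a Ces\`aro average of log-ratio increments, and gives no bound on the single-period quantity $\log\langle\bar{b}_{n},U_{n}\rangle-\log\langle\bar{b}_{N},U_{n}\rangle$ that your squeeze requires; (ii) the Kuhn--Tucker condition only yields $\mathbb{E}\big(\langle b,U_{n}\rangle/\langle b_{n}^{*},U_{n}\rangle\,|\,\mathcal{F}_{n}^{X}\big)\leq1$ for a competitor measured against the optimizer, so the ``reversed comparison'' you invoke symmetrically has no supermartingale behind it, and the two-sided control needed for $\limsup_{n}|D_{n}|\leq0$ is never obtained.

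The stationarity transfer for the third bracket is also incorrect: $\big(\bar{b}_{N},U_{n}\big)$ and $\big(\bar{b}_{N},U_{N}\big)$ are \emph{not} equal in distribution under stationarity (one must shift the portfolio together with the return), equality in distribution would not deliver an almost-sure statement anyway, and at finite $N$ the hypothesis does not make $\log\langle\bar{b}_{N},U_{N}\rangle-\log\langle\tilde{b}_{N},U_{N}\rangle$ vanish --- it only concerns the limit in $n$ at fixed $k$. The mechanism that actually proves the lemma is the one you relegate to a closing remark: in the discrete market of Definition \ref{Def-of-discrete-market}, Birkhoff's ergodic theorem shows that almost every path eventually visits only values $x$ with $\mathbb{P}\big(X_{1}=x|\mathcal{I}\big)>0$, and each such value recurs infinitely often, hence occurs as some $U_{k}$; the fixed-$k$ hypothesis therefore forces $\log\langle\bar{b}_{n},u(x)\rangle-\log\langle\tilde{b}_{n},u(x)\rangle\to0$ at every value the diagonal $U_{n}$ can take for large $n$, from which the conclusion follows (the paper handles the uniformity over the recurrent values by contradiction). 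Carrying out your ``union bound over the countable support'' remark properly \emph{is} essentially that proof, and it renders the maximizer-stabilization and supermartingale apparatus unnecessary; as written, however, your argument does not establish the claim.
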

\begin{proof}
By stationarity, noting that any finite sample $\left\{ X_{1}\big(\omega\big),X_{2}\big(\omega\big),...,X_{n}\big(\omega\big)\right\} $
can be represented as $\left\{ X_{1}\big(\omega\big),X_{1}\big(T\omega\big)...,X_{1}\big(T^{(n-1)}\omega\big)\right\} $
for each $\omega\in\Omega$ and for all $n$, the ergodic mode of
the market process is determined by the invariant $\sigma$-field
$\mathcal{I}$. Hence, by applying Birkhoff's ergodic theorem, we
obtain the following limit (with the discrete market as in Definition
\ref{Def-of-discrete-market}):
\[
\lim_{n\to\infty}\frac{1}{n}\sum_{i=1}^{n}\mathbb{I}_{\left\{ X_{1}=x\right\} }=\mathbb{P}\big(X_{1}=x|\mathcal{I}\big)>0,\text{ a.s}.
\]
This implies that for each realization of the process $\big\{ X_{n}\big(\omega\big)\big\}_{n=1}^{\infty}$,
there exists a corresponding subset $\mathbb{X}^{\omega}\subset\mathbb{X}$
such that all outcomes $x\in\mathbb{X}^{\omega}$ with $\mathbb{P}\big(X_{1}=x|\mathcal{I}\big)\big(\omega\big)>0$
must occur infinitely often as $X_{n}\big(\omega\big)=x$ for $n\to\infty$,
while other $x\notin\mathbb{X}^{\omega}$ occur only finitely many
times. Therefore, the convergence $\big(\log\big<\bar{b}_{n}\big(\omega\big),u\big(x\big)\big>-\log\big<\tilde{b}_{n}\big(\omega\big),u\big(x\big)\big>\big)\to0$
holds for $\mathbb{P}\big(X_{1}|\mathcal{I}\big)\big(\omega\big)$-almost
all $x\in\mathbb{X}^{\omega}$ almost surely, due to the assumption
that $\big(\log\big<\bar{b}_{n},U_{k}\big>\big(\omega\big)-\log\big<\tilde{b}_{n},U_{k}\big>\big(\omega\big)\big)\to0$
almost surely for all $k\geq1$. This convergence implies that for
any $\epsilon>0$ and each $x\in\mathbb{X}^{\omega}$, we have:
\[
\exists N^{\omega,\epsilon}\big(x\big)\text{ such that }|\log\big<\bar{b}_{n}\big(\omega\big),u\big(x\big)\big>-\log\big<\tilde{b}_{n}\big(\omega\big),u\big(x\big)\big>|<\epsilon,\,\forall n\geq N^{\omega,\epsilon}\big(x\big).
\]

Then, the required convergence $\big(\log\big<\bar{b}_{n},U_{n}\big>\big(\omega\big)-\log\big<\tilde{b}_{n},U_{n}\big>\big(\omega\big)\big)\to0$
must hold. If not, then there exists $\epsilon^{\omega}>0$ such that
$|\log\big<\bar{b}_{t}\big(\omega\big),u\big(x\big)\big>-\log\big<\tilde{b}_{t}\big(\omega\big),u\big(x\big)\big>|\geq\epsilon^{\omega}$
for $t\geq K$ for some $x$ as $K\to\infty$, i.e., some $x\in\mathbb{X}^{\omega}$
occurs only finitely many times because some $N^{\omega,\epsilon^{\omega}}\big(x\big)$
are never visited, which contradicts the derived property above. Hence,
we conclude that, with probability one, the convergence $\big(\log\big<\bar{b}_{n},U_{n}\big>-\log\big<\tilde{b}_{n},U_{n}\big>\big)\to0$
holds, completing the proof.
\end{proof}
\begin{thm}
If the market process $\big\{ X_{n},Y_{n}\big\}_{n=1}^{\infty}$ is
stationary, there exists a random constant strategy, denoted as $\big(b^{\omega}\big)$,
such that the growth rate of any log-optimal strategy $\big(b_{n}^{*}\big)$
does not exceed that of the constant one infinitely often by any fixed
magnitude, as:\label{Theorem 1}
\[
\lim_{n\to\infty}\big(W_{n}\big(b^{\omega}\big)-W_{n}\big(b_{n}^{*}\big)\big)=0,\text{ a.s}.
\]
Moreover, such a random constant strategy $\big(b^{\omega}\big)$
can be chosen measurably.
\end{thm}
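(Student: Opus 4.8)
The plan is to exhibit the random constant strategy explicitly as a log-optimal portfolio for the limiting conditional distribution $\mathbb{P}\big(U_{1}|\mathcal{F}_{\infty}^{X}\big)$, and then use the convergence machinery of Lemma~\ref{Lemma 4} together with the stationarity argument of Lemma~\ref{Lemma 5} to pass from ``the difference of logarithmic returns vanishes on each fixed coordinate $U_k$'' to ``the difference of the time-averaged growth rates vanishes.'' First I would set, for each $\omega$, the portfolio $b^{\omega}$ to be a measurably selected maximizer of $\mathbb{E}\big(\log\big<b,U\big>\,\big|\,\mathcal{F}_{\infty}^{X}\big)$; such a selection exists by assertion~(a) of Lemma~\ref{Lemma 3} applied with $\mathcal{G}=\mathcal{F}_{\infty}^{X}$, and it is $\mathcal{F}_{\infty}^{X}$-measurable, hence a legitimate (random but time-invariant) constant strategy $\big(b^{\omega}\big)$. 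The key structural fact I would invoke is that the regular conditional distributions $\mathbb{P}\big(U_{n}|\mathcal{F}_{n}^{X}\big)$ converge weakly, almost surely, to $\mathbb{P}\big(U_{1}|\mathcal{F}_{\infty}^{X}\big)$ along the increasing filtration — by Lévy's martingale convergence theorem combined with stationarity, which identifies the conditional law of $U_n$ given $\mathcal{F}_n^X$ with a shift of the conditional law of $U_1$ given the infinite past. Then assertion~(c) of Lemma~\ref{Lemma 4} (in the form there, with $\mathcal{G}_n = \mathcal{F}_n^X$ and $\mathcal{G}_\infty = \mathcal{F}_\infty^X$) yields that the log-optimal portfolios $b_n^*$ satisfy $\big<b_n^*,U\big> \to \big<b^{\omega},U\big>$ almost surely, for $\mathbb{P}\big(U|\mathcal{F}_\infty^X\big)$-almost all $U$.

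Next I would combine this with Lemma~\ref{Lemma 5}. Taking $\bar b_n = b^{\omega}$ (constant in $n$) and $\tilde b_n = b_n^*$, the hypothesis of Lemma~\ref{Lemma 5} is precisely the almost-sure convergence $\big(\log\big<b^{\omega},U_k\big> - \log\big<b_n^*,U_k\big>\big) \to 0$ for each fixed $k \geq 1$ — and this follows from the previous paragraph once I note that the full-dimensional-support / uniqueness subtleties are handled by assertion~(a) of Lemma~\ref{Lemma 4}, which guarantees $\big<b^{\omega},U\big>$ is the same for any choice of maximizer. Lemma~\ref{Lemma 5} then delivers $\big(\log\big<b^{\omega},U_n\big> - \log\big<b_n^*,U_n\big>\big) \to 0$ almost surely, i.e.\ the \emph{per-period} log-return gap vanishes. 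A Cesàro-averaging step (the Cesàro mean of a null sequence is null) converts this into
\[
\lim_{n\to\infty}\Big(\dfrac{1}{n}\sum_{i=1}^{n}\log\big<b^{\omega},U_i\big> - \dfrac{1}{n}\sum_{i=1}^{n}\log\big<b_i^*,U_i\big>\Big) = 0,\text{ a.s.},
\]
and by assertion~(a) of Lemma~\ref{Lemma 1} the left-hand side equals $W_n\big(b^{\omega}\big) - W_n\big(b_n^*\big)$, which is exactly the claimed limit. The measurability clause is then immediate, since the maximizer $b^{\omega}$ was chosen measurably from the start via Lemma~\ref{Lemma 3}(a).

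The main obstacle I anticipate is justifying the almost-sure weak convergence $\mathbb{P}\big(U_{n}|\mathcal{F}_{n}^{X}\big) \to \mathbb{P}\big(U_{1}|\mathcal{F}_{\infty}^{X}\big)$ in the precise sense needed to feed Lemma~\ref{Lemma 4}(c): Lévy's theorem gives convergence of $\mathbb{P}\big(U_{1}|\mathcal{F}_{n}^{X}\big) \to \mathbb{P}\big(U_{1}|\mathcal{F}_{\infty}^{X}\big)$ for the \emph{fixed} variable $U_1$, but here the conditioned variable $U_n$ changes with $n$, so one must use the shift-invariance of the stationary two-sided process to re-express $\mathbb{P}\big(U_n|\mathcal{F}_n^X\big)$ as $\mathbb{P}\big(U_1|\mathcal{F}_1^X\big)\circ T^{n-1}$ and carry the martingale-convergence statement through the transformation — this is where the discrete-market reduction of Definition~\ref{Def-of-discrete-market} earns its keep, since it lets the argument in Lemma~\ref{Lemma 5} proceed via Birkhoff's theorem on indicator functions rather than requiring a uniform-in-$x$ control of the conditional laws. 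A secondary technical point is ensuring the exceptional null sets (one per coordinate $k$, plus the one from the martingale convergence) can be amalgamated into a single null set, which is routine since there are countably many.
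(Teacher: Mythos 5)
Your proposal is correct and follows essentially the same route as the paper's own proof: you choose $b^{\omega}$ as a measurably selected maximizer of $\mathbb{E}\big(\log\langle b,U_{1}\rangle|\mathcal{F}_{\infty}^{X}\big)$ via Lemma \ref{Lemma 3}, obtain the fixed-coordinate convergence $\log\langle b_{n}^{*},U_{k}\rangle\to\log\langle b^{\omega},U_{k}\rangle$ from L\'evy's martingale convergence combined with stationarity and Lemma \ref{Lemma 4}, pass to the diagonal via Lemma \ref{Lemma 5}, and finish with Ces\`aro averaging and Lemma \ref{Lemma 1}(a), exactly as the paper does. The ``main obstacle'' you flag (re-expressing the conditional law of $U_{n}$ given $\mathcal{F}_{n}^{X}$ through the shift) is precisely the point the paper handles with its stationarity identity $\mathbb{E}\big(\log\langle b,X_{n}\rangle|\mathcal{F}_{k}^{X}\big)=\mathbb{E}\big(\log\langle b,X_{1}\rangle|\mathcal{F}_{k}^{X}\big)$, so no substantive difference remains.
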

\begin{proof}
Since the random process $\big\{ X_{n},Y_{n}\big\}_{n=1}^{\infty}$
is stationary, it can be extended to a two-sided process. Thus, any
variable $\big(X_{n},Y_{n}\big)$, for all $n$, admits a regular
conditional distribution given any $\sigma$-field $\mathcal{F}_{n}$
or the infinite $\sigma$-field $\mathcal{F}_{\infty}$. This immediately
results in $\mathbb{E}\big(\log\big<b,X_{n}\big>|\mathcal{F}_{k}^{X}\big)=\mathbb{E}\big(\log\big<b,X_{1}\big>|\mathcal{F}_{k}^{X}\big)$
for any $n>1$ and $b\in\mathcal{B}^{m}$, given the same information
field $\mathcal{F}_{k}^{X}$ with $k\in\mathbb{N}_{+}\cup\left\{ \infty\right\} $.
Then, by Lemma \ref{Lemma 4}, it follows that $\mathbb{P}\big(X_{k}|\mathcal{F}_{n}^{X}\big)\to\mathbb{P}\big(X_{k}|\mathcal{F}_{\infty}^{X}\big)$
weakly almost surely as $n\to\infty$ for all $k\geq1$, and the limit
below holds for any log-optimal strategy $\big(b_{n}^{*}\big)$:
\[
\lim_{n\to\infty}\log\big<b_{n}^{*},U_{k}\big>=\log\big<b_{\infty}^{*},U_{k}\big>,\,\forall k\geq1\text{, a.s},
\]
with portfolio $b_{\infty}^{*}$ denoting an arbitrary maximizer for
$\max_{b\in\mathcal{B}^{m}}\mathbb{E}\big(\log\big<b,U_{1}\big>|\mathcal{F}_{\infty}^{X}\big)$,
and noting that $b_{\infty}^{*}$ is random. This convergence can
be represented as $\big(\log\big<b_{n}^{*},U_{k}\big>-\log\big<b_{\infty}^{*},U_{k}\big>\big)\to0$
for the log-optimal strategy $\big(b_{n}^{*}\big)$ and the random
constant strategy $\big(b_{\infty}^{*}\big)$, for all $k\geq1$.
Hence, we have almost surely that $\big(\log\big<b_{n}^{*},U_{n}\big>-\log\big<b_{\infty}^{*},U_{n}\big>\big)\to0$
by Lemma \ref{Lemma 5}, and the following is obtained:
\begin{align*}
\lim_{n\to\infty}\big(W_{n}\big(b_{\infty}^{*}\big)-W_{n}\big(b_{n}^{*}\big)\big) & =\lim_{n\to\infty}\frac{1}{n}\sum_{i=1}^{n}\big(\log\big<b_{\infty}^{*},U_{i}\big>-\log\big<b_{i}^{*},U_{i}\big>\big)\\
 & =\lim_{n\to\infty}\big(\log\big<b_{\infty}^{*},U_{n}\big>-\log\big<b_{n}^{*},U_{n}\big>\big)=0,\text{ a.s},
\end{align*}
due to Cesaro's mean theorem and Lemma \ref{Lemma 1}. As a consequence,
by simply setting a portfolio $b^{\omega}=b_{\infty}^{*}$, which
is random and can be measurably selected depending on the $\sigma$-field
$\mathcal{F}_{\infty}^{X}$ by Lemma \ref{Lemma 3}, we finish the
proof of the existence of the optimal random constant strategy $\big(b^{\omega}\big)$.
\end{proof}

\subsection{Optimal constant strategy unconditioned on side information}

It is known from Theorem \ref{Theorem 1} that there exist some random
constant strategies, derived depending on infinitely many data of
assets' returns and side information, that could asymptotically approach
the growth rate and portfolio returns of the log-optimal strategy
under a stationary market. However, one might argue a specific context
in which there exists a non-random, i.e., independent from the market\textquoteright s
possibilities, optimal constant strategy with such a property. In
addition, an interesting question naturally arises from the existence
of the random optimal constant strategy: whether the limiting optimal
growth rate could be simultaneously represented for both the log-optimal
strategy, which is dynamic, and the constant strategies. Theorem \ref{Theorem 2}
establishes such an equality for the limiting growth rate in terms
of normalized assets' returns for the two regarded strategies under
the stationary condition. Through this equality, the optimal strategy's
growth rate when infinitely increasing market side information is
used is equivalently measured by the maximal capacity of a constant
strategy without depending on periodic market observation. Moreover,
if the market is also ergodic, the existence of a non-random optimal
constant strategy could be secured using the mentioned equality. Before
the theorem statement, let us further introduce an optional condition
for the market in Definition \ref{Def of no-trash stocks}.\smallskip 
\begin{defn}
A market is said not to have \emph{trash assets} if $\max_{1\leq i\leq m}X_{n,i}/\min_{1\leq i\leq m}X_{n,i}\geq c\gg0$
for all $n$ almost surely. This does not necessarily bound the returns
of the assets, but instead excludes the trash assets from the considered
portfolios, which are defined as those that might potentially collapse
in value to nearly zero prices due to the bankruptcy of the firms.\label{Def of no-trash stocks}\smallskip 
\end{defn}
\begin{rem*}
It is important to highlight that the almost sure convergence of the
log-optimal strategy in Theorem \ref{Theorem 2} can be proven without
the assumption of no-trash assets, by simply using Birkhoff's ergodic
theorem, as in the original proof by \citet{Algoet1988}. The key
difference is that Theorem \ref{Theorem 2} does not assume finite
expected values, due to the transformation from assets' returns to
the normalized ones. The additional no-trash assets condition is primarily
used to ensure uniform integrability, which can be useful for certain
later applications, such as more conveniently establishing Corollary
\ref{Corollary 1} and Theorem \ref{Theorem 4}.
\end{rem*}
\begin{thm}
If the process $\big\{ X_{n},Y_{n}\big\}_{n=1}^{\infty}$ is stationary,
then there exists a random constant strategy $\big(b^{*}\big)$ that
grows asymptotically as fast as the log-optimal strategy $\big(b_{n}^{*}\big)$
as:\label{Theorem 2}
\[
\lim_{n\to\infty}\big(W_{n}\big(b^{*}\big)-W_{n}\big(b_{n}^{*}\big)\big)=0,\text{ a.s}.
\]
This follows from the limits holding in terms of normalized assets'
returns:
\begin{align*}
\lim_{n\to\infty}\dfrac{1}{n}\sum_{i=1}^{n}\log\big<b^{*},U_{i}\big>=\lim_{n\to\infty}\dfrac{1}{n}\sum_{i=1}^{n}\log\big<b_{i}^{*},U_{i}\big> & =\mathbb{E}\big(\log\big<b_{\infty}^{*},U_{1}\big>|\mathcal{I}\big)\\
 & =\max_{b\in\mathcal{B}^{m}}\mathbb{E}\big(\log\big<b,U_{1}\big>|\mathcal{I}\big),\text{ a.s},
\end{align*}
where the portfolio $b_{\infty}^{*}$ is a measurable maximizer for
$\max_{b\in\mathcal{B}^{m}}\mathbb{E}\big(\log\left\langle b,U_{1}\right\rangle |\mathcal{F}_{\infty}^{X}\big)$.
Moreover, if the market does not have trash assets, then the convergences
above are also in $L^{1}$; and if the market is also ergodic, then
the optimal constant strategy is non-random.
\end{thm}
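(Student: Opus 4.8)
The plan is to push everything through the normalized returns via Lemma~\ref{Lemma 1}(a), extract the random constant portfolio from Theorem~\ref{Theorem 1}, and identify the common growth rate by a two-sided sandwich: an over-informed comparison strategy bounding $W_n(b_n^{*})$ from above and the invariant ($\mathcal{I}$-measurable) maximizer bounding it from below, both time-averages being evaluated by Birkhoff's ergodic theorem. First I would extend $\{X_n,Y_n\}_{n=1}^{\infty}$ to a two-sided stationary process with shift $T$, so that $U_n=U_1\circ T^{n-1}$ with $0<U_{n,i}\le m$, and invoke Theorem~\ref{Theorem 1} together with the measurable-selection part of Lemma~\ref{Lemma 3}(a): this produces a measurably chosen maximizer $b_{\infty}^{*}$ of $\max_{b\in\mathcal{B}^{m}}\mathbb{E}(\log\langle b,U_1\rangle|\mathcal{F}_{\infty}^{X})$ with $\frac1n\sum_{i=1}^{n}(\log\langle b_{\infty}^{*},U_i\rangle-\log\langle b_i^{*},U_i\rangle)\to0$ a.s. I set $b^{*}:=b_{\infty}^{*}$, so by Lemma~\ref{Lemma 1}(a) it remains only to show $\frac1n\sum_{i=1}^{n}\log\langle b_i^{*},U_i\rangle$ converges a.s.\ and to pin down its limit.

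For the upper bound, let $q_n$ be the log-optimal portfolio for $X_n$ relative to the full two-sided past strictly preceding $X_n$; by stationarity this sequence is invariant-shifted, $\log\langle q_n,U_n\rangle=(\log\langle b_{\infty}^{*},U_1\rangle)\circ T^{n-1}$, and since $\log\langle b_{\infty}^{*},U_1\rangle\le\log m$ while $\mathbb{E}(\log\langle b_{\infty}^{*},U_1\rangle|\mathcal{I})\ge0$ by the tower property, it lies in $L^1$, so Birkhoff's ergodic theorem gives $\frac1n\sum_{i=1}^{n}\log\langle q_i,U_i\rangle\to\mathbb{E}(\log\langle b_{\infty}^{*},U_1\rangle|\mathcal{I})$ a.s. Because $q_n$ is conditioned on strictly more information than $b_n^{*}$, the Kuhn-Tucker/supermartingale argument of Lemma~\ref{Lemma 2}, applied to the larger filtration, gives $\limsup_n(W_n(b_n^{*})-W_n(q_n))\le0$, hence $\limsup_n\frac1n\sum_{i=1}^{n}\log\langle b_i^{*},U_i\rangle\le\mathbb{E}(\log\langle b_{\infty}^{*},U_1\rangle|\mathcal{I})$. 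For the lower bound, let $b_{\mathcal{I}}^{*}\in\mathcal{I}$ be the invariant maximizer of $\max_{b\in\mathcal{B}^{m}}\mathbb{E}(\log\langle b,U_1\rangle|\mathcal{I})$ supplied by Lemma~\ref{Lemma 3}(a); being measurable with respect to the infinite past it is a causally admissible constant strategy, so Lemma~\ref{Lemma 2}(b) gives $\limsup_n(W_n(b_{\mathcal{I}}^{*})-W_n(b_n^{*}))\le0$, while Birkhoff gives $\frac1n\sum_{i=1}^{n}\log\langle b_{\mathcal{I}}^{*},U_i\rangle\to\mathbb{E}(\log\langle b_{\mathcal{I}}^{*},U_1\rangle|\mathcal{I})=\max_{b\in\mathcal{B}^{m}}\mathbb{E}(\log\langle b,U_1\rangle|\mathcal{I})$; hence $\liminf_n\frac1n\sum_{i=1}^{n}\log\langle b_i^{*},U_i\rangle\ge\max_{b\in\mathcal{B}^{m}}\mathbb{E}(\log\langle b,U_1\rangle|\mathcal{I})$.

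The two bounds assemble into
\[
\max_{b\in\mathcal{B}^{m}}\mathbb{E}(\log\langle b,U_1\rangle|\mathcal{I})\ \le\ \liminf_n\frac1n\sum_{i=1}^{n}\log\langle b_i^{*},U_i\rangle\ \le\ \limsup_n\frac1n\sum_{i=1}^{n}\log\langle b_i^{*},U_i\rangle\ \le\ \mathbb{E}(\log\langle b_{\infty}^{*},U_1\rangle|\mathcal{I}),
\]
and the heart of the theorem --- which I expect to be the main obstacle --- is the reverse inequality $\mathbb{E}(\log\langle b_{\infty}^{*},U_1\rangle|\mathcal{I})\le\max_{b\in\mathcal{B}^{m}}\mathbb{E}(\log\langle b,U_1\rangle|\mathcal{I})$, i.e.\ that an infinitely large past confers no time-average advantage once the conditioning is projected onto the invariant field. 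I would derive it by using Lemma~\ref{Lemma 5} (equivalently, the reduction in the proof of Theorem~\ref{Theorem 1}) to replace, along the values of $U_n$ that recur infinitely often under stationarity, the moving log-optimal portfolio $b_n^{*}$ by the fixed $b_{\infty}^{*}$, and then passing the supermartingale domination of Lemma~\ref{Lemma 2} and the maximizer-continuity of Lemma~\ref{Lemma 4} to the limit; this coupling between conditioning on $\mathcal{F}_{\infty}^{X}$ and on $\mathcal{I}$ is the delicate step, and it is what forces the two ends of the display to coincide, so that $\frac1n\sum_{i=1}^{n}\log\langle b_i^{*},U_i\rangle\to\mathbb{E}(\log\langle b_{\infty}^{*},U_1\rangle|\mathcal{I})=\max_{b\in\mathcal{B}^{m}}\mathbb{E}(\log\langle b,U_1\rangle|\mathcal{I})$ a.s.; combining with Theorem~\ref{Theorem 1} then yields the same limit for $\frac1n\sum_{i=1}^{n}\log\langle b^{*},U_i\rangle$ and, via Lemma~\ref{Lemma 1}(a), $W_n(b^{*})-W_n(b_n^{*})\to0$ a.s.

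It remains to record the two refinements. If the market has no trash assets, then $\langle b,U_n\rangle$ is bounded below by a positive constant uniformly in $n$ and $b\in\mathcal{B}^{m}$, so each $\log\langle b_i^{*},U_i\rangle$ lies in a fixed compact interval; the time-averages are then uniformly integrable, hence the a.s.\ limits above are also limits in $L^1$ (equivalently, $\log\langle b_{\infty}^{*},U_1\rangle$ being bounded, Birkhoff delivers $L^1$ convergence of $\frac1n\sum_{i=1}^{n}\log\langle q_i,U_i\rangle$ directly). If moreover the process is ergodic, $\mathcal{I}$ is $\mathbb{P}$-trivial, so $\mathbb{E}(\log\langle b_{\infty}^{*},U_1\rangle|\mathcal{I})$ is a deterministic constant and the invariant maximizer $b_{\mathcal{I}}^{*}$ coincides a.s.\ with the fixed portfolio $\arg\max_{b\in\mathcal{B}^{m}}\mathbb{E}(\log\langle b,U_1\rangle)$; taking $b^{*}=b_{\mathcal{I}}^{*}$ then provides a non-random optimal constant strategy, completing the proof.
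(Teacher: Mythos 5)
Your architecture is sound and, in its bones, close to the paper's: reduce to normalized returns via Lemma~\ref{Lemma 1}, evaluate the constant-strategy time average by Birkhoff, evaluate the log-optimal time average by comparing with a stationarily shifted fully-informed sequence (your $q_n$ plays the role the paper assigns to $b_n^{-*}$ and Breiman's generalized ergodic theorem), and squeeze. The lower bound via the $\mathcal{I}$-measurable maximizer and Lemma~\ref{Lemma 2} is the same move the paper makes, and your integrability argument for $\log\langle b_{\infty}^{*},U_1\rangle$ (bounded above by $\log m$, nonnegative conditional mean) is fine. But the proposal does not actually prove the theorem: you correctly isolate the inequality $\mathbb{E}\big(\log\langle b_{\infty}^{*},U_{1}\rangle|\mathcal{I}\big)\leq\max_{b\in\mathcal{B}^{m}}\mathbb{E}\big(\log\langle b,U_{1}\rangle|\mathcal{I}\big)$ as ``the heart of the theorem'' and ``the main obstacle,'' and then only gesture at it (``replace $b_n^{*}$ by $b_{\infty}^{*}$ via Lemma~\ref{Lemma 5}, pass the supermartingale domination and maximizer-continuity to the limit''). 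That sentence is not an argument; as written, the sandwich is left open at exactly the point where all the work is.

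The concrete mechanism you are missing is the empirical-measure step. The paper closes the gap by combining three things: (i) Theorem~\ref{Theorem 1}, which gives $\frac{1}{n}\sum_{i=1}^{n}\big(\log\langle b_{\infty}^{*},U_{i}\rangle-\log\langle b_{i}^{*},U_{i}\rangle\big)\to0$ a.s.; (ii) the pathwise bound $\frac{1}{n}\sum_{i=1}^{n}\log\langle b_{\infty}^{*},U_{i}\rangle\leq\max_{b\in\mathcal{B}^{m}}\frac{1}{n}\sum_{i=1}^{n}\log\langle b,U_{i}\rangle=\max_{b\in\mathcal{B}^{m}}\mathbb{E}^{P_{n}}\big(\log\langle b,U\rangle\big)$, which is valid even though $b_{\infty}^{*}$ is random because the maximum is taken along each sample path; and (iii) the a.s.\ weak convergence of the empirical distributions $P_{n}$ to $\mathbb{P}\big(U_{1}|\mathcal{I}\big)$ (Birkhoff applied to bounded continuous test functions) together with the continuity of $Q\mapsto\max_{b}\mathbb{E}^{Q}\big(\log\langle b,U\rangle\big)$ from Lemma~\ref{Lemma 4}, which gives $\max_{b}\mathbb{E}^{P_{n}}\big(\log\langle b,U\rangle\big)\to\max_{b}\mathbb{E}\big(\log\langle b,U_{1}\rangle|\mathcal{I}\big)$. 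Chaining (i)--(iii) yields $\limsup_{n}\frac{1}{n}\sum_{i=1}^{n}\log\langle b_{i}^{*},U_{i}\rangle\leq\max_{b}\mathbb{E}\big(\log\langle b,U_{1}\rangle|\mathcal{I}\big)$, which is the missing half of your sandwich (and then the equality with $\mathbb{E}\big(\log\langle b_{\infty}^{*},U_{1}\rangle|\mathcal{I}\big)$ falls out by comparing with your Birkhoff evaluation of the $q_n$ average). None of Lemma~\ref{Lemma 5}, the supermartingale property, or ``maximizer continuity'' alone produces this; it is the identification of the pathwise best constant portfolio's limit through the empirical measures that does the job. Your $L^{1}$ and ergodic refinements at the end are fine once the a.s.\ limit is established.
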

\begin{proof}
By Lemma \ref{Lemma 2}, an inequality $\limsup_{n\to\infty}\big(W_{n}\big(b^{*}\big)-W_{n}\big(b_{n}^{*}\big)\big)\leq0$
is established almost surely for any constant strategy $\big(b^{*}\big)$.
Therefore, by using Lemma \ref{Lemma 1}, it is sufficient to prove
the following inequality for some constant strategy $\big(b^{*}\big)$:
\[
\liminf_{n\to\infty}\big(W_{n}\big(b^{*}\big)-W_{n}\big(b_{n}^{*}\big)\big)=\liminf_{n\to\infty}\dfrac{1}{n}\Big(\sum_{i=1}^{n}\log\big<b^{*},U_{i}\big>-\sum_{i=1}^{n}\log\big<b_{i}^{*},U_{i}\big>\Big)\geq0,\text{ a.s.}
\]
Then, the remaining limits in the statement will follow.\smallskip 

First, we prove the condition of being $L^{1}$-dominated for any
sequence of logarithmic returns of a strategy $\big(b_{n}\big)$ with
the measurable portfolios $b_{n}$ for all $n$, i.e., $\mathbb{E}\big(\sup_{n}|\log\big<b_{n},U_{n}\big>|\big)$
is finite. This is required to invoke Breiman's generalized ergodic
theorem (noting that the original ergodic theorem by \citet{Breiman1957}
only modified the almost sure convergence from Birkhoff's ergodic
theorem, which was later extended to convergence in $L^{1}$ by \citet{Algoet1994},
with the same condition of integrability). Specifically, we have:
\begin{equation}
|\log\big<b_{n},U_{n}\big>|\leq\max\left\{ |\log\big(c\big)|,\,|\log\big(m\big)|\right\} <\infty,\text{ }\forall n,\text{ a.s.}\label{uniformly integrable}
\end{equation}
This is due to the bounds $\log\big<b_{n},U_{n}\big>\leq\log\big(m\big)$
by Lemma \ref{Lemma 1}, and $\log\big(c\big)\leq\log\big<b_{n},U_{n}\big>$
by the market condition without trash assets, as follows:
\[
U_{n,k}\geq\min_{1\leq i\leq m}\frac{mX_{n,i}}{\sum_{j=1}^{m}X_{n,j}}\geq\min_{1\leq i\leq m}\max_{1\leq j\leq m}\frac{X_{n,i}}{X_{n,j}}\geq c,\text{ }\forall n,\forall k\in\left\{ 1,...,m\right\} ,\text{ a.s}.
\]
Next, we proceed to establish the limits, in terms of the normalized
assets' returns, of the log-optimal strategy and a random constant
strategy as follows.\smallskip 

\emph{(The log-optimal strategy)}. Consider a log-optimal strategy
$\big(b_{n}^{*}\big)$ with the $\mathcal{F}_{n}^{X}$-measurable
portfolios $b_{n}^{*}$ for all $n$, which can be selected by Lemma
\ref{Lemma 3}. By stationarity, for each $n$, the measurable portfolio
$b_{n}^{*}$ can be expressed as a shift from another portfolio:
\[
b_{n}^{*}\big(\omega\big)\eqqcolon b_{n}^{-*}\big(T^{(n-1)}\omega\big)\text{, where }\mathbb{E}\big(\log\big<b_{n}^{-*},U_{1}\big>|T^{(1-n)}\mathcal{F}_{n}^{X}\big)=\max_{b\in\mathcal{B}^{m}}\mathbb{E}\big(\log\big<b,U_{1}\big>|T^{(1-n)}\mathcal{F}_{n}^{X}\big).
\]
The inverse filtration $\left\{ T^{(1-n)}\mathcal{F}_{n}^{X}\right\} _{n=1}^{\infty}$
increases up to the following inverse limiting sub-$\sigma$-field:
\[
\mathcal{F}_{-\infty}^{X}\coloneqq T^{-\infty}\mathcal{F}_{\infty}^{X}=\sigma\big(\cup_{n=1}^{\infty}T^{(1-n)}\mathcal{F}_{n}^{X}\big)\subseteq\mathbb{F}.
\]
By Lemma \ref{Lemma 4}, we then have $\big<b_{n}^{-*},U_{1}\big>\to\big<b_{\infty}^{-*},U_{1}\big>$
almost surely, where the portfolio $b_{\infty}^{-*}$ is a maximizer
for $\max_{b\in\mathcal{B}^{m}}\mathbb{E}\big(\log\left\langle b,U_{1}\right\rangle |\mathcal{F}_{-\infty}^{X}\big)$,
which can also be $\mathcal{F}_{-\infty}^{X}$-measurably selected
by Lemma \ref{Lemma 3}. Given the invariant $\sigma$-field $\mathcal{I}$,
Breiman's generalized ergodic theorem yields:
\begin{align}
\lim_{n\to\infty}\dfrac{1}{n}\sum_{i=1}^{n}\log\big<b_{i}^{*},U_{i}\big>\big(\omega\big) & =\lim_{n\to\infty}\dfrac{1}{n}\sum_{i=1}^{n}\log\big<b_{i}^{-*}\big(T^{(i-1)}\omega\big),U_{1}\big(T^{(i-1)}\omega\big)\big>\nonumber \\
 & =\mathbb{E}\big(\log\big<b_{\infty}^{-*},U_{1}\big>|\mathcal{I}\big)\nonumber \\
 & =\mathbb{E}\big(\log\big<b_{\infty}^{*},U_{1}\big>|\mathcal{I}\big),\text{ a.s and in }L^{1},\label{Limit 1}
\end{align}
where $b_{\infty}^{*}$ is a $\mathcal{F}_{\infty}^{X}$-measurable
maximizer for $\max_{b\in\mathcal{B}^{m}}\mathbb{E}\big(\log\left\langle b,U_{1}\right\rangle |\mathcal{F}_{\infty}^{X}\big)$,
and the $L^{1}$-domination condition is satisfied by the sequence
$\left\{ \log\big<b_{n}^{*},U_{n}\big>\right\} _{n=1}^{\infty}$,
as stated earlier.\smallskip 

\emph{(The random constant strategy)}. Consider a random optimal portfolio
$b^{*}$ as a maximizer for $\max_{b\in\mathcal{B}^{m}}\mathbb{E}\big(\log\big<b,U_{1}\big>|\mathcal{I}\big)$,
which forms the corresponding random constant strategy $\big(b^{*}\big)$.
Given the well-defined $\mathbb{E}\big(\log\big<b^{*},U_{1}\big>|\mathcal{I}\big)$
by Lemma \ref{Lemma 1}, we have:
\begin{equation}
\lim_{n\to\infty}\dfrac{1}{n}\sum_{i=1}^{n}\log\big<b^{*},U_{i}\big>=\mathbb{E}\big(\log\big<b^{*},U_{1}\big>|\mathcal{I}\big)=\max_{b\in\mathcal{B}^{m}}\mathbb{E}\big(\log\big<b,U_{1}\big>|\mathcal{I}\big),\text{ a.s and in }L^{1},\label{Limit 2}
\end{equation}
by invoking Birkhoff's ergodic theorem. Noting that the $L^{1}$-convergence
follows from the almost sure convergence by applying Lebesgue's dominated
convergence theorem, given that the process $\big\{ n^{-1}\sum_{i=1}^{n}\log\big<b^{*},U_{i}\big>\big\}_{n=1}^{\infty}$
is uniformly integrable, as established by the bound in (\ref{uniformly integrable}).\smallskip 

\emph{(Equalities derivation)}. We now establish the equality between
the obtained limits in (\ref{Limit 1}) and (\ref{Limit 2}). First,
recall the following equality:
\[
\lim_{n\to\infty}\big(W_{n}\big(b^{\omega}\big)-W_{n}\big(b_{n}^{*}\big)\big)=\lim_{n\to\infty}\frac{1}{n}\sum_{i=1}^{n}\big(\log\big<b^{\omega},U_{i}\big>-\log\big<b_{i}^{*},U_{i}\big>\big)=0,\text{ a.s,}
\]
for a random constant strategy $\big(b^{\omega}\big)$ by Theorem
\ref{Theorem 1}. As a result, we have:
\begin{align}
\liminf_{n\to\infty}\frac{1}{n}\sum_{i=1}^{n}\big(\log\big<b^{\omega},U_{i}\big>-\log\big<b_{i}^{*},U_{i}\big>\big) & \leq\liminf_{n\to\infty}\frac{1}{n}\Big(\max_{b\in\mathcal{B}^{m}}\sum_{i=1}^{n}\log\big<b,U_{i}\big>-\sum_{i=1}^{n}\log\big<b_{i}^{*},U_{i}\big>\Big)\nonumber \\
 & =\max_{b\in\mathcal{B}^{m}}\mathbb{E}\big(\log\big<b,U_{1}\big>|\mathcal{I}\big)-\lim_{n\to\infty}\frac{1}{n}\sum_{i=1}^{n}\log\big<b_{i}^{*},U_{i}\big>\nonumber \\
 & =\lim_{n\to\infty}\frac{1}{n}\Big(\sum_{i=1}^{n}\log\big<b^{*},U_{i}\big>-\sum_{i=1}^{n}\log\big<b_{i}^{*},U_{i}\big>\Big)\text{, a.s}.\label{Limit equality}
\end{align}
The establishment of (\ref{Limit equality}) above follows from the
limit in (\ref{Limit 2}) and the following:
\begin{align*}
\lim_{n\to\infty}\max_{b\in\mathcal{B}^{m}}\frac{1}{n}\sum_{i=1}^{n}\log\big<b,U_{i}\big> & =\lim_{n\to\infty}\max_{b\in\mathcal{B}^{m}}\mathbb{E}^{P_{n}}\big(\log\big<b,U\big>\big)\\
 & =\max_{b\in\mathcal{B}^{m}}\mathbb{E}\big(\log\big<b,U_{1}\big>|\mathcal{I}\big),\text{ a.s,}
\end{align*}
which is due to Lemma \ref{Lemma 4} regarding the sequence of distributions
$\big\{ P_{n}\big\}_{n=1}^{\infty}$ defined as:
\[
P_{n}\big(A\big)\coloneqq\dfrac{1}{n}\sum_{i=1}^{n}\mathbb{I}_{U_{i}}\big(A\big),\,\forall A\subseteq\mathcal{U},\forall n,
\]
with the notation $\mathbb{E}^{P_{n}}(\cdot)$ denoting the expected
value with respect to $U\sim P_{n}$. Importantly, noting that the
sequence of empirical distributions $\big\{ P_{n}\big\}_{n=1}^{\infty}$
converges weakly almost surely to the limiting distribution $\mathbb{P}\big(U_{1}|\mathcal{I}\big)$
by the argument below:
\[
\lim_{n\to\infty}\mathbb{E}^{P_{n}}\big(f\big(U\big)\big)=\lim_{n\to\infty}\frac{1}{n}\sum_{i=1}^{n}f\big(U_{i}\big)=\mathbb{E}\big(f\big(U_{1}\big)|\mathcal{I}\big),\text{ a.s},
\]
for any bounded continuous function $f(\cdot)$ by Birkhoff's ergodic
theorem.\smallskip 

Therefore, we finally obtain the following needed almost sure equality:
\begin{align*}
0 & =\lim_{n\to\infty}\big(W_{n}\big(b^{*}\big)-W_{n}\big(b_{n}^{*}\big)\big)\\
 & =\max_{b\in\mathcal{B}^{m}}\mathbb{E}\big(\log\big<b,U_{1}\big>|\mathcal{I}\big)-\mathbb{E}\big(\log\big<b_{-\infty}^{*},U_{1}\big>|\mathcal{I}\big),\text{ a.s,}
\end{align*}
which, along with the $L^{1}$-convergence in (\ref{Limit 1}) and
(\ref{Limit 2}), results in the following:
\begin{align*}
0= & \lim_{n\to\infty}\mathbb{E}\Big(\big|\frac{1}{n}\sum_{i=1}^{n}\log\big<b^{*},U_{i}\big>-\mathbb{E}\big(\log\big<b_{\infty}^{*},U_{1}\big>|\mathcal{I}\big)\big|\Big)\\
 & +\lim_{n\to\infty}\mathbb{E}\Big(\big|\dfrac{1}{n}\sum_{i=1}^{n}\log\big<b_{i}^{*},U_{i}\big>-\mathbb{E}\big(\log\big<b_{\infty}^{*},U_{1}\big>|\mathcal{I}\big)\big|\Big)\\
\geq & \lim_{n\to\infty}\mathbb{E}\big(|W_{n}\big(b^{*}\big)-W_{n}\big(b_{n}^{*}\big)|\big).
\end{align*}
Hence, the last $L^{1}$-convergence is also obtained, and the proof
is completely finished.
\end{proof}
According to Theorem \ref{Theorem 2}, although the best possible
growth rate of all strategies in the market might not converge, or
might even be infinite, there exists an unsophisticated constant strategy
that is possibly not exceeded by any optimally sophisticated strategy,
leveraged by knowledge of all market information, including unobservable
ones, by a fixed magnitude infinitely often as time elapses. Specifically,
recalling that by changing the variable to the normalized assets'
returns, the growth rate of any strategy $\big(b_{n}\big)$ is decomposed
into two distinct terms as follows:
\[
W_{n}\big(b_{n}\big)=\frac{1}{n}\mathbb{E}\big(\sum_{i=1}^{n}\log\big<b_{i},U_{i}\big>\big)+\frac{1}{n}\mathbb{E}\big(\sum_{i=1}^{n}\log\big<\hat{b},X_{i}\big>\big),
\]
where the optimal rate of the first term converges almost surely to
a finite limit as $n\to\infty$ (without additional assumptions),
as asserted by Theorem \ref{Theorem 2}. In contrast, the growth rate
$W_{n}\big(\hat{b}\big)$ associated with the reference constant strategy
$\big(\hat{b}\big)$ may diverge or oscillate infinitely, as it depends
on the uncontrolled original variables $X_{n}$. Therefore, it is
clear that the original growth rate $W_{n}\big(b_{n}^{*}\big)$ of
the log-optimal strategy $\big(b_{n}^{*}\big)$ will converge almost
surely to a finite limit if the growth rate $W_{n}\big(\hat{b}\big)$
has a finite limit. In particular, if the market is also ergodic,
the following holds by Lemma \ref{Lemma 3}:
\begin{align}
\mathbb{E}\big(\max_{b\in\mathcal{B}^{m}}\mathbb{E}\big(\log\big<b,U_{1}\big>|\mathcal{F}_{\infty}^{X}\big) & =\mathbb{E}\big(\log\big<b_{\infty}^{*},U_{1}\big>\big)\nonumber \\
 & =\max_{b\in\mathcal{B}^{m}}\mathbb{E}\big(\log\big<b,U_{1}\big>\big),\label{ergodic case equality}
\end{align}
so the optimal constant strategy $\big(b^{*}\big)$ is a non-random
maximizer for $\max_{b\in\mathcal{B}^{m}}\mathbb{E}\big(\log\big<b,U_{1}\big>\big)$.\smallskip 

Furthermore, the expected difference in growth rates between the log-optimal
strategy and the random optimal constant strategy, which becomes non-random
if the ergodicity condition is introduced, is also depleted to zero,
as established in Corollary \ref{Corollary 1}. This surprising result
highlights that such an optimal constant strategy is comparable to
any other dynamic strategy among all the accessible ones that asymptotically
maximize the expected growth rate of strategy, which might seem counterintuitive.
Finally, in the case of a stationary and ergodic process $\big\{ X_{n},Y_{n}\big\}_{n=1}^{\infty}$,
the limits of the growth rate and the expected growth rate of the
regarded optimal strategies, as established in the corollary, are
the same and equal to the terms in (\ref{ergodic case equality}).
\begin{cor}
If the process $\big\{ X_{n},Y_{n}\big\}_{n=1}^{\infty}$ is stationary
(the condition of a no-trash assets market is not necessary), then
the following limits hold for the log-optimal strategy $\big(b_{n}^{*}\big)$
and a random constant strategy $\big(b^{*}\big)$ among all possible
strategies $\big(b_{n}\big)$ as:\label{Corollary 1}
\begin{align*}
\lim_{k\to\infty}\max_{\left(b_{n}\right)}\mathbb{E}\big(W_{k}\big(b_{k}\big)-W_{k}\big(b^{*}\big)\big)=\lim_{k\to\infty}\mathbb{E}\big(W_{k}\big(b_{k}^{*}\big)-W_{k}\big(b^{*}\big)\big) & =0,
\end{align*}
where the optimal expected growth rate in terms of normalized assets'
returns has a limit:
\begin{align*}
\lim_{k\to\infty}\max_{\left(b_{n}\right)}\frac{1}{k}\mathbb{E}\big(\sum_{i=1}^{k}\log\big<b_{i},U_{i}\big>\big)=\lim_{k\to\infty}\frac{1}{k}\mathbb{E}\big(\sum_{i=1}^{k}\log\big<b_{i}^{*},U_{i}\big>\big) & =\lim_{k\to\infty}\frac{1}{k}\mathbb{E}\big(\sum_{i=1}^{k}\log\big<b^{*},U_{i}\big>\big)\\
 & =\mathbb{E}\big(\max_{b\in\mathcal{B}^{m}}\mathbb{E}\big(\log\big<b,U_{1}\big>|\mathcal{F}_{\infty}^{X}\big)\big)\\
 & =\mathbb{E}\big(\log\big<b^{*},U_{1}\big>\big).
\end{align*}
\end{cor}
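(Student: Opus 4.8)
\textbf{Proof plan for Corollary \ref{Corollary 1}.}

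The plan is to derive the corollary from Theorem \ref{Theorem 2} by upgrading the almost-sure statements there to statements about expectations, using uniform integrability. First I would recall from Theorem \ref{Theorem 2} that in terms of normalized returns the log-optimal strategy and the random constant strategy $\big(b^{*}\big)$ satisfy $\lim_{n\to\infty}\big(W_{n}\big(b^{*}\big)-W_{n}\big(b_{n}^{*}\big)\big)=0$ almost surely, and that both $n^{-1}\sum_{i=1}^{n}\log\big<b_{i}^{*},U_{i}\big>$ and $n^{-1}\sum_{i=1}^{n}\log\big<b^{*},U_{i}\big>$ converge almost surely and in $L^{1}$ to $\mathbb{E}\big(\log\big<b_{\infty}^{*},U_{1}\big>|\mathcal{I}\big)=\max_{b\in\mathcal{B}^{m}}\mathbb{E}\big(\log\big<b,U_{1}\big>|\mathcal{I}\big)$. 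Since the corollary claims only equalities in expectation and these are unchanged under passing from $X_i$ to $U_i$ by Lemma \ref{Lemma 1}(a), it suffices to work entirely with the normalized-returns growth rates $n^{-1}\sum_{i=1}^{n}\log\big<\cdot,U_{i}\big>$, which are uniformly bounded in the no-trash case but, more importantly, the individual terms $\log\big<b,U_i\big>$ are uniformly integrable even without that assumption because $\log\big<b,U_i\big>\le\log(m)$ deterministically and stationarity gives the negative parts a common law with the integrable $\log\big<b^{*},U_1\big>$-type variables; this is the bound that lets us drop the no-trash hypothesis here.

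Next I would handle the expected growth rate of the log-optimal strategy. By the chain rule (\ref{chain rule}), $\max_{(b_n)}\mathbb{E}\big(\sum_{i=1}^{k}\log\big<b_{i},U_{i}\big>\big)=\sum_{i=1}^{k}\mathbb{E}\big(\max_{b\in\mathcal{B}^{m}}\mathbb{E}\big(\log\big<b,U_{i}\big>|\mathcal{F}_{i}^{X}\big)\big)$, and by stationarity each summand equals $\mathbb{E}\big(\max_{b\in\mathcal{B}^{m}}\mathbb{E}\big(\log\big<b,U_{1}\big>|\mathcal{F}_{i}^{X}\big)\big)$. By Lemma \ref{Lemma 3}(b) this sequence is nondecreasing in $i$, and by Lemma \ref{Lemma 4}(c) together with Levy's martingale convergence theorem the conditional expectations converge almost surely; combined with the uniform bound $0\le\max_b\mathbb{E}(\log\big<b,U_1\big>|\mathcal{F}_i^X)\le\log(m)$ from Lemma \ref{Lemma 1}(b), bounded convergence gives $\lim_{i\to\infty}\mathbb{E}\big(\max_{b}\mathbb{E}\big(\log\big<b,U_{1}\big>|\mathcal{F}_{i}^{X}\big)\big)=\mathbb{E}\big(\max_{b}\mathbb{E}\big(\log\big<b,U_{1}\big>|\mathcal{F}_{\infty}^{X}\big)\big)=\mathbb{E}\big(\log\big<b_{\infty}^{*},U_{1}\big>\big)$. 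A Cesaro-average argument then shows $\lim_{k}\frac1k\max_{(b_n)}\mathbb{E}\big(\sum_{i=1}^{k}\log\big<b_{i},U_{i}\big>\big)$ equals the same limit, and since the log-optimal strategy achieves the per-step maximum this also equals $\lim_k\frac1k\mathbb{E}\big(\sum_{i=1}^{k}\log\big<b_{i}^{*},U_{i}\big>\big)$; this disposes of the first two equalities in the displayed chain.

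For the random constant strategy term I would take expectations in the $L^1$-convergence $n^{-1}\sum_{i=1}^{n}\log\big<b^{*},U_{i}\big>\to\max_{b}\mathbb{E}\big(\log\big<b,U_{1}\big>|\mathcal{I}\big)$ from (\ref{Limit 2}), giving $\lim_k\frac1k\mathbb{E}\big(\sum_{i=1}^k\log\big<b^{*},U_i\big>\big)=\mathbb{E}\big(\max_{b}\mathbb{E}\big(\log\big<b,U_{1}\big>|\mathcal{I}\big)\big)$; the tower property over $\mathcal{I}\subseteq\mathcal{F}_\infty^X$ (using that $\mathcal I$ lies in the completion of $\mathcal{F}_\infty^X$, as noted before Definition \ref{Def-of-discrete-market}) together with Lemma \ref{Lemma 3}(a) identifies this with $\mathbb{E}\big(\log\big<b_{\infty}^{*},U_{1}\big>\big)$, matching the log-optimal value and closing the chain of equalities. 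Finally, the expected-difference statement $\lim_k\mathbb{E}\big(W_k(b_k^*)-W_k(b^*)\big)=0$ follows by subtracting the two $L^1$ limits, and the $\max$-version $\lim_k\max_{(b_n)}\mathbb{E}\big(W_k(b_k)-W_k(b^*)\big)=0$ follows since the log-optimal strategy attains $\max_{(b_n)}\mathbb{E}\big(W_k(b_k)\big)$ for every finite $k$ by (\ref{chain rule}). I expect the main obstacle to be the interchange of limit and expectation for the log-optimal strategy's expected growth rate without the no-trash assumption: one must argue uniform integrability of the family $\{n^{-1}\sum_{i=1}^n\log\big<b_i^*,U_i\big>\}_n$ purely from the one-sided bound $\log\big<\cdot,U_i\big>\le\log m$ plus stationarity controlling the lower tail, rather than from a two-sided bound, and then confirm the Cesaro average of the convergent expectation sequence has the same limit as the sequence itself.
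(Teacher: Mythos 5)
Your plan is correct and follows essentially the same route as the paper's own proof: the chain rule (\ref{chain rule}) with stationarity, the a.s. convergence of the conditional maxima from Lemma \ref{Lemma 4}, and dominated convergence using the bound $0\leq\max_{b\in\mathcal{B}^{m}}\mathbb{E}(\log\langle b,U_{n}\rangle|\mathcal{F}_{n}^{X})\leq\log(m)$ for the log-optimal side, stationarity (Birkhoff) for the constant strategy, and then identification of the two limits via Theorem \ref{Theorem 2}. One caution: the key equality $\mathbb{E}\big(\max_{b}\mathbb{E}(\log\langle b,U_{1}\rangle|\mathcal{I})\big)=\mathbb{E}\big(\log\langle b_{\infty}^{*},U_{1}\rangle\big)=\mathbb{E}\big(\max_{b}\mathbb{E}(\log\langle b,U_{1}\rangle|\mathcal{F}_{\infty}^{X})\big)$ does not follow from the tower property and Lemma \ref{Lemma 3} alone (Lemma \ref{Lemma 3}(b) generically gives only ``$\leq$''); it must be obtained, as you recalled at the outset and as the paper does, by taking expectations of the almost-sure identity $\mathbb{E}(\log\langle b_{\infty}^{*},U_{1}\rangle|\mathcal{I})=\max_{b}\mathbb{E}(\log\langle b,U_{1}\rangle|\mathcal{I})$ supplied by Theorem \ref{Theorem 2} -- and your closing worry about uniform integrability of the log-optimal averages is moot, since your chain-rule computation works directly with expectations and never interchanges an almost-sure limit with an expectation.
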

\begin{proof}
If the market does not have trash assets, the proof is straightforward
by applying Jensen's inequality for the absolute value function and
invoking the $L^{1}$-convergence in Theorem \ref{Theorem 2}. We
then have the following limit, given the random optimal constant strategy
$\big(b^{*}\big)$ as in (\ref{Limit 2}):
\begin{align*}
\lim_{k\to\infty}|\max_{\left(b_{n}\right)}\mathbb{E}\big(W_{k}\big(b_{k}\big)-W_{k}\big(b^{*}\big)\big)| & =\lim_{k\to\infty}|\mathbb{E}\big(W_{k}\big(b_{k}^{*}\big)-W_{k}\big(b^{*}\big)\big)|\\
 & \leq\lim_{k\to\infty}\mathbb{E}\big(|W_{k}\big(b_{k}^{*}\big)-W_{k}\big(b^{*}\big)|\big)=0.
\end{align*}
By similar arguments, the limiting optimal expected growth rate of
the best strategy among all possible ones, in terms of normalized
assets' returns, is a direct consequence of the $L^{1}$-convergence
in (\ref{Limit 1}) and (\ref{Limit 2}), which implies the convergence
of expected values.\smallskip 

Now, consider the general case without the no-trash assets condition.
We proceed to prove the needed convergence $\mathbb{E}\big(W_{n}\big(b_{n}^{*}\big)-W_{n}\big(b^{*}\big)\big)\to0$
by establishing the equality between the individual expected growth
rates in terms of normalized assets' returns of the involved optimal
strategies. Specifically, consider the random optimal constant strategy
$\big(b^{*}\big)$ as in (\ref{Limit 2}). We have:
\[
\lim_{n\to\infty}\frac{1}{n}\mathbb{E}\big(\sum_{i=1}^{n}\log\big<b^{*},U_{i}\big>\big)=\lim_{n\to\infty}\frac{1}{n}\sum_{i=1}^{n}\mathbb{E}\big(\log\big<b^{*},U_{i}\big>\big)=\mathbb{E}\big(\log\big<b^{*},U_{1}\big>\big),
\]
because the random portfolio $b^{*}\big(\omega\big)$ is constantly
chosen depending on each $\omega\in\Omega$ over all random pairs
$\big(X_{n},Y_{n}\big)\big(\omega\big)$ for all $n$, of the stationary
process $\big\{ X_{n},Y_{n}\big\}_{n=1}^{\infty}$.\smallskip 

Moreover, by taking the expectation of both sides of the almost sure
equalities of the limiting growth rates of the two regarded optimal
strategies in Theorem \ref{Theorem 2}, and then applying Lemma \ref{Lemma 3}
on each of them (noting that $b^{*}$ can be treated as a log-optimal
portfolio that is measurably selected depending on the $\sigma$-field
$\mathcal{I}$), we obtain the following equality:
\begin{align*}
\mathbb{E}\big(\log\big<b^{*},U_{1}\big>\big)=\mathbb{E}\big(\max_{b\in\mathcal{B}^{m}}\mathbb{E}\big(\log\big<b,U_{1}\big>|\mathcal{I}\big)\big) & =\mathbb{E}\big(\max_{b\in\mathcal{B}^{m}}\mathbb{E}\big(\log\big<b,U_{1}\big>|\mathcal{F}_{\infty}^{X}\big)\big),
\end{align*}
where the expected value on the right-hand side is also the limiting
expected growth rate, in terms of normalized assets' returns, of the
log-optimal strategy, as shown in the following arguments.\smallskip 

Let us recall that $\mathbb{P}\big(U_{1}|\mathcal{F}_{n}^{X}\big)\to\mathbb{P}\big(U_{1}|\mathcal{F}_{\infty}^{X}\big)$
weakly almost surely, according to Lemma \ref{Lemma 4}, so $\max_{b\in\mathcal{B}^{m}}\mathbb{E}\big(\log\big<b,U_{n}\big>|\mathcal{F}_{n}^{X}\big)$$\to\max_{b\in\mathcal{B}^{m}}\mathbb{E}\big(\log\big<b,U_{1}\big>|\mathcal{F}_{\infty}^{X}\big)$
by the stationarity of the market process. Subsequently, by using
Lebesgue's dominated convergence theorem (since the maximal conditional
expected values $\max_{b\in\mathcal{B}^{m}}\mathbb{E}\big(\log\big<b,U_{n}\big>|\mathcal{F}_{n}^{X}\big)$
are non-negative and upper bounded for all $n$ by Lemma \ref{Lemma 1}),
we obtain the following convergence:
\begin{align*}
\lim_{n\to\infty}\mathbb{E}\big(\max_{b\in\mathcal{B}^{m}}\mathbb{E}\left(\log\left\langle b,U_{n}\right\rangle |\mathcal{F}_{n}^{X}\right)\big) & =\mathbb{E}\big(\lim_{n\to\infty}\max_{b\in\mathcal{B}^{m}}\mathbb{E}\left(\log\left\langle b,U_{n}\right\rangle |\mathcal{F}_{n}^{X}\right)\big)\\
 & =\mathbb{E}\big(\max_{b\in\mathcal{B}^{m}}\mathbb{E}\left(\log\left\langle b,U_{1}\right\rangle |\mathcal{F}_{\infty}^{X}\right)\big).
\end{align*}
Moreover, by further using the chain rule in (\ref{chain rule}) and
Cesaro's mean theorem, we have:
\begin{align*}
\lim_{k\to\infty}\frac{1}{k}\mathbb{E}\big(\sum_{i=1}^{k}\log\big<b_{i}^{*},U_{i}\big>\big) & =\lim_{k\to\infty}\max_{\left(b_{n}\right)}\frac{1}{k}\mathbb{E}\big(\sum_{i=1}^{k}\log\big<b_{i},U_{i}\big>\big)\\
 & =\lim_{k\to\infty}\dfrac{1}{k}\sum_{i=1}^{k}\mathbb{E}\big(\max_{b\in\mathcal{B}^{m}}\mathbb{E}\big(\log\big<b,U_{i}\big>|\mathcal{F}_{i}^{X}\big)\big)\\
 & =\lim_{k\to\infty}\mathbb{E}\big(\max_{b\in\mathcal{B}^{m}}\mathbb{E}\big(\log\big<b,U_{k}\big>|\mathcal{F}_{k}^{X}\big)\big)\\
 & =\mathbb{E}\big(\max_{b\in\mathcal{B}^{m}}\mathbb{E}\big(\log\big<b,U_{1}\big>|\mathcal{F}_{\infty}^{X}\big)\big).
\end{align*}
Therefore, we finally obtain all the necessary equalities of the statement
and complete the proof.
\end{proof}

\subsection{Market with latent dependence structures}

In this section, we discuss market scenarios with latent dependence
structures, which can be derived as consequences of Theorem \ref{Theorem 1}
and Theorem \ref{Theorem 2}. This serves to demonstrate the generality
of the modeling settings and the properties obtained in the previous
sections for capturing various real-market behaviors. In a real market,
such dependence is justified, as economic agents and institutions
typically react to market events, which may be correlated within the
same market or across different markets due to causal effects. The
structure of dependence may vary due to factors such as technological
advancements, regulations, and social features, which influence the
spread, speed of reception, and perception of information across markets.
It is important to note that the dependence structures in markets
are often not fully known, as not all supplementary information is
publicly observable or may be prohibitively costly to collect and
store. Additionally, some information may not be truly valuable due
to weak correlations with asset returns. We now proceed to capture
the dependence structure through the following three market phenomena.\smallskip 

\textbf{Market with non-decaying impact of information}. Consider
a market process $\big\{ X_{n},Y_{n}\big\}_{n=1}^{\infty}$, where
the current assets' returns $X_{n}$ and other market features $Y_{n}$,
including unobservable ones, depend on past information from the start
to the previous period $n-1$. In other words, the market has unlimited
memory of past events, such that their effects can persist infinitely,
with non-decaying impacts on the future market over an unrestricted
horizon. This phenomenon of unlimited memory can be modeled by a market
process with a transition property, where the distribution of each
pairwise random variable depends on all past events, or alternatively,
it is adapted to the filtration $\left\{ \mathcal{F}_{n}\right\} _{n=1}^{\infty}$
that encompasses increasing past market information. Proposition \ref{Proposition 1}
considers this type of market model with a mild assumption, enabling
the result to be directly derived without the need for stationarity
in the market process.\smallskip 
\begin{rem*}
The introduction of the market model in Proposition \ref{Proposition 1}
is intended as a specific instance for modeling the non-decaying impact
of information and is not central to the main discussion in this paper.
However, it provides a weaker assumption than strict stationarity,
under which the existence of a random optimal constant strategy remains
valid, even though the growth rate in terms of normalized assets'
returns of the log-optimal strategy $n^{-1}\sum_{i=1}^{n}\log\big<b_{i}^{*},U_{i}\big>$
might not be guaranteed. Another useful aspect is that it provides
an alternative proof for Theorem \ref{Theorem 1}, without the need
for Lemma \ref{Lemma 5} and the discrete market condition as specified
in Definition \ref{Def-of-discrete-market}.
\end{rem*}
\begin{prop}
Consider the market process $\big\{ X_{n},Y_{n}\big\}_{n=1}^{\infty}$,
where the distributions of the random pairs $\big(X_{n},Y_{n}\big)$
depend on past information, such that $\big(X_{n},Y_{n}\big)\sim\mathbb{P}\big(X_{1},Y_{1}|\mathcal{F}_{n}\big)$
for all $n\geq2$, where $\mathbb{P}\big(X_{1},Y_{1}|\mathcal{F}_{n}\big)$
is the regular conditional distribution given the $\sigma$-field
$\mathcal{F}_{n}$ admitted by $\big(X_{1},Y_{1}\big)$. Then, there
exists a random optimal constant strategy $\big(b^{\omega}\big)$
such that $\big(W_{n}\big(b^{\omega}\big)-W_{n}\big(b_{n}^{*}\big)\big)\to0$
almost surely, which is also in $L^{1}$ under the additional no-trash
assets condition.\label{Proposition 1}
\end{prop}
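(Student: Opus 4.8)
The plan is to run the argument of Theorem~\ref{Theorem 1}, using the time-homogeneous transition structure in place of stationarity so that the needed convergences come from a single forward martingale-convergence step rather than from an ergodic theorem; this is what lets one dispense with Lemma~\ref{Lemma 5} and the discrete-market hypothesis. As in Theorem~\ref{Theorem 1}, Lemma~\ref{Lemma 1}(a) lets me work with normalized returns and reduces the claim to producing a random constant portfolio $b^{\omega}$ with $\frac1n\sum_{i=1}^{n}\big(\log\langle b^{\omega},U_{i}\rangle-\log\langle b_{i}^{*},U_{i}\rangle\big)\to0$ almost surely. By Lemma~\ref{Lemma 2}(b) the $\limsup$ of this quantity is $\le 0$ for \emph{every} constant strategy, so only the matching lower bound is at issue, and by Cesaro's mean theorem it suffices to prove the diagonal convergence
\[
\log\langle b^{\omega},U_{n}\rangle-\log\langle b_{n}^{*},U_{n}\rangle\longrightarrow 0,\qquad\text{a.s.}
\]

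To construct $b^{\omega}$: the hypothesis says that one fixed kernel, fed the observed history, produces the conditional law of each new pair given its past, so every $\big(X_{n},Y_{n}\big)$ admits regular conditional distributions given $\mathcal{F}_{n}^{X}$ and given $\mathcal{F}_{\infty}^{X}$; hence the log-optimal portfolios $b_{n}^{*}$ exist and may be selected measurably (Lemma~\ref{Lemma 3}). Because the same kernel governs every step, the conditional laws $\mathbb{P}\big(U_{n}\,\big|\,\mathcal{F}_{n}^{X}\big)$ stabilise along the increasing filtration $\mathcal{F}_{n}^{X}\uparrow\mathcal{F}_{\infty}^{X}$: Levy's martingale convergence theorem yields a random limiting distribution $Q_{\infty}^{\omega}$ with $\mathbb{P}\big(U_{n}\,\big|\,\mathcal{F}_{n}^{X}\big)\to Q_{\infty}^{\omega}$ weakly almost surely. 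I take $b^{\omega}$ to be a measurably selected maximizer of $\mathbb{E}\big(\log\langle b,U\rangle\big)$ for $U\sim Q_{\infty}^{\omega}$ (Lemma~\ref{Lemma 3}); this is a genuine time-independent strategy, random only through $\omega$, and Lemma~\ref{Lemma 4}(b)--(c) gives $\langle b_{n}^{*},U\rangle\to\langle b^{\omega},U\rangle$ for $Q_{\infty}^{\omega}$-almost all $U$, almost surely.

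The main obstacle is the passage from this coordinate-wise convergence to the diagonal one, where both the portfolio index and the evaluation point move with $n$: in Theorem~\ref{Theorem 1} this is precisely Lemma~\ref{Lemma 5} (Birkhoff recurrence in a discrete market), and it must here be replaced by an argument driven by the rigidity of the single kernel. The gap between $\mathbb{P}\big(U_{n}\,\big|\,\mathcal{F}_{n}^{X}\big)$ and $Q_{\infty}^{\omega}$ is only a truncation of the conditioning, which Levy's theorem drives to zero along $\mathcal{F}_{n}^{X}\uparrow\mathcal{F}_{\infty}^{X}$ -- a forward statement needing no shift-invariant averaging; combining this with the facts that the maps $u\mapsto\langle b_{n}^{*},u\rangle$ are $1$-Lipschitz uniformly in $n$ (as $b_{n}^{*}$ lies in the simplex) and that all normalized returns lie in the fixed bounded box $\{\,0<u_{i}\le m\,\}$ of \eqref{U is bounded}, one upgrades the $Q_{\infty}^{\omega}$-almost-everywhere convergence to convergence uniform over the relevant compacta and concludes $\langle b_{n}^{*},U_{n}\rangle-\langle b^{\omega},U_{n}\rangle\to0$ almost surely. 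This support-control is the delicate point of the proof. Passing to logarithms is then routine: since $\langle\hat{b},U_{n}\rangle=1$ forces $\max_{i}U_{n,i}\ge 1$, one has $\langle b^{\omega},U_{n}\rangle\ge\min_{j}b^{\omega}_{j}>0$ for all $n$, hence $\langle b_{n}^{*},U_{n}\rangle/\langle b^{\omega},U_{n}\rangle\to 1$ and the displayed diagonal convergence follows; Cesaro's mean theorem then gives $W_{n}\big(b^{\omega}\big)-W_{n}\big(b_{n}^{*}\big)\to0$ almost surely.

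For the $L^{1}$ statement under the no-trash-assets hypothesis of Definition~\ref{Def of no-trash stocks}, the sequence $\big\{\log\langle b_{n},U_{n}\rangle\big\}_{n}$ is then bounded by $\max\{|\log c|,\log m\}$ as in \eqref{uniformly integrable}, so $\big\{W_{n}(b^{\omega})-W_{n}(b_{n}^{*})\big\}_{n}$ is uniformly integrable and Lebesgue's dominated convergence theorem promotes the almost-sure convergence to convergence in $L^{1}$.
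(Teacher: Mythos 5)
Your proposal is essentially the paper's own proof: you construct $b^{\omega}$ as a maximizer for the limiting conditional law $\mathbb{P}\big(U_{1}|\mathcal{F}_{\infty}^{X}\big)$ obtained from Levy's martingale convergence along $\mathcal{F}_{n}^{X}\uparrow\mathcal{F}_{\infty}^{X}$, invoke the weak-continuity results of Lemma~\ref{Lemma 4} to get $\log\langle b_{n}^{*},U_{n}\rangle-\log\langle b^{\omega},U_{n}\rangle\to0$, and finish with Cesaro and dominated convergence under the no-trash-assets bound, exactly as the paper does. The one place you deviate is the delicate diagonal step, where you appeal to uniform convergence on compacta via the uniform Lipschitzness of $u\mapsto\langle b_{n}^{*},u\rangle$ (which strictly speaking only follows from the $Q_{\infty}^{\omega}$-a.e.\ pointwise convergence if that set is dense in the box, e.g.\ when the limiting support is full-dimensional), whereas the paper patches the same point with the Hausdorff-distance argument on the sets of maximizers adapted from \citet{Lam2024}, selecting $b_{n}^{*}\to b_{\infty}^{*}$ as vectors so that the inner products converge for every bounded $U_{n}$; both rest on the boundedness (\ref{U is bounded}) and are at the same level of rigor.
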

\begin{proof}
Based on Lemma \ref{Lemma 4}, if the space of all distributions of
normalized assets' returns $U_{1}$ is equipped with the weak topology,
then for two distributions $\hat{Q}$ and $\tilde{Q}$ that are close
to each other in the weak sense, i.e., their corresponding distance
of expected values satisfies $|\mathbb{E}^{\hat{Q}}\big(f\big(U_{1}\big)\big)-\mathbb{E}^{\tilde{Q}}\big(f\big(U_{1}\big)\big)|<\epsilon$
for arbitrarily small $\epsilon>0$ and any bounded continuous real
function $f(\cdot)$, the portfolios $b^{*}\big(\hat{Q}\big)$ maximizing
$\mathbb{E}^{\hat{Q}}\big(\log\big<b,U_{1}\big>\big)$ and $b^{*}\big(\tilde{Q}\big)$
maximizing $\mathbb{E}^{\tilde{Q}}\big(\log\big<b,U_{1}\big>\big)$
must satisfy $\big(\log\big<b^{*}\big(\hat{Q}\big),U_{1}\big>-\log\big<b^{*}\big(\tilde{Q}\big),U_{1}\big>\big)\to0$
as $\epsilon\to0$ for $\hat{Q}$-almost all and also $\tilde{Q}$-almost
all $U_{1}$. Therefore, by the assumption for the market process,
since the regular conditional distribution $\mathbb{P}\big(U_{1}|\mathcal{F}_{n}^{X}\big)$
converges weakly almost surely to the limiting distribution $\mathbb{P}\big(U_{1}|\mathcal{F}_{\infty}^{X}\big)$
by Lemma \ref{Lemma 4}, we have $\big(\log\big<b_{n}^{*},U_{n}\big>-\log\big<b_{\infty}^{*},U_{n}\big>\big)\to0$
for $\mathbb{P}\big(U_{1}|\mathcal{F}_{n}^{X}\big)$-almost all $U_{n}$,
almost surely. Thus, the desired almost sure convergence $\big(W_{n}\big(b^{\omega}\big)-W_{n}\big(b_{n}^{*}\big)\big)\to0$
follows with the chosen $b^{\omega}\coloneqq b_{\infty}^{*}$ by applying
the Cesaro mean theorem as in the proof of Theorem \ref{Theorem 1}.
Additionally, the $L^{1}$ convergence can be easily derived using
the Lebesgue's dominated convergence theorem.\smallskip

Alternatively, the almost sure convergence $\big(W_{n}\big(b^{\omega}\big)-W_{n}\big(b_{n}^{*}\big)\big)\to0$
can be demonstrated more clearly than previously described by adapting
the arguments from \citet{Lam2024} (Lemma 1 and Lemma 2) to the portfolio\textquoteright s
log-return. Specifically, we sketch the proof as follows: Define $\left\{ b^{*}\big(Q\big)\right\} $
as the set of all maximizers for the corresponding $\mathbb{E}^{Q}\big(\log\big<b,U_{1}\big>\big)$.
Since the regular conditional distributions $\mathbb{P}\big(U_{n}|\mathcal{F}_{n}^{X}\big)\to\mathbb{P}\big(U_{1}|\mathcal{F}_{\infty}^{X}\big)$,
the Hausdorff distance between the involved sets, denoted by $d_{H}\big(b^{*}\big(\mathbb{P}\big(U_{n}|\mathcal{F}_{n}^{X}\big)\big),b^{*}\big(\mathbb{P}\big(U_{1}|\mathcal{F}_{\infty}^{X}\big)\big)\big)$,
must converge to zero as $n\to\infty$, where the involved sets are
convex and compact. Therefore, for each maximizer $b_{\infty}^{*}$,
there always exists a maximizer $b_{n}^{*}$ such that $\big(b_{n}^{*}-b_{\infty}^{*}\big)\to0$,
so $\big(\log\big<b_{n}^{*},U_{n}\big>-\log\big<b_{\infty}^{*},U_{n}\big>\big)\to0$
for $\mathbb{P}\big(U_{1}|\mathcal{F}_{n}^{X}\big)$-almost all $U_{n}$
(noting critically that all $U_{n}$ are bounded, as in (\ref{U is bounded})),
which holds for any $b_{n}^{*}$.
\end{proof}
\textbf{Market with decaying impact of information}. Although a real
market often reflects memories of the past, sometimes the impact of
past events does not last very long on future market developments,
which is probably caused by the limited memory of agents and institutions
in the market. In this kind of market, future assets' returns do not
immediately incorporate all available information but rather gradually
absorb it over time until the past impact has no influence. A market
process with a Markov property of finite order should be well-suited
for modeling the persistence of past impacts on the future. Let us
consider the market process $\big\{ X_{n},Y_{n}\big\}_{n=1}^{\infty}$
as a Markov chain of order $h\geq1$, so that the variable $\big(X_{n},Y_{n}\big)$
depends on the previous $h$ variables, and its conditional distribution
given the $\sigma$-field $\mathcal{F}_{n}$ is:
\begin{equation}
\mathbb{P}\big(X_{n},Y_{n}|\mathcal{F}_{n}\big)=\mathbb{P}\big(X_{n},Y_{n}|\sigma\big(X_{n-h}^{n-1},Y_{n-h}^{n-1}\big)\big),\,\forall n>h,\label{Markov property}
\end{equation}
where the order $h$ represents the length of the market memory of
past information. The length of the market memory is higher when the
impact of past information decays at a slower rate, i.e., it is more
persistent. If the market is also stationary, then Theorem \ref{Theorem 2}
reduces to Corollary \ref{Corollary 2}.\smallskip

\textbf{Memoryless or past-independent market}. In the case of the
market having no memory of past events, the side information available
at the same period, revealed right before the realization of assets'
returns, is the only factor affecting the assets' returns. This means
the market is highly efficient in the sense that it absorbs all available
side information immediately and leaves no capacity for past information
to impact future market developments. The memoryless nature of the
market could be modeled as a special case of the market with a Markov
property of order $h=0$ in (\ref{Markov property}), which implies
the (joint) distributions of the variables $\big(X_{n},Y_{n}\big)$
are independent of each other. Moreover, as in the case of a Markov
process with finite order, Theorem \ref{Theorem 2} also reduces to
Corollary \ref{Corollary 2} if the distributions of the variables
are identical.
\begin{cor}
Consider a process $\big\{ X_{n},Y_{n}\big\}_{n=1}^{\infty}$ which
is either identically and independently distributed or a stationary
Markov chain of any finite order $h\geq1$ according to (\ref{Markov property}).
Then, there exists either a non-random or random optimal constant
strategy $\big(b^{*}\big)$, respectively, such that $\big(W_{n}\big(b^{*}\big)-W_{n}\big(b_{n}^{*}\big)\big)\to0$
almost surely, which is also in $L^{1}$ under the no-trash assets
condition.\label{Corollary 2}
\end{cor}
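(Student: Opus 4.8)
The plan is to reduce each of the two listed cases to an application of Theorem \ref{Theorem 2}, which already delivers the almost-sure (and, under the no-trash-assets condition, $L^1$) convergence $\big(W_n(b^*)-W_n(b_n^*)\big)\to 0$ whenever the process $\big\{X_n,Y_n\big\}_{n=1}^\infty$ is stationary, together with the assertion that the random optimal constant portfolio $b^*$ becomes non-random under ergodicity. So the only work is to verify stationarity in each case and to identify when ergodicity holds.

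First I would handle the i.i.d.\ case. If $\big\{X_n,Y_n\big\}_{n=1}^\infty$ is i.i.d., it is in particular stationary, so Theorem \ref{Theorem 2} applies and gives a random optimal constant strategy $\big(b^*\big)$ with $\big(W_n(b^*)-W_n(b_n^*)\big)\to 0$ a.s. Moreover an i.i.d.\ process is ergodic (the invariant $\sigma$-field $\mathcal{I}$ is trivial up to null sets, by the Kolmogorov zero-one law), so the maximizer of $\mathbb{E}\big(\log\langle b,U_1\rangle\mid\mathcal{I}\big)=\mathbb{E}\big(\log\langle b,U_1\rangle\big)$ can be chosen deterministically; hence $b^*$ is non-random, as claimed. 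The $L^1$ statement under no-trash assets is then immediate from the corresponding clause of Theorem \ref{Theorem 2}.

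Next, the stationary Markov chain of finite order $h\ge 1$. By hypothesis the process is stationary, so Theorem \ref{Theorem 2} again applies verbatim and yields a random optimal constant strategy $\big(b^*\big)$, namely a measurable maximizer of $\max_{b\in\mathcal{B}^m}\mathbb{E}\big(\log\langle b,U_1\rangle\mid\mathcal{I}\big)$, with $\big(W_n(b^*)-W_n(b_n^*)\big)\to 0$ a.s., and in $L^1$ under the no-trash-assets condition. Here one does not assert ergodicity in general — a stationary finite-order Markov chain need not be ergodic (its state space may decompose into several recurrent classes or the chain may be periodic), so $\mathcal{I}$ may be non-trivial and $b^*$ genuinely random, which is exactly what the statement allows ("random optimal constant strategy"). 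The memoryless ($h=0$) remark in the text is subsumed: if in addition the distributions are identical, the process is i.i.d.\ and the non-random conclusion of the first case applies.

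The only point requiring a word of care — the "main obstacle", though it is mild — is the ergodicity claim in the i.i.d.\ case: one must invoke that an i.i.d.\ sequence generates a metrically transitive shift, so that the invariant $\sigma$-field is $\mathbb{P}$-trivial and Theorem \ref{Theorem 2}'s ergodic clause fires, delivering non-randomness of $b^*$. Everything else is a direct citation of Theorem \ref{Theorem 2} once stationarity is observed. I would therefore write the proof as: (i) note stationarity holds in both cases; (ii) apply Theorem \ref{Theorem 2} to obtain $\big(b^*\big)$ and the convergence; (iii) in the i.i.d.\ case add the zero-one-law argument to upgrade $b^*$ to non-random; (iv) remark that the $L^1$ refinement is inherited from Theorem \ref{Theorem 2} under the no-trash-assets hypothesis.
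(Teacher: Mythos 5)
Your proposal is correct and follows essentially the same route as the paper: both reduce each case to a direct citation of Theorem \ref{Theorem 2}, observing that the i.i.d.\ case is stationary and ergodic (trivial invariant $\sigma$-field, so $b^{*}$ is non-random) while the stationary Markov case only needs stationarity, with the $L^{1}$ refinement inherited under the no-trash-assets condition. Your added zero-one-law remark and the note that a stationary finite-order Markov chain need not be ergodic are harmless elaborations of what the paper states more tersely.
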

\begin{proof}
The result is simply a direct consequence of Theorem \ref{Theorem 2},
taking into account that an i.i.d. process $\big\{ X_{n},Y_{n}\big\}_{n=1}^{\infty}$
is also stationary and ergodic, since the only invariant $\sigma$-field
is the trivial one, i.e., $\mathcal{I}=\left\{ \Omega,\emptyset\right\} $,
so the constant optimal strategy $\big(b^{*}\big)$ is non-random.
\end{proof}
According to Corollary \ref{Corollary 2}, if the assets' returns
depend solely on the side market information such that the random
pairs are i.i.d., the non-random optimal constant strategy coincides
with the log-optimal strategy, which is selected as a maximizer for
$\max_{b\in\mathcal{B}^{m}}\mathbb{E}\big(\log\big<b,U_{1}\big>\big)$.
This result extends the well-known theorem for the i.i.d. process
of sole assets' returns $\left\{ X_{n}\right\} _{n=1}^{\infty}$ in
the literature, which asserts the existence of an optimal constant
strategy achieving the maximal limiting growth rate. Intuitively,
the extended result means that the observation of i.i.d. side information,
which affects the periodic assets' returns in a time-invariant dependency,
almost surely does not improve decision-making over being unaware
of the side information and dependence structure, in terms of growth
rate over time; thus, even unobservable market features do not matter.
However, if the side information is i.i.d., but the dependence structure
between them and the assets' returns changes over time, then the conditional
distribution $\mathbb{P}\big(X_{n}|\sigma\big(Y_{n}\big)\big)$ varies
with $n$, and the results of Corollary \ref{Corollary 2} are no
longer valid due to the violated assumption.\smallskip 
\begin{rem*}
In recent literature, \citet{Bhatt2023} propose a strategy construction
using fully observable i.i.d. side market information, which can be
considered from a different perspective for comparison. Additionally,
\citet{Cuchiero2019} investigate a stationary and ergodic Markov
chain (order $h=1$) for the market process of only assets' returns,
a sub-case of our model without side information. The authors considers
the space of functions depending on assets' returns $X$, yielding
a portfolio $\text{b\ensuremath{\in}}\mathcal{B}^{m}$, where there
exists an optimal function generating a strategy with the maximal
limiting growth rate, depending on the process $\left\{ X_{n}\right\} _{n=1}^{\infty}$.
They then propose a strategy by mixing all the functionally generated
portfolios over time according to a fixed initial distribution on
the function space, such that it approaches the maximal limiting growth
rate of the strategy generated by the optimal function. However, this
method of function mixing seems impractical. A closer inspection reveals
that the log-optimal portfolios, defined as $\operatorname*{argmax}_{b\in\mathcal{B}^{m}}\mathbb{E}\big(\log\big<b,U_{1}\big>|\sigma\big(X_{0}\big)\big)$,
are optimal in the considered space of functions, as shown by Corollary
\ref{Corollary 2}. Moreover, the next section presents a learning
algorithm that provides a practical way of mixing functions to create
a strategy comparable to the log-optimal one for any finite-order
($h\geq1$) Markov or i.i.d. market process with side information.
\end{rem*}

\section{Learning algorithms under unknown distributions}

Although Theorem \ref{Theorem 2} shows the existence of a random
(or non-random, with additional ergodicity) optimal constant strategy,
which does not rely on market events realized over time as the log-optimal
strategy does, or on being conditioned on the infinitely past events,
as the random optimal constant strategy in Theorem \ref{Theorem 1},
it still depends on the knowledge of the conditional distribution
based on the invariant events of the market process. As discussed
in the previous section, the true ergodic mode of the stationary market
process cannot be identified by observing any finite past information
rather than the infinite past, thus an optimal constant strategy is
also theoretically unidentified with any finite time horizon. Meanwhile,
although the log-optimal strategy could detect the ergodic mode of
the process, it requires knowing the infinite-dimensional distribution.
However, the established existence of an optimal constant strategy
clearly helps reduce the challenge of knowing such an infinite-dimensional
distribution, which requires full observability of all market information,
to a much easier and more feasible task. Therefore, in this section,
two practical learning methods of algorithms are compared: one replicating
the optimal constant strategy and the other replicating the log-optimal
strategy. Additionally, the extra $L^{1}$-convergence in the theorems
is simply understood as a condition to also imply the optimality in
terms of the expected growth rate of the proposed strategies, as in
Corollary \ref{Corollary 1}.

\subsection{Market-knowledge-free learning algorithm}

In a stationary market, since the random optimal constant strategy
is invariant in time, it does not need to observe any side information
or know the possibly latent dependence structures of the process.
Although the optimal constant strategy cannot be identified beforehand,
a learning algorithm-based dynamic strategy could gradually approach
the growth rate of this constant strategy over time. Moreover, because
this type of algorithm inherits the advantages of a constant strategy,
it only requires learning from past assets' returns, which are always
available and not costly, but not from any side information of past
market events. It is worth noting that there are several algorithms
in the literature for this problem, with different assumptions and
complexities, which are not the focus of this paper, as surveyed in
\citet{Erven2020}. However, considering stationarity, the following
two learning algorithms are proposed for strategy development.\smallskip

\textbf{Cover\textquoteright s Universal Strategy}. The Universal
strategy, originally proposed in \citet{Cover1991,Cover1996,Cover2006,Ordentlich1996},
and extended with side information in \citet{Cover1996a}, aims to
construct a strategy that grows asymptotically at the same rate as
the best strategy within the restricted set of constant strategies,
for each finite sequence of realizations of assets' returns, from
the starting point to the last observations. This learning algorithm
is model-free and does not impose any assumptions on the market process.
Consequently, the growth rate of the Universal strategy will approach
that of the log-optimal strategy as time progresses, provided that
there exists an optimal constant strategy comparable to the log-optimal
one in the market. Moreover, it is important to note that Cover and
many subsequent works in the literature emphasize that this strategy
will almost surely attain the optimal growth rate only in an i.i.d.
market process of sole assets' returns, which is trivially a special
case of our more general theorem. Hence, our paper extends the optimality
of the Universal strategy to broader scenarios involving markets with
side information, far beyond the simple i.i.d. process.\smallskip

In detail, the construction of the Universal portfolio strategy is
as follows:
\[
\bar{b}_{1}\coloneqq\big(1/m,...,1/m\big)\text{ and }\bar{b}_{n}\coloneqq{\displaystyle \dfrac{{\displaystyle \int_{\mathcal{B}^{m}}{\displaystyle b}}S_{n-1}\big(b\big)\mu\big(b\big)\dif b}{{\displaystyle \int_{\mathcal{B}^{m}}S_{n-1}\big(b\big)\mu\big(b\big)\dif b}}},\,\forall n\geq2,
\]
where $\mu(\cdot)$ denotes the fixed uniform density. Essentially,
the algorithm distributes the current cumulative wealth according
to the previous weights of cumulative wealth of all constant strategies,
thereby learning only from the past portfolio returns of these component
strategies. Since the algorithm operates without additional assumptions
on assets' returns, the Universal portfolio can be constructed directly
using the variables $X_{n}$, rather than their normalizations. However,
the main drawback of this algorithm is its computational difficulty
in practice, particularly when the portfolio involves a large number
of invested assets (for example, $m\geq3$), which motivates the development
of a simpler algorithm with the same optimality, as described next.\smallskip 

\textbf{Empirical distribution-based log-optimal strategy}. This algorithm
is naturally developed from the argument in the proof of Theorem \ref{Theorem 2},
but with an extension to include the market with non-decaying impact
of information. In general, this learning algorithm is much simpler
than the Universal strategy, as it constructs the strategy by sequentially
solving for the log-optimal portfolios corresponding to the empirical
distribution of realizations of assets' returns. Despite the algorithm's
simplicity, such an empirical strategy, as proposed below, is comparable
with the Universal strategy in general markets, as long as there exists
an optimal constant strategy that is comparable to the log-optimal
one, as demonstrated in Theorem \ref{Theorem 3}.\smallskip 

Specifically, we introduce the dynamic strategy $\big(\bar{b}_{n}\big)$
constructed as follows:
\begin{equation}
\bar{b}_{1}\coloneqq\big(1/m,...,1/m\big)\text{ and }\mathbb{E}^{P_{n-1}}\big(\log\big<\bar{b}_{n},U\big>\big)\coloneqq\max_{b\in\mathcal{B}^{m}}\mathbb{E}^{P_{n-1}}\big(\log\big<b,U\big>\big),\,\forall n\geq2,\label{empirical log-strategy 1}
\end{equation}
where $\mathbb{E}^{P_{n}}(\cdot)$ denotes the expected value corresponding
to $U\sim P_{n}$, which is defined as:
\begin{equation}
P_{n}\left(A\right)\coloneqq\dfrac{1}{n}\sum_{i=1}^{n}\mathbb{I}_{U_{i}}\left(A\right),\,\forall A\subseteq\mathcal{U},\forall n,\label{empirical log-strategy 2}
\end{equation}
i.e., an empirical distribution of the sequence of distributions $\big\{ P_{n}\big\}_{n=1}^{\infty}$.
Noting that the empirical distributions are computed simply using
the normalized realizations of assets' returns.\smallskip 
\begin{rem*}
In the literature, a similar learning algorithm with a comparable
mechanism is proposed by \citet{Morvai1991}. However, this paper
only proposes a strategy that captures the growth rate of the log-optimal
strategy in a stationary and ergodic market without side information,
assuming integrability. Moreover, the strategy is stated to be optimal
only if the market is i.i.d., as concluded by the author based on
Cover's assessment. In comparison, Theorem \ref{Theorem 3} demonstrates
the optimality of the empirical distribution-based log-optimal strategy
in a more general market, without requiring stationarity, and provides
a simpler proof that does not rely on the integrability assumption.
\end{rem*}
\begin{lem}
Consider a sequence of positive real values $\big\{ a_{n}\big\}_{n=1}^{\infty}$.
If the following holds:\label{AM-GM lemma}
\[
\lim_{n\to\infty}\big(\sum_{i=1}^{n}a_{i}-n\big(\prod_{i=1}^{n}a_{i}\big)^{1/n}\big)=0,
\]
then $\max_{1\leq i,j\leq n}|a_{i}-a_{j}|\to0$ as $n\to\infty$.
\end{lem}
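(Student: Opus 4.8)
The plan is to rewrite the hypothesis using the geometric mean together with the function $\phi(x):=x-1-\log x$, which is continuous and nonnegative on $(0,\infty)$, vanishes only at $x=1$, and satisfies $\phi(x)\to+\infty$ both as $x\to 0^+$ and as $x\to\infty$. Put $G_n:=\big(\prod_{i=1}^n a_i\big)^{1/n}$. Since $\sum_{i=1}^n\log(a_i/G_n)=0$ by the very definition of $G_n$, one obtains the exact identity
\[
\sum_{i=1}^n a_i-nG_n \;=\; G_n\sum_{i=1}^n\phi\!\big(a_i/G_n\big),
\]
whose summands are all $\ge 0$. Hence the assumed convergence to $0$ forces the stronger statement $\displaystyle\max_{1\le i\le n}G_n\,\phi(a_i/G_n)\to 0$ as $n\to\infty$; in particular, picking the fixed index $i=1$, we have $G_n\,\phi(a_1/G_n)\to 0$.

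The next step — which I expect to be the crux — is to deduce from this single term that the scale $G_n$ stays in a fixed compact subinterval of $(0,\infty)$, rather than collapsing to $0$ or escaping to $\infty$ (in which case smallness of $\phi(a_i/G_n)$ would carry no information about the $a_i$ themselves). Consider $h(g):=g\,\phi(a_1/g)=a_1-g-g\log(a_1/g)$ for $g>0$: it is continuous, $h(a_1)=0$ and $h(g)>0$ for $g\ne a_1$, while $h(g)\to a_1>0$ as $g\to 0^+$ and $h(g)\to+\infty$ as $g\to\infty$. Therefore the sublevel set $\{g>0:\,h(g)\le a_1/2\}$ is a compact interval $[g_-,g_+]\subset(0,\infty)$ containing $a_1$, and since $h(G_n)=G_n\phi(a_1/G_n)\to 0$ we get $G_n\in[g_-,g_+]$ for all large $n$. (In fact, because $h$ has $a_1$ as its unique minimiser and is coercive at both ends, $h(G_n)\to 0$ already gives $G_n\to a_1$; applying the same reasoning with $a_k$ in place of $a_1$ for each fixed $k$ yields $a_1=a_2=\cdots$, which is the content of the lemma once one notes that $\max_{1\le i,j\le n}|a_i-a_j|$ is non-decreasing in $n$.)

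With $0<g_-\le G_n\le g_+<\infty$ in hand, for every $i\le n$ we bound $\phi(a_i/G_n)=\tfrac1{G_n}\,G_n\phi(a_i/G_n)\le \tfrac1{g_-}\sum_{j=1}^n G_n\phi(a_j/G_n)=\tfrac1{g_-}\big(\sum_j a_j-nG_n\big)=:\eta_n$, with $\eta_n\to 0$ uniformly in $i$. A brief compactness argument gives a modulus $\rho$ with $\rho(\eta)\downarrow 0$ as $\eta\downarrow 0$ such that $\phi(x)\le\eta\Rightarrow|x-1|\le\rho(\eta)$: otherwise there would exist $x_k$ with $\phi(x_k)\to 0$ yet $|x_k-1|\ge\delta>0$, and since $\phi$ blows up at $0$ and at $\infty$ the sequence $x_k$ would be confined to a compact subset of $(0,\infty)$, producing a subsequential limit $x^{\ast}\ne 1$ with $\phi(x^{\ast})=0$ — impossible. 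Consequently $|a_i-G_n|=G_n\,|a_i/G_n-1|\le g_+\,\rho(\eta_n)$ for all $i\le n$, and therefore
\[
\max_{1\le i,j\le n}|a_i-a_j|\;\le\;2g_+\,\rho(\eta_n)\;\longrightarrow\;0 .
\]
Everything besides the confinement of $G_n$ is routine — the identity is a one-line computation and the modulus $\rho$ is standard continuity/compactness; the only place where the hypothesis does genuine work is in pinning down the scale $G_n$ through the persistent term $a_1$.
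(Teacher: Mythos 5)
Your proof is correct, and it takes a genuinely different route from the paper's. The paper argues by a direct AM--GM manipulation: setting $\delta_n=\max_{1\le i,j\le n}\big(\sqrt{a_i}-\sqrt{a_j}\big)^2$ and replacing the two extremal terms $a_{i_n},a_{j_n}$ by $2\sqrt{a_{i_n}a_{j_n}}$ (which preserves the product), it shows $\sum_{i=1}^n a_i-n\big(\prod_{i=1}^n a_i\big)^{1/n}\ge\delta_n$, and then converts to $|a_i-a_j|$ by invoking $\big(\sqrt{a_i}-\sqrt{a_j}\big)^2\ge|a_i-a_j|$ --- an inequality that is actually reversed for positive reals, since $|a_i-a_j|=|\sqrt{a_i}-\sqrt{a_j}|\,(\sqrt{a_i}+\sqrt{a_j})\ge\big(\sqrt{a_i}-\sqrt{a_j}\big)^2$; the paper's final step therefore needs the extra (easy) observation that $\delta_n\to0$ bounds the sequence, e.g.\ via $\sqrt{a_k}\to\sqrt{a_1}$, before square-root differences can be traded for differences. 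Your decomposition $\sum_i a_i-nG_n=G_n\sum_i\phi(a_i/G_n)$ with $\phi(x)=x-1-\log x$ sidesteps this entirely: the nonnegativity of the summands, the coercivity of $g\mapsto g\,\phi(a_1/g)$ that pins $G_n$ to a compact range (indeed forces $G_n\to a_1$), and the modulus-of-continuity step are all airtight, and your parenthetical remark even yields the sharper fact that $a_1=a_2=\cdots$, i.e.\ the maximum in the statement is identically zero. The trade-off is clear: the paper's argument is a three-line computation (once repaired by the boundedness remark), whereas yours is longer but localizes exactly where the hypothesis does work --- in controlling the scale $G_n$ --- and is the more robust of the two.
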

\begin{proof}
Let denote $\delta_{n}\coloneqq\big(\sqrt{a_{i_{n}}}-\sqrt{a_{j_{n}}}\big)^{2}\coloneqq\max_{1\leq i,j\leq n}\big(\sqrt{a_{i}}-\sqrt{a_{j}}\big)^{2}$
for all $n$, which are strictly positive if there exist two distinct
terms in the finite sequence $\big\{ a_{i}\big\}_{i=2}^{n}$. Then
we have:
\begin{align*}
\sum_{i=1}^{n}a_{i}-n\big(\prod_{i=1}^{n}a_{i}\big)^{1/n} & =\delta_{n}+\sum_{i\notin\left\{ i_{n},j_{n}\right\} }a_{i}+2\sqrt{a_{i_{n}}a_{j_{n}}}-n\big(\prod_{i=1}^{n}a_{i}\big)^{1/n}\\
 & \geq\delta_{n}+n\big(\prod_{i\notin\left\{ i_{n},j_{n}\right\} }a_{i}a_{i_{n}}a_{j_{n}}\big)^{1/n}-n\big(\prod_{i=1}^{n}a_{i}\big)^{1/n}=\delta_{n},
\end{align*}
due to the AM-GM inequality. Moreover, since $\big(\sqrt{a_{i}}-\sqrt{a_{j}}\big)^{2}\geq|a_{i}-a_{j}|$
for all $i,j\in\left\{ 1,...,n\right\} $, we consequently obtain
the following limit:
\begin{align*}
0=\lim_{n\to\infty}\big(\sum_{i=1}^{n}a_{i}-n\big(\prod_{i=1}^{n}a_{i}\big)^{1/n}\big) & \geq\lim_{n\to\infty}\max_{1\leq i,j\leq n}\big(\sqrt{a_{i}}-\sqrt{a_{j}}\big)^{2}\\
 & \geq\lim_{n\to\infty}\max_{1\leq i,j\leq n}|a_{i}-a_{j}|=0,
\end{align*}
which is the desired result and completes the proof.
\end{proof}
\begin{thm}
Assume that there exists an optimal constant strategy $\big(b^{\omega}\big)$,
which can be either random or non-random, in the market process $\big\{ X_{n},Y_{n}\big\}_{n=1}^{\infty}$
(not necessarily stationary), such that $\big(W_{n}\big(b^{\omega}\big)-W_{n}\big(b_{n}^{*}\big)\big)\to0$
almost surely. Then, the strategy $\big(\bar{b}_{n}\big)$, constructed
according to (\ref{empirical log-strategy 1}) and (\ref{empirical log-strategy 2}),
is also optimal in terms of asymptotic growth rate, as the log-optimal
strategy, satisfying:\label{Theorem 3}
\[
\lim_{n\to\infty}\big(W_{n}\big(\bar{b}_{n}\big)-W_{n}\big(b_{n}^{*}\big)\big)=\lim_{n\to\infty}\big(W_{n}\big(\bar{b}_{n}\big)-W_{n}\big(b^{\omega}\big)\big)=0,\text{ a.s},
\]
which are also in $L^{1}$ under the additional condition of no-trash
assets. As a consequence, the strategy $\big(\bar{b}_{n}\big)$ is
not only optimal in a stationary general market, but also in one with
non-decaying impact of past information on future market events, as
described in Proposition \ref{Proposition 1}.
\end{thm}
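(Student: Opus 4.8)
\emph{Plan.} The strategy is to mimic the proof of Theorem~\ref{Theorem 2}: reduce everything to a one-sided inequality for the normalized returns and then bound the \emph{diagonal} sum $\sum_{i=1}^{n}\log\big<\bar{b}_{i},U_{i}\big>$ produced by the algorithm against the best fixed portfolio in hindsight. By Lemma~\ref{Lemma 2}(b) we already have $\limsup_{n\to\infty}\big(W_{n}\big(\bar{b}_{n}\big)-W_{n}\big(b_{n}^{*}\big)\big)\leq 0$ almost surely, and since $\big(W_{n}\big(b^{\omega}\big)-W_{n}\big(b_{n}^{*}\big)\big)\to 0$ by assumption, it is enough to prove $\liminf_{n\to\infty}\big(W_{n}\big(\bar{b}_{n}\big)-W_{n}\big(b^{\omega}\big)\big)\geq 0$ almost surely; by Lemma~\ref{Lemma 1}(a) this concerns only $n^{-1}\sum_{i=1}^{n}\big(\log\big<\bar{b}_{i},U_{i}\big>-\log\big<b^{\omega},U_{i}\big>\big)$. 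Set $M_{n}\coloneqq\max_{b\in\mathcal{B}^{m}}\sum_{i=1}^{n}\log\big<b,U_{i}\big>=\sum_{i=1}^{n}\log\big<\bar{b}_{n+1},U_{i}\big>$ with $M_{0}\coloneqq 0$; the defining property (\ref{empirical log-strategy 1})--(\ref{empirical log-strategy 2}) of $\bar{b}_{n+1}$ as the empirical log-optimal portfolio for $P_{n}$ gives both $\sum_{i=1}^{n}\log\big<b^{\omega},U_{i}\big>\leq M_{n}$ and the ``be-the-leader'' sandwich $\log\big<\bar{b}_{n},U_{n}\big>\leq M_{n}-M_{n-1}\leq\log\big<\bar{b}_{n+1},U_{n}\big>\leq\log\big(m\big)$, the last step by Lemma~\ref{Lemma 1}. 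Telescoping yields
\[
0\leq M_{n}-\sum_{i=1}^{n}\log\big<\bar{b}_{i},U_{i}\big>\leq\sum_{i=1}^{n}\big(\log\big<\bar{b}_{i+1},U_{i}\big>-\log\big<\bar{b}_{i},U_{i}\big>\big),
\]
a sum of non-negative terms, so the theorem reduces to showing this right-hand side is $o(n)$ almost surely, and by Cesàro's mean theorem it suffices that $\big(\log\big<\bar{b}_{n+1},U_{n}\big>-\log\big<\bar{b}_{n},U_{n}\big>\big)\to 0$ almost surely.

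\emph{The crux.} For this last convergence the plan is to use the weak convergence of the empirical distributions $\big\{ P_{n}\big\}_{n=1}^{\infty}$ together with the stability part of Lemma~\ref{Lemma 4}. In the stationary case, Birkhoff's ergodic theorem applied to bounded continuous test functions --- exactly as in the proof of Theorem~\ref{Theorem 2} --- shows $P_{n}\to Q_{\infty}$ weakly almost surely with $Q_{\infty}=\mathbb{P}\big(U_{1}|\mathcal{I}\big)$, while in the market of Proposition~\ref{Proposition 1} the increments $\mathbb{I}_{\left\{ U_{i}\in A\right\} }-\mathbb{P}\big(U_{i}\in A|\mathcal{F}_{i}^{X}\big)$ are bounded martingale differences, so a martingale strong law combined with the weak convergence $\mathbb{P}\big(U_{1}|\mathcal{F}_{n}^{X}\big)\to\mathbb{P}\big(U_{1}|\mathcal{F}_{\infty}^{X}\big)$ from Lemma~\ref{Lemma 4} again gives $P_{n}\to Q_{\infty}$ weakly almost surely. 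Letting $b_{\infty}$ be a maximizer of $\mathbb{E}^{Q_{\infty}}\big(\log\big<b,U\big>\big)$, Lemma~\ref{Lemma 4}(a)--(b) yields $\big<\bar{b}_{n},U\big>\to\big<b_{\infty},U\big>$ for $Q_{\infty}$-almost all $U$; since the (conditional) law of each $U_{k}$ coincides with $Q_{\infty}$, this upgrades to $\big(\log\big<\bar{b}_{n},U_{k}\big>-\log\big<b_{\infty},U_{k}\big>\big)\to 0$ almost surely for every fixed $k$, and then Lemma~\ref{Lemma 5} (or, in the setting of Proposition~\ref{Proposition 1}, the Hausdorff-distance argument used there, which relies on the boundedness of $U_{n}$ in (\ref{U is bounded})) promotes it to the diagonal statement $\big(\log\big<\bar{b}_{n},U_{n}\big>-\log\big<b_{\infty},U_{n}\big>\big)\to 0$ almost surely; applying the same with $n+1$ in place of $n$ and subtracting gives the needed convergence. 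The role of Lemma~\ref{AM-GM lemma} here is to convert the arithmetic-mean bound supplied by the Kuhn--Tucker conditions for the empirical optimizer (Lemma~\ref{Lemma 2}(a) applied to $P_{n}$) into the matching bound on the logarithmic/geometric means that enter the growth rate, which is exactly what closes the squeeze $W_{n}\big(b^{\omega}\big)\leq n^{-1}M_{n}$ and $n^{-1}\sum_{i=1}^{n}\log\big<\bar{b}_{i},U_{i}\big>\leq n^{-1}M_{n}$ into an equality of limits.

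\emph{Wrap-up and main obstacle.} Under the no-trash-assets condition the uniform bound (\ref{uniformly integrable}) makes $\big\{ W_{n}\big(\bar{b}_{n}\big)\big\}$, $\big\{ W_{n}\big(b^{\omega}\big)\big\}$ and $\big\{ W_{n}\big(b_{n}^{*}\big)\big\}$ uniformly integrable in terms of the normalized returns, so the almost sure convergence upgrades to $L^{1}$ by Lebesgue's dominated convergence theorem, exactly as in Theorem~\ref{Theorem 2}, and in the spirit of Corollary~\ref{Corollary 1} this also gives optimality of $\big(\bar{b}_{n}\big)$ in expected growth rate. I expect the genuine obstacle to be the diagonal convergence $\big(\log\big<\bar{b}_{n},U_{n}\big>-\log\big<b_{\infty},U_{n}\big>\big)\to 0$: since $\bar{b}_{n}$ is committed before $U_{n}$ is revealed, $\sum_{i=1}^{n}\log\big<\bar{b}_{i},U_{i}\big>$ is a genuinely online quantity rather than a fixed-portfolio average, and the stability of the empirical maximizer under the $O(1/n)$ passage from $P_{n-1}$ to $P_{n}$ is delicate precisely when $\log\big<b,U\big>$ is unbounded below, i.e.\ without the no-trash condition --- which is why both the generic-support mechanism of Lemma~\ref{Lemma 5} and the a priori boundedness of $U_{n}$ in (\ref{U is bounded}) are needed to carry it through.
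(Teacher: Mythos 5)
Your follow-the-leader reduction is sound and genuinely different in structure from the paper's argument: the sandwich $\log\langle\bar b_n,U_n\rangle\le M_n-M_{n-1}\le\log\langle\bar b_{n+1},U_n\rangle\le\log m$, the telescoping bound $0\le M_n-\sum_{i\le n}\log\langle\bar b_i,U_i\rangle\le\sum_{i\le n}\big(\log\langle\bar b_{i+1},U_i\rangle-\log\langle\bar b_i,U_i\rangle\big)$, and the inequality $\sum_{i\le n}\log\langle b^{\omega},U_i\rangle\le M_n$ combined with Lemma \ref{Lemma 2}(b) correctly reduce the whole statement to the single stability claim $\log\langle\bar b_{n+1},U_n\rangle-\log\langle\bar b_n,U_n\rangle\to 0$ a.s. In the stationary case your treatment of that claim (Birkhoff, Lemma \ref{Lemma 4}, Lemma \ref{Lemma 5}) coincides with the paper's first part, and your martingale-SLLN derivation of $P_n\to\mathbb{P}\big(U_1|\mathcal{F}_{\infty}^{X}\big)$ weakly a.s.\ in the market of Proposition \ref{Proposition 1} is a pleasant supplement that the paper does not spell out.

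The gap is the general case, which is the actual hypothesis of Theorem \ref{Theorem 3}: only the existence of some $\big(b^{\omega}\big)$ with $\big(W_n\big(b^{\omega}\big)-W_n\big(b_n^{*}\big)\big)\to0$ a.s.\ is assumed, with no stationarity and no transition structure as in Proposition \ref{Proposition 1}. Your crux step is established only through weak convergence of the empirical distributions $P_n$ to a limit $Q_\infty$, and that convergence is exactly what is unavailable under the bare hypothesis --- the paper states this explicitly at the start of its general-case argument. The paper closes this case by a mechanism that never uses convergence of $P_n$: Jensen's inequality plus the Kuhn--Tucker condition for the empirical optimizer (Lemma \ref{Lemma 2}(a) applied to $P_n$) show that both the arithmetic and geometric means of the ratios $\langle b^{\omega},U_i\rangle/\langle b^{*,n},U_i\rangle$ tend to $1$, and Lemma \ref{AM-GM lemma} then forces these ratios to converge uniformly over $i\le n$ to $1$, giving $\log\langle b^{\omega},U_n\rangle-\log\langle b^{*,n},U_n\rangle\to0$; a separate consecutive-stability argument (the $O(1/n)$ weak-topology closeness of $P_{n-1}$ and $P_n$, continuity from Lemma \ref{Lemma 4}, and the uniqueness of $\langle\cdot,U\rangle$ across maximizers) then handles $\log\langle b^{*,n},U_n\rangle-\log\langle b^{*,n-1},U_n\rangle\to0$, which is precisely your follow-the-leader stability term. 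In your write-up Lemma \ref{AM-GM lemma} is invoked only descriptively and never actually enters the argument, so as it stands your proof covers the stationary and Proposition \ref{Proposition 1} instances but not the theorem's general statement; to repair it you would need either to import the paper's two-point machinery or to prove your stability term directly without assuming that $P_n$ converges.
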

\begin{proof}
First, consider the case of a stationary market and provide a specific
proof for the theorem statement. Recall the steps in (\ref{Limit equality})
of the proof of Theorem \ref{Theorem 2}, we have:
\begin{align*}
\lim_{n\to\infty}\frac{1}{n}\sum_{i=1}^{n}\big(\log\left\langle b^{\omega},U_{i}\right\rangle -\log\left\langle b_{i}^{*},U_{i}\right\rangle \big) & =\lim_{n\to\infty}\frac{1}{n}\sum_{i=1}^{n}\big(\log\big<b^{*,n},U_{i}\big>-\log\left\langle b_{i}^{*},U_{i}\right\rangle \big)=0,
\end{align*}
where we define the constant strategy $\big(b^{*,n}\big)$ as the
portfolio $b^{*,n}$ that maximizes the following expectation with
respect to the empirical distribution $P_{n}$ of the normalized assets'
returns $U$:
\[
\mathbb{E}^{P_{n}}\left(\log\left\langle b^{*,n},U\right\rangle \right)=\max_{b\in\mathcal{B}^{m}}\mathbb{E}^{P_{n}}\left(\log\left\langle b,U\right\rangle \right),
\]
with the empirical distribution $P_{n}$ defined according to (\ref{empirical log-strategy 2})
for all $n$. To avoid confusion, it should be noted that the strategy
$\big(\bar{b}_{n}\big)$ is dynamic, while the strategies $\big(b^{*,n}\big)$
are constant and depend on $P_{n}$, which are denoted here for convenience
in later deductions.\smallskip 

Next, we deduce the weak convergence of the empirical distribution
sequence as $P_{n}\to\mathbb{P}\big(U_{1}|\mathcal{I}\big)$ almost
surely, due to Birkhoff's ergodic theorem. This results, almost surely,
in that $\big<\bar{b}_{n},U_{1}\big>\to\big<b^{*},U_{1}\big>$ for
$\mathbb{P}\big(U_{1}|\mathcal{I}\big)$-almost all $U_{1}$, with
$b^{*}$ maximizing $\mathbb{E}\big(\log\big<b,U_{1}\big>|\mathcal{I}\big)=\mathbb{E}\big(\log\big<b,U_{k}\big>|\mathcal{I}\big)$
for all $k>1$, by Lemma \ref{Lemma 4}. Thus, we also have $\big(\log\big<\bar{b}_{n},U_{k}\big>-\log\big<b^{*},U_{k}\big>\big)\to0$
for all $k$ almost surely, so $\big(\log\big<\bar{b}_{n},U_{n}\big>-\log\big<b^{*},U_{n}\big>\big)\to0$
also holds almost surely, by Lemma \ref{Lemma 5}. The needed equality
is then obtained as follows, due to Cesaro's mean theorem and Theorem
\ref{Theorem 2}:
\[
\lim_{n\to\infty}\frac{1}{n}\sum_{i=1}^{n}\big(\log\big<\bar{b}_{i},U_{i}\big>-\log\big<b^{*},U_{i}\big>\big)=\lim_{n\to\infty}\frac{1}{n}\sum_{i=1}^{n}\big(\log\big<\bar{b}_{i},U_{i}\big>-\log\big<b_{n}^{*},U_{i}\big>\big)=0,\text{ a.s}.
\]
Hence, under the additional condition of no-trash assets, the almost
sure convergence above is also in $L^{1}$ using Lebesgue's dominated
convergence theorem.\smallskip

\emph{(General case without the strict stationarity condition)}. If
there exists a random optimal constant strategy but the market is
not stationary, then the weak convergence of the empirical distribution
$P_{n}\to\mathbb{P}\big(U_{1}|\mathcal{I}\big)$ is not guaranteed,
so the following arguments of the aforementioned proof fail. Instead,
by utilizing the properties of the log-optimal portfolios, we can
argue that the optimal constant strategy should also be almost optimal
with respect to the empirical distribution $P_{n}$ as $n\to\infty$.
Indeed, this property can be established using Lemma \ref{AM-GM lemma}
through the following points.\smallskip 

\emph{Point 1}. Since there exists a random optimal constant strategy
$\big(b^{\omega}\big)$ by assumption, we have:
\begin{align*}
0=\lim_{n\to\infty}\Big(\frac{1}{n}\sum_{i=1}^{n}\log\big<b^{\omega},U_{i}\big>-\max_{b\in\mathcal{B}^{m}}\frac{1}{n}\sum_{i=1}^{n}\log\big<b,U_{i}\big>\Big) & =\lim_{n\to\infty}\frac{1}{n}\sum_{i=1}^{n}\Big(\log\frac{\big<b^{\omega},U_{i}\big>}{\big<b^{*,n},U_{i}\big>}\Big)\\
 & \leq\lim_{n\to\infty}\log\Big(\frac{1}{n}\sum_{i=1}^{n}\frac{\big<b^{\omega},U_{i}\big>}{\big<b^{*,n},U_{i}\big>}\Big)\\
 & =\lim_{n\to\infty}\log\mathbb{E}^{P_{n}}\Big(\frac{\big<b^{\omega},U\big>}{\big<b^{*,n},U\big>}\Big)\leq0,
\end{align*}
due to Jensen's inequality and Lemma \ref{Lemma 2}, taking into account
that the portfolios $\bar{b}_{n+1}=b^{*,n}$ are the maximizers for
$\max_{b\in\mathcal{B}^{m}}\mathbb{E}^{P_{n}}\big(\log\big<b,U\big>\big)$
corresponding to the empirical distributions $P^{n}$. As a result,
the obtained equality implies:
\begin{equation}
\lim_{n\to\infty}\frac{1}{n}\sum_{i=1}^{n}\frac{\big<b^{\omega},U_{i}\big>}{\big<b^{*,n},U_{i}\big>}=\lim_{n\to\infty}\Big(\prod_{i=1}^{n}\frac{\big<b^{\omega},U_{i}\big>}{\big<b^{*,n},U_{i}\big>}\Big)^{1/n}=1,\label{Equality of two limits}
\end{equation}
which immediately leads to the following consequence by Lemma \ref{AM-GM lemma}:
\begin{equation}
\lim_{n\to\infty}\max_{1\leq i,j\leq n}\Big|\frac{\big<b^{\omega},U_{i}\big>}{\big<b^{*,n},U_{i}\big>}-\frac{\big<b^{\omega},U_{j}\big>}{\big<b^{*,n},U_{j}\big>}\Big|=0.\label{Equality of all terms}
\end{equation}
This limit holds only if all terms of the sequence $\big\{\big<b^{\omega},U_{i}\big>/\big<b^{*,n},U_{i}\big>\big\}_{i=1}^{n}$
converge uniformly to the same value $1$ as $n\to\infty$. Indeed,
since the equality (\ref{Equality of all terms}) also implies that:
\[
\lim_{n\to\infty}\Big(\max_{1\leq i\leq n}\Big\{\frac{\big<b^{\omega},U_{i}\big>}{\big<b^{*,n},U_{i}\big>}\Big\}-\min_{1\leq i\leq n}\Big\{\frac{\big<b^{\omega},U_{i}\big>}{\big<b^{*,n},U_{i}\big>}\Big\}\Big)=0,
\]
thus, by using the convergence in (\ref{Equality of two limits}),
we obtain the needed result:
\[
1=\lim_{n\to\infty}\min_{1\leq i\leq n}\Big\{\frac{\big<b^{\omega},U_{i}\big>}{\big<b^{*,n},U_{i}\big>}\Big\}\leq\lim_{n\to\infty}\frac{1}{n}\sum_{i=1}^{n}\frac{\big<b^{\omega},U_{i}\big>}{\big<b^{*,n},U_{i}\big>}\leq\lim_{n\to\infty}\max_{1\leq i\leq n}\Big\{\frac{\big<b^{\omega},U_{i}\big>}{\big<b^{*,n},U_{i}\big>}\Big\}=1.
\]
Hence, $\big(\log\big<b^{\omega},U_{n}\big>-\log\big<b^{*,n},U_{n}\big>\big)\to0$
as $n\to\infty$, due to $\big(\log\big<b^{\omega},U_{i}\big>-\log\big<b^{*,n},U_{i}\big>\big)\to0$
uniformly for all $i\in\left\{ 1,...,n\right\} $ as $n\to\infty$.
In order to conclude the convergence of the strategy $\big(\bar{b}_{n}\big)$,
we need to establish the following additional property.\smallskip 

\emph{Point 2}. Noting that although the convergence of the sequence
of distributions $\big\{ P_{n}\big\}_{n=1}^{\infty}$ is not guaranteed,
we still have the following convergence for all bounded continuous
functions $f(\cdot)$:
\begin{align*}
\lim_{n\to\infty}\big(\mathbb{E}^{P_{n+1}}\big(f\big(U\big)\big)-\mathbb{E}^{P_{n}}\big(f\big(U\big)\big)\big) & =\lim_{n\to\infty}\Big(\frac{1}{n+1}\sum_{i=1}^{n+1}f\big(U_{i}\big)-\frac{1}{n}\sum_{i=1}^{n}f\big(U_{i}\big)\Big)\\
 & =\lim_{n\to\infty}\Big(\frac{1}{n+1}f\big(U_{n+1}\big)-\frac{1}{n\big(n+1\big)}\sum_{i=1}^{n}f\big(U_{i}\big)\Big)=0.
\end{align*}
Therefore, by Lemma \ref{Lemma 4}, the function $\max_{b\in\mathcal{B}^{m}}\mathbb{E}\big(\log\big<b,U\big>\big)$
is continuous over the space of distributions equipped with the weak
topology. This leads to the following:
\[
\lim_{n\to\infty}\big(\max_{b\in\mathcal{B}^{m}}\mathbb{E}^{P_{n+1}}\big(\log\big<b,U\big>\big)-\max_{b\in\mathcal{B}^{m}}\mathbb{E}^{P_{n}}\big(\log\big<b,U\big>\big)\big)=0.
\]
Moreover, Lemma \ref{Lemma 4} also implies that if the log-optimal
portfolios for $\max_{b\in\mathcal{B}^{m}}\mathbb{E}^{P_{n}}\big(\log\big<b,U\big>\big)$
are unique for all $n$, then $\big(b^{*,n}-b^{*,n-1}\big)\to0$ as
$n\to\infty$. Additionally, recalling that this lemma further asserts
that $\big<\hat{b},U\big>=\big<\bar{b},U\big>$ for any two maximizers
$\hat{b}$ and $\bar{b}$ for $\max_{b\in\mathcal{B}^{m}}\mathbb{E}^{P_{n}}\big(\log\big<b,U\big>\big)$
for all $P_{n}$, the following convergence always holds:
\[
\lim_{n\to\infty}\big(\log\big<b^{*,n},U_{n}\big>-\log\big<b^{*,n-1},U_{n}\big>\big)=0,
\]
taking into account that all the normalized assets' returns $U_{n}\in\mathcal{U}$
are bounded for all $n$, as stated in (\ref{U is bounded}) in the
proof of Lemma \ref{Lemma 1}.\smallskip 

\emph{(Definitive statement of the optimality)}. Finally, we establish
the convergence for the strategy $\big(\bar{b}_{n}\big)$, where $\bar{b}_{n}=b^{*,n-1}$
by construction of the strategy:
\begin{align*}
\lim_{n\to\infty}\big(\log\big<b^{\omega},U_{n}\big>-\log\big<\bar{b}_{n},U_{n}\big>\big)= & \lim_{n\to\infty}\big(\log\big<b^{\omega},U_{n}\big>-\log\big<b^{*,n},U_{n}\big>\big)\\
 & +\lim_{n\to\infty}\big(\log\big<b^{*,n},U_{n}\big>-\log\big<b^{*,n-1},U_{n}\big>\big)=0.
\end{align*}
Then, invoking the Cesaro's mean theorem again, we obtain $\big(W_{n}\big(\bar{b}_{n}\big)-W_{n}\big(b_{n}^{*}\big)\big)\to0$
as:
\[
\lim_{n\to\infty}\frac{1}{n}\sum_{i=1}^{n}\big(\log\big<b^{\omega},U_{i}\big>-\log\big<\bar{b}_{n},U_{i}\big>\big)=\lim_{n\to\infty}\frac{1}{n}\sum_{i=1}^{n}\big(\log\big<b_{i}^{*},U_{i}\big>-\log\big<\bar{b}_{n},U_{i}\big>\big)=0.
\]
This results in $L^{1}$-convergence by the Lebesgues' dominated convergence
theorem under the additional condition of no-trash assets. Thus, the
proof is completed.
\end{proof}

\subsection{Information-observation-based learning algorithm}

After establishing the asymptotic optimality of the log-optimal strategy
in \citet{Algoet1988}, a problem arose concerning the construction
of a strategy that asymptotically approaches the same finite growth
rate as the log-optimal strategy, under a stationary and ergodic market
with an unknown infinite-dimensional distribution. This problem was
posed in Algoet's later work \citet{Algoet1992}, where it was referred
to as the desired universality of a so-called optimal strategy. In
addition to formulating the problem, Algoet also sketched a learning
algorithm to construct a strategy that could achieve the target limiting
growth rate, based on the learning algorithms developed in the thesis
of \citet{Bailey1976} and later by \citet{Ornstein1978}. However,
this algorithm was known for its practical difficulties until the
work by \citet{Gyorfi2006}, which made the original idea more practical.
Since then, variants of Gyorfi's learning algorithm (a collective
term for Gyorfi and his co-authors) have been regarded as the primary
class of methods that fully implement Algoet's idea in contemporary
literature. In this section, we extend the investigation of the properties
of this method and enhance its practicality, especially when the market
is stationary but not ergodic, with side information and the guaranteed
existence of an optimal constant strategy.\smallskip 

\textbf{Anatomy of Algoet's and Gyorfi's learning algorithms}. As
mentioned in the previous section, since the true ergodic mode of
a stationary process and the associated conditional distribution cannot
be exactly identified from any finite observation of past events,
a universal estimation scheme is formulated by learning the realizations
of the process to approximate the true conditional distribution given
the infinite past. In the paper \citet{Algoet1992}, the author generalizes
the universal estimation scheme for finite-valued processes developed
by \citet{Bailey1976} and \citet{Ornstein1978} to the Polish space,
which guarantees the asymptotic optimal growth rate for a log-optimal
strategy derived using the estimated conditional distribution. However,
due to the complexity of its estimation, Algoet proposes another simpler
algorithm that generates log-optimal portfolios based on conditional
distributions estimated using finite-order Markov approximation functions,
then combines them using the \emph{bookkeeping technique}, commonly
referred to as the \emph{exponentially weighted average} (as in \citet{CesaBianchi2006})
in the present literature. The main practical issue with this algorithm
is that the Markov approximations only capture similar events belonging
to the same atoms in a $\sigma$-field, so the learning process often
results in void sets. To address this problem, Gyorfi\textquoteright s
algorithm uses the same mechanism, but the combination procedure is
conducted with finite-order Markov approximations that capture various
distances between two events, rather than requiring absolute equality.\smallskip 

\textbf{Finite-order Markov approximation and the estimated log-optimal
portfolio}. Despite the improved practicality, it should be stressed
that there are two inherent challenges in both the original Algoet's
algorithm and the later Gyorfi's algorithm: first, they require complete
observation of all available market side information, even though
parts of it may be unobservable, which \citet{Algoet1992} explicitly
mentions as a conjectured result for a possible extension of the algorithm
to markets with side information \footnote{See Remark 2 in Section 4.4 for a discussion of the sufficiency of
using side information for conditional distribution estimation, and
Remark 2 in Section 5.2 for the conjectured result for a stationary
and ergodic market.}; second, the Markov approximation must be taken to an infinite order,
rendering the available data insufficient for the learning procedure
to guarantee convergence of the estimation. Hence, although the optimality
of the estimated log-optimal strategy is theoretically guaranteed,
it seems unrealistic in practice. However, based on the results obtained
in the previous section, we can expect some possible simplifications
to the algorithm.\smallskip 

Without loss of generality, by assuming that all market side information
is observable, consider the following estimated log-optimal strategy
$\big(b_{n}^{(h,l)}\big(\theta\big)\big)$ based on a Markov approximation
of order $h$ with respect to normalized assets' returns, and a parameter
vector $\theta\in\mathbb{R}_{++}^{m\times(h-1)}\times\mathbb{R}^{k\times h}$:
\begin{equation}
b_{n}^{(h,l)}\big(\theta\big)\coloneqq\operatorname*{argmax}_{b\in\mathcal{B}^{m}}\frac{1}{|\mathcal{D}_{n}^{(h,l)}\big(\theta\big)|}\sum_{X_{i}\in\mathcal{D}_{n}^{(h,l)}(\theta)}\log\big<b,U_{i}\big>,\,\forall n\geq h+2,\label{kernel 1}
\end{equation}
which implies that $b_{n}^{(h,l)}\big(\theta\big)$ maximizes $\mathbb{E}^{P(\cdot|\mathcal{D}_{n}^{(h,l)}(\theta))}\big(\log\big<b,u\big(X\big)\big>\big)$
for all $n>h$, where:
\[
X\sim P\big(\cdot|\mathcal{D}_{n}^{(h,l)}\big(\theta\big)\big),\text{ and }P\big(A|\mathcal{D}_{n}^{(h,l)}\big(\theta\big)\big)\coloneqq\frac{1}{|\mathcal{D}_{n}^{(h,l)}\big(\theta\big)|}\sum_{X_{i}\in\mathcal{D}_{n}^{(h,l)}(\theta)}\mathbb{I}_{X_{i}}\big(A\big),\,\forall A\subset\mathbb{R}_{++}^{m},
\]
and $h,l\in\mathbb{N}_{+}$, while $c\in\mathbb{R}_{++}$ is a fixed
small value. Also,
\begin{equation}
\mathcal{D}_{n}^{(h,l)}\big(\theta\big)\coloneqq\big\{ X_{i}:\,\left\Vert \big(X_{i-h}^{i-1},Y_{i-h}^{i}\big)-\theta\right\Vert \leq c/l,\text{ }h+1\leq i\leq n-1\big\},\,\forall n\geq h+2,\label{kernel 2}
\end{equation}
with $b_{n}\coloneqq\big(1/m,...,1/m\big)$ and $P\big(\cdot|\mathcal{D}_{n}^{(h,l)}\big(\theta\big)\big)\coloneqq\delta_{(1,..,1)}(\cdot)$
(Dirac measure) for an empty $\mathcal{D}_{n}^{(h,l)}\big(\theta\big)$
by convention.\smallskip 

Since the market process is assumed to be stationary, an investor
could start investing after $h$ periods, which is equivalent to moving
the origin, and all estimated log-optimal portfolios will be formed
from the first period onward. The sets $\mathcal{D}_{n}^{\left(h,l\right)}\big(\theta\big)$
represent collections of asset returns having finite sequences of
past asset returns with similar patterns, measured by a norm (such
as the Frobenius norm). Consider $\theta=\big(X_{n-h}^{n-1},Y_{n-h}^{n}\big)$
as an instance. The norm of similarity between two events is computed
using matrices as follows:
\[
\left\Vert \big(X_{i-h}^{i-1},Y_{i-h}^{i}\big)-\big(X_{n-h}^{n-1},Y_{n-h}^{n}\big)\right\Vert =\left\Vert \left[\begin{array}{cccc}
Y_{i-h} & \cdots & Y_{i-1} & Y^{i}\\
X_{i-h} & \cdots & X_{i-1} & 0
\end{array}\right]-\left[\begin{array}{cccc}
Y_{n-h} & \cdots & Y_{n-1} & Y^{n}\\
X_{n-h} & \cdots & X_{n-1} & 0
\end{array}\right]\right\Vert .
\]
Additionally, the value $c$ is fixed, and $c/l\to0$ as $l\to\infty$,
which partitions the $\sigma$-field $\sigma\big(X_{n-h}^{n-1},Y_{n-h}^{n}\big)$
into coarser approximation $\sigma$-fields for each level $c/l$.
As $l\to\infty$, the approximation becomes finer, and all similar
events must belong to the same atoms of the $\sigma$-field $\sigma\big(X_{n-h}^{n-1},Y_{n-h}^{n}\big)$.
Furthermore, it is worth noting that there are various methods to
measure the similarity between events, as discussed in \citet{Gyorfi2006}.
However, for representational purposes, we focus on the partition
in (\ref{kernel 2}), which is commonly referred to as the \emph{kernel-based
criterion} in the literature, due to its empirically outstanding performance
and implementability.\smallskip 

\textbf{Strategy constructed by mixing kernels with fixed order}.
The mechanism of measuring similarity between events across several
extents $c/l$ allows the partitions to collect more elements during
the scanning process of the past, thereby avoiding the empty sets
$\mathcal{D}_{n}^{(h,l)}\big(\theta\big)$. This is especially important
since the amount of available data is limited when the time horizon
is small. Hence, similar to Gyorfi's argument, we construct the following
combinatorial strategy by exponentially averaging the weighted strategies
$\big(b_{n}^{(h,l)}\big(\theta_{n}\big)\big)$ corresponding to all
values of $l$ ranging over $\mathbb{N}_{+}$, where $\theta_{n}\coloneqq\big(X_{n-h}^{n-1},Y_{n-h}^{n}\big)$,
as:
\begin{equation}
b_{n}^{(h)}\coloneqq\left(1/m,...,1/m\right),\forall n<h+2\text{, and }b_{n}^{(h)}\coloneqq{\displaystyle \frac{\sum_{l}b_{n}^{(h,l)}\big(\theta_{n}\big)S_{n-1}\big(b_{n-1}^{(h,l)}\big(\theta_{n-1}\big)\big)}{\sum_{l}S_{n-1}\big(b_{n-1}^{(h,l)}\big(\theta_{n-1}\big)\big)}},\,\forall n\geq h+2\label{kernel combine}
\end{equation}
Noting that the combination in (\ref{kernel combine}) fixes an $h$,
which differs from the approach in Algoet's and Gyorfi's algorithms,
which combine all orders of $h$ over $\mathbb{N}_{+}$. This distinction
arises because the algorithm cannot learn from past data when $h\geq n$,
and thus, the theoretically guaranteed optimality is not attainable
in practice. This limitation leads to the restriction of choosing
order $h$ in the experiments in \citet{Gyorfi2006}\footnote{In the remark on the ``Validity of assumptions'' and the second
paragraph of Section 4.1 of the cited article, the authors note that
certain component strategies with smaller order $h$ perform better
than others. Therefore, $h$ can be restricted, as the market may
be a low-order Markov process. Moreover, such low-order Markov component
strategies effectively exploit the hidden dependence structure of
the market, which is difficult to reveal. Clearly, this is the primary
motivation of our paper: to address the challenge of unobservable
latent market dependence.}.\smallskip 

The asymptotic growth rate and optimality of the strategy $\big(b_{n}^{(h)}\big)$
are established in Theorem \ref{Theorem 4} with an arbitrarily fixed
order $h$. The theorem asserts that the choice of order does not
affect the optimality of the strategy $\big(b_{n}^{(h)}\big)$, and
therefore, there is no need to mix all orders $h$ over $\mathbb{N}_{+}$.
Moreover, this result holds even if the algorithm is restricted to
learning only from past asset returns. In this context, the limiting
growth rate of the strategy $\big(b_{n}^{(h)}\big)$ is given by $\mathbb{E}\big(\max_{b\in\mathcal{B}^{m}}\mathbb{E}\big(\log\big<b,U_{1}\big>|\sigma\big(X_{1-h}^{0}\big)\big)|\mathcal{I}\big)$
under a stationary but not ergodic market process. This statement
is easily justified following the final derivation for the optimality
of the strategy $\big(b_{n}^{(h)}\big)$ in the proof of Theorem \ref{Theorem 4}.
Based on this property, it is safe to remove the side information
from the learning procedure of the algorithm.\smallskip 

\textbf{Notes on mixing kernels and markets with decaying impact of
information}. In the case where the stationary market is either an
i.i.d. or a finite-order Markov process, with $h^{*}\geq0$ denoting
the true order of the market's memory, the strategy $\big(b_{n}^{(h)}\big)$
must be constructed with a chosen $h\geq h^{*}$ to activate the Markov
property during the learning procedure, thereby enabling Theorem \ref{Theorem 4}
with stationarity. In this market scenario, the combinatorial strategy
$\big(b_{n}^{(h^{*})}\big)$ with true order $h^{*}$, learns the
log-optimal strategy $\big(b_{n}^{*}\big)$, which is a sequence of
functionally generated portfolios $b_{n}^{*}$ defined by the function
$\operatorname*{argmax}_{b\in\mathcal{B}^{m}}\mathbb{E}\big(\log\big<b,U_{1}\big>|\sigma\big(X_{1-h}^{0},Y_{1-h}^{1}\big)\big)$
for all $n$. Let us recall that this result generalizes the work
of \citet{Cuchiero2019} for the same type of market, as mentioned
in the last remark of the previous section. However, the strategy
$\big(b_{n}^{(h^{*})}\big)$ is a mixing of estimated log-optimal
strategies based on $h^{*}$-order Markov approximation functions
with different kernel widths, rather than mixing all functions over
their entire space, which seems impractical.\smallskip 

However, since the market's dependence structure is latent, the choice
of order $h$ for the strategy $\big(b_{n}^{(h^{*})}\big)$ should
be large enough to cover all possibilities of the unknown true order
$h^{*}$ of the market's memory length. In practice, because the sets
$\mathcal{D}_{n}^{\left(h,l\right)}\big(\theta\big)$ may become almost
void if $h$ is large, we should mix all orders within a threshold
$H$ to construct the strategy $\big(\bar{b}_{n}\big)$ as follows:
\[
\bar{b}_{n}\coloneqq\left(1/m,...,1/m\right),\forall n<h+2\text{, and }\bar{b}_{n}\coloneqq{\displaystyle \frac{\sum_{h\leq H}b_{n}^{(h)}S_{n-1}\big(b_{n-1}^{(h)}\big)}{\sum_{h\leq H}S_{n-1}\big(b_{n-1}^{(h)}\big)}},\,\forall n\geq h+2.
\]
Then, due to the known property of the weighted exponential average
(as demonstrated in (\ref{WEA inequality}) of the proof of Theorem
\ref{Theorem 4}), the growth rate of the combinatorial strategy $\big(\bar{b}_{n}\big)$
in terms of normalized assets' returns will converge to that of the
component strategy $\big(b_{n}^{(h^{*})}\big)$ with true order $h^{*}$.
Finally, it should be noted that a critical weakness of the learning
algorithm for the strategy $\big(b_{n}^{(h)}\big)$ is that it requires
$l\to\infty$, which makes the subroutine's computation very costly.
Therefore, the market-knowledge-free learning algorithms described
in the previous section, which capture the growth rate of the random
optimal constant strategy, are generally still preferable.\smallskip 
\begin{rem*}
The proof of Theorem \ref{Theorem 4} follows a substantial portion
of the arguments in the proof of \citet{Gyorfi2006}, with a generalization
to the case of a stationary but not ergodic market with side information,
capturing a broader context where the optimal limiting growth rate
of a strategy may be finitely random or not well-defined. Additionally,
the theorem establishes the limiting growth rate for the strategy
$\big(b_{n}^{(h)}\big)$ with a fixed order $h$, in terms of normalized
assets' returns, rather than mixing all orders $h$.
\end{rem*}
\begin{thm}
If the market process $\big\{ X_{n},Y_{n}\big\}_{n=1}^{\infty}$ is
stationary (not necessarily ergodic) and free of trash assets, then
the strategy $\big(b_{n}^{(h)}\big)$, defined according to (\ref{kernel 1}),
(\ref{kernel 2}) and (\ref{kernel combine}), for any fixed $h\geq0$,
satisfies the following:\label{Theorem 4}
\[
\lim_{n\to\infty}\dfrac{1}{n}\sum_{i=1}^{n}\log\big<b_{i}^{(h)},U_{i}\big>=\mathbb{E}\big(\log\big<b_{h+1}^{*},U_{1}\big>|\mathcal{I}\big),\text{ \text{ a.s and in }\ensuremath{L^{1}}},
\]
where $b_{h+1}^{*}$ is the $\mathcal{F}_{h+1}^{X}$-measurable log-optimal
portfolio of the log-optimal strategy $\big(b_{n}^{*}\big)$. Therefore,
the strategy $\big(b_{n}^{(h)}\big)$ is also optimal in terms of
asymptotic growth rate, as:
\[
\lim_{n\to\infty}\big(W_{n}\big(b_{n}^{(h)}\big)-W_{n}\big(b^{*}\big)\big)=\lim_{n\to\infty}\big(W_{n}\big(b_{n}^{(h)}\big)-W_{n}\big(b_{n}^{*}\big)\big)=0,\text{ a.s and in }L^{1}.
\]
\end{thm}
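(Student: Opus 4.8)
The plan is to adapt the argument of \citet{Gyorfi2006} to the normalized returns $U_n$ and to the present stationary (not necessarily ergodic) market with side information, using as black boxes the no-outperformance bound of Lemma \ref{Lemma 2}, the monotonicity of the conditional value in the information field (Lemma \ref{Lemma 3}), the continuity of the log-optimal portfolio in the weak topology together with martingale convergence (Lemma \ref{Lemma 4}), Birkhoff's ergodic theorem and Breiman's generalized ergodic theorem, and the equilibrium identities of Theorem \ref{Theorem 2}. It suffices to prove the single almost sure limit $\frac1n\sum_{i=1}^n\log\langle b_i^{(h)},U_i\rangle\to\mathbb{E}(\log\langle b_{h+1}^*,U_1\rangle|\mathcal{I})$. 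Granting this, Lemma \ref{Lemma 1}(a) rewrites $W_n(b_n^{(h)})-W_n(b_n^*)$ and $W_n(b_n^{(h)})-W_n(b^*)$ as Cesaro differences of the corresponding normalized log-returns; since Theorem \ref{Theorem 2} assigns to $\frac1n\sum_{i=1}^n\log\langle b_i^*,U_i\rangle$ and $\frac1n\sum_{i=1}^n\log\langle b^*,U_i\rangle$ the common limit $\max_{b\in\mathcal{B}^m}\mathbb{E}(\log\langle b,U_1\rangle|\mathcal{I})$, both differences vanish once we also know $\mathbb{E}(\log\langle b_{h+1}^*,U_1\rangle|\mathcal{I})=\max_{b\in\mathcal{B}^m}\mathbb{E}(\log\langle b,U_1\rangle|\mathcal{I})$ (established below). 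The $L^1$ statements then follow from Lebesgue's dominated convergence theorem with the dominating bound (\ref{uniformly integrable}) supplied by the no-trash-assets hypothesis.

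For the lower bound I would first record the weighted-exponential-average property of the mixture (\ref{kernel combine}): written against a fixed strictly positive summable prior $\{q_l\}_{l\geq1}$ on the kernel widths, it satisfies, for every $l$ and $n$, $\,S_n(b^{(h)})\geq q_l\,S_n(b^{(h,l)})$, where $S_n(b^{(h,l)})$ is the wealth accrued by the $l$-th kernel component strategy; hence
\begin{equation}
\liminf_{n\to\infty}\frac1n\sum_{i=1}^n\log\langle b_i^{(h)},U_i\rangle\ \geq\ \sup_{l\geq1}\ \liminf_{n\to\infty}\frac1n\sum_{i=1}^n\log\langle b_i^{(h,l)}(\theta_i),U_i\rangle.\label{WEA inequality}
\end{equation}
Then, with $h$ and $l$ fixed, I would analyse a single kernel component. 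By Birkhoff's theorem the cell $\mathcal{D}_n^{(h,l)}(\theta_n)$ has cardinality of order $n$ and, along the moving centres $\theta_i$, its empirical measure converges almost surely to the regular conditional distribution of $U_1$ given the quantized block field $\mathcal{Q}_{h,l}$ that partitions $\sigma(X_{1-h}^0,Y_{1-h}^1)$ at resolution $c/l$; by Lemma \ref{Lemma 4} the empirically optimal portfolios $b_i^{(h,l)}(\theta_i)$ asymptotically coincide with the true cell-optimal ones (and, being empirically optimal, cannot do asymptotically worse), so Breiman's generalized ergodic theorem --- applicable by (\ref{uniformly integrable}) --- yields $\liminf_n\frac1n\sum\log\langle b_i^{(h,l)}(\theta_i),U_i\rangle\geq\mathbb{E}(\max_{b\in\mathcal{B}^m}\mathbb{E}(\log\langle b,U_1\rangle|\mathcal{Q}_{h,l})|\mathcal{I})$. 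Letting $l\to\infty$ refines the partitions, $\mathcal{Q}_{h,l}\uparrow\sigma(X_{1-h}^0,Y_{1-h}^1)$, so by Lemma \ref{Lemma 3}(b), Lemma \ref{Lemma 4} and dominated convergence the bound rises to $\mathbb{E}(\max_{b\in\mathcal{B}^m}\mathbb{E}(\log\langle b,U_1\rangle|\sigma(X_{1-h}^0,Y_{1-h}^1))|\mathcal{I})$; shifting this block forward by $h$ periods, so that $\sigma(X_{1-h}^0,Y_{1-h}^1)$ becomes $\mathcal{F}_{h+1}^X=\sigma(X_1^h,Y_1^{h+1})$, identifies it with $\mathbb{E}(\log\langle b_{h+1}^*,U_1\rangle|\mathcal{I})$, which together with (\ref{WEA inequality}) is the required lower bound.

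For the upper bound, note that $b_n^{(h)}$ is a causal function of $(X_1^{n-1},Y_1^n)$ --- it depends only on $\theta_n$ and the past component wealths --- hence is $\mathcal{F}_n^X$-measurable, so Lemma \ref{Lemma 2}(b) gives $\limsup_n(W_n(b_n^{(h)})-W_n(b_n^*))\leq0$, i.e.\ $\limsup_n\frac1n\sum\log\langle b_i^{(h)},U_i\rangle\leq\lim_n\frac1n\sum\log\langle b_i^*,U_i\rangle=\max_{b\in\mathcal{B}^m}\mathbb{E}(\log\langle b,U_1\rangle|\mathcal{I})$ by Theorem \ref{Theorem 2}. To close the sandwich I would run the equilibrium argument of Theorems \ref{Theorem 1}--\ref{Theorem 2} at the level of the $h$-block sub-filtration: the $h$-block log-optimal strategy (the portfolio maximizing $\mathbb{E}(\log\langle b,U_n\rangle|\sigma(X_{n-h}^{n-1},Y_{n-h}^n))$ over $b\in\mathcal{B}^m$ at each time $n$) is itself a causal competitor of $(b_n^*)$ whose normalized growth rate, by the same shift-and-Birkhoff computation, equals $\mathbb{E}(\log\langle b_{h+1}^*,U_1\rangle|\mathcal{I})$, and it is squeezed between the optimal-constant value and $\lim_n\frac1n\sum\log\langle b_i^*,U_i\rangle$, both equal to $\max_{b\in\mathcal{B}^m}\mathbb{E}(\log\langle b,U_1\rangle|\mathcal{I})$ by Theorem \ref{Theorem 2}; hence $\mathbb{E}(\log\langle b_{h+1}^*,U_1\rangle|\mathcal{I})=\max_{b\in\mathcal{B}^m}\mathbb{E}(\log\langle b,U_1\rangle|\mathcal{I})$, the two bounds coincide, and the limit is proved.

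The step I expect to be the genuine obstacle is the single-component kernel estimate: showing, for fixed $h$ and $l$, that the Cesaro average of the log-returns of the empirically optimal portfolio on the \emph{sliding} cell $\mathcal{D}_n^{(h,l)}(\theta_n)$ converges from below to the $\mathcal{I}$-average of the cell-conditional log-optimal value. This forces one to control at once the almost sure weak convergence of the cell empirical measures as the centres $\theta_i$ move, the propagation of that convergence through a possibly non-singleton maximizer set (where only $\langle b,U\rangle$, not $b$, is pinned down, per Lemma \ref{Lemma 4}), and the uniform integrability needed to invoke Breiman's theorem --- precisely the point where the no-trash-assets condition and the boundedness (\ref{U is bounded}) of the normalized returns are indispensable. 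The $l\to\infty$ interchange and the final identification with $\mathbb{E}(\log\langle b_{h+1}^*,U_1\rangle|\mathcal{I})$ are, by contrast, routine once Lemmas \ref{Lemma 3} and \ref{Lemma 4} are in hand.
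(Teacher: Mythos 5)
Your proposal is correct and follows essentially the same route as the paper's proof: the weighted-exponential-average (telescoping) inequality together with a Birkhoff/Breiman analysis of each fixed kernel component for the lower bound, the $l\to\infty$ refinement via Lebesgue's density theorem and Lemma \ref{Lemma 4}, and the final sandwich against the optimal constant and log-optimal strategies of Theorem \ref{Theorem 2}. The one local divergence is your upper bound, where you invoke Lemma \ref{Lemma 2}(b) against the full log-optimal strategy and then separately identify $\mathbb{E}\big(\log\big<b_{h+1}^{*},U_{1}\big>|\mathcal{I}\big)$ with $\max_{b\in\mathcal{B}^{m}}\mathbb{E}\big(\log\big<b,U_{1}\big>|\mathcal{I}\big)$ by squeezing the $h$-block log-optimal strategy, whereas the paper dominates $b_{n}^{(h)}$ directly by the $h$-block log-optimal portfolios; your variant is in fact slightly safer, since $b_{n}^{(h)}$ is $\mathcal{F}_{n}^{X}$-measurable through the mixture weights rather than merely $\sigma\big(X_{n-h}^{n-1},Y_{n-h}^{n}\big)$-measurable, so the Kuhn--Tucker domination the paper uses at that step requires extra justification that your comparison avoids.
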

\begin{proof}
For the sake of readability, we divide the proof into smaller parts
below.\smallskip 

\emph{(Letting time $n$ increase to infinity)}. Firstly, recall that
by stationarity, we have:
\[
\sigma\big(X_{n-h}^{0},Y_{1-h}^{1}\big)=T^{-(n-1)}\sigma\big(X_{n-h}^{n-1},Y_{n-h}^{n}\big),\,\forall n.
\]
Then, for any $\theta$, we show that the sequence of distributions
$\big\{ P\big(\cdot|\mathcal{D}_{n}^{(h,l)}\big(\theta\big)\big)\big\}_{n=1}^{\infty}$
converges weakly almost surely to the following distributions as $n\to\infty$:
\begin{equation}
\begin{cases}
\mathbb{P}\big(X_{1}|\left\Vert \big(X_{1-h}^{0},Y_{1-h}^{1}\big)-\theta\right\Vert \leq c/l,\mathcal{I}\big) & \text{if }\mathbb{P}\big(||\big(X_{1-h}^{0},Y_{1-h}^{1}\big)-\theta||\leq c/l\big)>0,\\
\delta_{(1,..,1)} & \text{if }\mathbb{P}\big(||\big(X_{1-h}^{0},Y_{1-h}^{1}\big)-\theta||\leq c/l\big)=0.
\end{cases}\label{let n to infty}
\end{equation}
Indeed, consider any bounded continuous function $f(\cdot)$, we obtain:
\begin{align*}
\lim_{n\to\infty}\mathbb{E}^{P(\cdot|\mathcal{D}_{n}^{(h,l)}(\theta))}f\big(X\big) & =\lim_{n\to\infty}\frac{|\text{ }h-n|}{|\text{ }h-n||\mathcal{D}_{n}^{(h,l)}\big(\theta\big)|}\sum_{X_{i}\in\mathcal{D}_{n}^{(h,l)}(\theta)}f\big(X_{i}\big)\\
 & =\frac{\mathbb{E}\big(f\big(X_{1}\big)\mathbb{I}_{||(X_{1-h}^{0},Y_{1-h}^{1})-\theta||\leq c/l}|\mathcal{I}\big)}{\mathbb{E}\big(\mathbb{I}_{||(X_{1-h}^{0},Y_{1-h}^{1})-\theta||\leq c/l}|\mathcal{I}\big)}\\
 & =\mathbb{E}\big(f\big(X_{1}\big)|\left\Vert \big(X_{1-h}^{0},Y_{1-h}^{1}\big)-\theta\right\Vert \leq c/l,\mathcal{I}\big),\text{ a.s,}
\end{align*}
in the case where $\mathbb{P}\big(||\big(X_{1-h}^{0},Y_{1-h}^{1}\big)-\theta||\leq c/l\big)>0$.
Meanwhile, if $\mathbb{P}\big(||\big(X_{1-h}^{0},Y_{1-h}^{1}\big)-s||\leq c/l\big)=0$,
the sequence $\big\{ P\big(\cdot|\mathcal{D}_{n}^{(h,l)}\big(\theta\big)\big)\big\}_{n=1}^{\infty}$
is identical to the Dirac delta distribution $\delta_{(1,..,1)}$
almost surely, which implies $\mathbb{E}^{P(\cdot|\mathcal{D}_{n}^{(h,l)}(\theta))}f\big(X\big)=f\big(1,...,1\big)$
for all $n$ almost surely.\smallskip 

\emph{(Letting extent $l$ increase to infinity)}. As $l\to\infty$,
the kernel width $c/l\to0$, so we expect the approximated conditional
distribution to converge to the true one as follows:
\begin{equation}
\lim_{n\to\infty}\mathbb{P}\big(X_{1}|\left\Vert \big(X_{1-h}^{0},Y_{1-h}^{1}\big)-\theta\right\Vert \leq c/l\big)=\mathbb{P}\big(X_{1}|\big(X_{1-h}^{0},Y_{1-h}^{1}\big)=\theta\big).\label{let l to infty}
\end{equation}
This desired convergence is demonstrated as follows, for any $A\subset\mathbb{R}_{++}^{m}$
and $\theta$:
\begin{align*}
\lim_{l\to\infty}\mathbb{P}\big(X_{1}\in A|\left\Vert \big(X_{1-h}^{0},Y_{1-h}^{1}\big)-\theta\right\Vert \leq c/l\big) & =\lim_{l\to\infty}\frac{\mathbb{P}\big(X_{1}\in A,\left\Vert \big(X_{1-h}^{0},Y_{1-h}^{1}\big)-\theta\right\Vert \leq c/l\big)}{\mathbb{P}\big(\left\Vert \big(X_{1-h}^{0},Y_{1-h}^{1}\big)-\theta\right\Vert \leq c/l\big)}\\
 & =\lim_{l\to\infty}\frac{\mathbb{E}\big(\mathbb{P}\big(X_{1}\in A|X_{1-h}^{0},Y_{1-h}^{1}\big)\mathbb{I}_{||(X_{1-h}^{0},Y_{1-h}^{1})-\theta||\leq c/l}\big)}{\mathbb{E}\big(\mathbb{I}_{||(X_{1-h}^{0},Y_{1-h}^{1})-\theta||\leq c/l}\big)}\\
 & =\mathbb{P}\big(X_{1}\in A|\big(X_{1-h}^{0},Y_{1-h}^{1}\big)=\theta\big),
\end{align*}
for $\mathbb{P}$-almost all $\theta\in\left\{ \big(X_{1-h}^{0},Y_{1-h}^{1}\big)\big(\Omega\big)\right\} $
by Lebesgue's density theorem.\smallskip 

\emph{(Deriving the first needed inequality)}. Let the portfolio $b_{\infty}^{(h,l)}\big(\theta\big)$
denote a maximizer for:
\[
\max_{b\in\mathcal{B}^{m}}\mathbb{E}\big(\log\big<b\big(\theta\big),U_{1}\big>|\left\Vert \big(X_{1-h}^{0},Y_{1-h}^{1}\big)-\theta\right\Vert \leq c/l,\mathcal{I}\big),
\]
in the case that $\mathbb{P}\big(||\big(X_{1-h}^{0},Y_{1-h}^{1}\big)-\theta||\leq c/l\big)>0$.
Otherwise, if $\mathbb{P}\big(||\big(X_{1-h}^{0},Y_{1-h}^{1}\big)-\theta||\leq c/l\big)>0$,
let $b_{\infty}^{(h,l)}\big(\theta\big)\coloneqq\big(1,...,1\big)$.\smallskip 

Then, fixing $l$, due to the convergence of distributions in (\ref{let n to infty})
as $n\to\infty$, we obtain almost surely the convergence $\log\big<b_{n}^{(h,l)}\big(\theta\big),U_{1}\big>\to\log\big<b_{\infty}^{(h,l)}\big(\theta\big),U_{1}\big>$
using Lemma (\ref{Lemma 4}). Hence, by applying Breiman's generalized
ergodic theorem, we get:
\begin{align}
\lim_{n\to\infty}\frac{1}{n}\sum_{i=1}^{n}\log\big<b_{i}^{(h,l)}\big(\theta_{i}\big),U_{i}\big> & =\lim_{n\to\infty}\frac{1}{n}\sum_{i=1}^{n}\log\big<b_{i}^{(h,l)}\big(X_{i-h}^{i-1},Y_{i-h}^{i}\big),U_{i}\big>\nonumber \\
 & =\lim_{n\to\infty}\frac{1}{n}\sum_{i=1}^{n}\log\big<b_{\infty}^{(h,l)}\big(T^{(i-1)}\big(X_{1-h}^{0},Y_{1-h}^{1}\big)\big),T^{(i-1)}U_{1}\big>\nonumber \\
 & =\mathbb{E}\big(\log\big<b_{\infty}^{(h,l)}\big(X_{1-h}^{0},Y_{1-h}^{1}\big),U_{1}\big>|\mathcal{I}\big),\text{ a.s and in }L^{1},\label{First needed inequality Theorem 3}
\end{align}
given that $\theta_{n}\coloneqq\big(X_{n-h}^{n-1},Y_{n-h}^{n}\big)$
as defined for the strategy $\big(b_{n}^{(h)}\big)$ in (\ref{kernel combine}),
and that the required $L^{1}$-domination is satisfied for the sequence
$\big\{\log\big<b_{n}^{(h,l)},U_{n}\big>\big\}_{n=1}^{\infty}$ due
to (\ref{uniformly integrable}), under the assumption of no-trash
assets.\smallskip 

Consequently, by letting $l\to\infty$, we obtain the convergence
of distributions in (\ref{let l to infty}) for $\theta=\big(X_{1-h}^{0},Y_{1-h}^{1}\big)$.
Hence, by Lemma \ref{Lemma 4} again, we have the following convergence:
\[
\lim_{n\to\infty}\log\big<b_{\infty}^{(h,l)}\big(X_{1-h}^{0},Y_{1-h}^{1}\big),U_{1}\big>=\log\big<b_{\infty}^{(h,\infty)}\big(X_{1-h}^{0},Y_{1-h}^{1}\big),U_{1}\big>,\text{ a.s,}
\]
where $b_{\infty}^{(h,\infty)}\big(X_{1-h}^{0},Y_{1-h}^{1}\big)$
denotes a maximizer for $\max_{b\in\mathcal{B}^{m}}\mathbb{E}\big(\log\big<b,U_{1}\big>|\sigma\big(X_{1-h}^{0},Y_{1-h}^{1}\big)\big)$.
Noting that this limit also holds for $\mathbb{P}\big(U_{1}|\mathcal{I}\big)$-almost
all $U_{1}$, almost surely. Next, taking the limit as $l\to\infty$
of both sides of (\ref{First needed inequality Theorem 3}), we obtain:
\begin{align*}
\lim_{l\to\infty}\lim_{n\to\infty}\frac{1}{n}\sum_{i=1}^{n}\log\big<b_{i}^{(h,l)}\big(\theta_{i}\big),U_{i}\big> & =\lim_{l\to\infty}\mathbb{E}\big(\log\big<b_{\infty}^{(h,l)}\big(X_{1-h}^{0},Y_{1-h}^{1}\big),U_{1}\big>|\mathcal{I}\big)\\
 & =\mathbb{E}\big(\log\big<b_{\infty}^{(h,\infty)}\big(X_{1-h}^{0},Y_{1-h}^{1}\big),U_{1}\big>|\mathcal{I}\big)\\
 & =\mathbb{E}\big(\log\big<b_{h+1}^{*},U_{1}\big>|\mathcal{I}\big),\text{ a.s,}
\end{align*}
where the last equality follows from the fact that $b_{\infty}^{(h,\infty)}\big(X_{1-h}^{0},Y_{1-h}^{1}\big)\big(T^{h}\omega\big)=b_{h+1}^{*}\big(\omega\big)$,
which is the measurable log-optimal portfolio corresponding to the
$\sigma$-field\textbf{ $\mathcal{F}_{h+1}^{X}=T^{h}\sigma\big(X_{1-h}^{0},Y_{1-h}^{1}\big)$}
due to the stationarity of the market process. Then, by telescoping,
we obtain the following:
\begin{align}
\lim_{n\to\infty}\frac{1}{n}\sum_{i=1}^{n}\log\big<b_{n}^{(h)},U_{i}\big> & =\lim_{n\to\infty}\frac{1}{n}\log\Big(\frac{1}{l}\sum_{l}\prod_{i=1}^{n}\big<b_{i}^{(h,l)},U_{i}\big>\Big)\nonumber \\
 & \geq\lim_{n\to\infty}\sup_{l}\frac{1}{n}\log\Big(\frac{1}{l}\prod_{i=1}^{n}\big<b_{i}^{(h,l)},U_{i}\big>\Big)\nonumber \\
 & \geq\sup_{l}\lim_{n\to\infty}\frac{1}{n}\Big(\log\frac{1}{l}+\sum_{i=1}^{n}\log\big<b_{n}^{(h)},U_{i}\big>\Big)\nonumber \\
 & \geq\lim_{l\to\infty}\lim_{n\to\infty}\frac{1}{n}\sum_{i=1}^{n}\log\big<b_{n}^{(h)},U_{i}\big>\label{WEA inequality}\\
 & =\mathbb{E}\big(\log\big<b_{h+1}^{*},U_{1}\big>|\mathcal{I}\big),\text{ a.s,}\nonumber 
\end{align}
which is the first needed inequality for the statement of the limit.\smallskip 

\emph{(Deriving the remaining inequality)}. Let a strategy $\big(b_{n}^{*(h)}\big)$
be defined such that at each time $n$, it looks back at the past
$\big(X_{n-h}^{n-1},Y_{n-h}^{n}\big)$ and selects the corresponding
optimal portfolio, i.e.,
\begin{align*}
\mathbb{E}\big(\log\big<b_{n}^{*(h)},U_{n}\big>|\sigma\big(X_{n-h}^{n-1},Y_{n-h}^{n}\big)\big) & =\max_{b\in\mathcal{B}^{m}}\mathbb{E}\big(\log\big<b,U_{n}\big>|\sigma\big(X_{n-h}^{n-1},Y_{n-h}^{n}\big)\big),\text{ }\forall n.
\end{align*}
This can be chosen from the origin $n=1$ as the market is a two-sided
infinite process. Clearly, all such optimal portfolios $b_{n}^{*(h)}$
could be represented by a single measurable log-optimal portfolio
$b_{h+1}^{*}\in\mathcal{F}_{h+1}^{X}$. Hence, applying Birkhoff's
ergodic theorem yields the following result:
\[
\lim_{n\to\infty}\frac{1}{n}\sum_{i=1}^{n}\log\big<b_{n}^{*(h)},U_{i}\big>=\lim_{n\to\infty}\frac{1}{n}\sum_{i=1}^{n}\log\big<b_{h+1}^{*},U_{i}\big>=\mathbb{E}\big(\log\big<b_{h+1}^{*},U_{1}\big>|\mathcal{I}\big)\text{, a.s.}
\]
Moreover, since all the combinatorial portfolios $b_{n}^{(h)}$ also
depend on past information $\big(X_{n-h}^{n-1},Y_{n-h}^{n}\big)$,
they are suboptimal to the portfolios $b_{n}^{*(h)}$. We can derive
the following inequality using the Kuhn-Tucker condition for log-optimality,
similar to the proof of Lemma \ref{Lemma 2}:
\[
\lim_{n\to\infty}\frac{1}{n}\sum_{i=1}^{n}\log\big<b_{n}^{(h)},U_{i}\big>\leq\lim_{n\to\infty}\frac{1}{n}\sum_{i=1}^{n}\log\big<b_{n}^{*(h)},U_{i}\big>=\mathbb{E}\big(\log\big<b_{h+1}^{*},U_{1}\big>|\mathcal{I}\big)\text{, a.s,}
\]
which finally constitutes the needed limit. Furthermore, the $L^{1}$-convergence
follows immediately from the application of Lebesgue's dominated convergence
theorem.\smallskip 

\emph{(Definitive statement of the optimality)}. Using the same argument
as before, and noting that any constant strategy based on a single
portfolio is measurable with respect to any $\mathcal{F}_{h+1}^{X}$
for all orders $h$, let us consider the optimal constant strategy
$\big(b^{*}\big)$ and the log-optimal strategy $\big(b_{n}^{*}\big)$
in a stationary market. We then obtain the following inequalities
due to Lemma \ref{Lemma 2}: 
\[
\lim_{n\to\infty}\frac{1}{n}\sum_{i=1}^{n}\log\big<b^{*},U_{i}\big>\leq\lim_{n\to\infty}\frac{1}{n}\sum_{i=1}^{n}\log\big<b_{n}^{(h)},U_{i}\big>\leq\lim_{n\to\infty}\frac{1}{n}\sum_{i=1}^{n}\log\big<b_{n}^{*},U_{i}\big>,\text{ a.s.}
\]
Since the terms associated with the strategies $\big(b^{*}\big)$
and $\big(b_{n}^{*}\big)$ are equal by Theorem \ref{Theorem 2},
we have:
\[
\lim_{n\to\infty}\big(W_{n}\big(b_{n}^{(h)}\big)-W_{n}\big(b^{*}\big)\big)=\lim_{n\to\infty}\big(W_{n}\big(b_{n}^{(h)}\big)-W_{n}\big(b_{n}^{*}\big)\big)=0,\text{ a.s},
\]
which establishes the optimality (also in $L^{1}$) of the strategy
$\big(b_{n}^{(h)}\big)$ and completes the proof.
\end{proof}

\section{Summary and concluding remarks}

In this paper, we investigate the problem of sequential portfolio
decision-making in a market with partially observable side information
and a latent dependence structure. The established results demonstrate
the existence of a random optimal constant strategy, which is time-invariant
and does not require knowledge of the dependence structure or observation
of additional market information. This strategy achieves an asymptotic
growth rate as fast as the log-optimal strategy, which dynamically
determines portfolios based on perfect market information over time.
The reason for this phenomenon lies in the stationarity of the market
process, which diminishes the advantages of fully utilizing market
information for dynamic strategies over time, leading to an equilibrium
state. Furthermore, if the market is also ergodic, a non-random optimal
constant strategy exists, regardless of the market's evolutionary
possibilities. These findings question conventional perspectives in
the fields of information theory, learning theory, and finance, where
the prevailing belief is that an optimal strategy should be dynamic
and utilize perfect information, rather than being a simple, time-invariant
one. The traditional view maintains that a constant strategy is optimal
only when the process of sole assets' returns is i.i.d.\smallskip

With the established equality theorems for the limiting growth rate
between the log-optimal and random constant strategies, we also broaden
and enhance the optimality guarantees of two approaches for learning
algorithms. The first approach involves algorithms that replicate
the asymptotic growth rate of the random optimal constant strategy,
learning solely from past assets' returns. These algorithms have the
advantage of simplicity, as they avoid the difficulties of requiring
knowledge of side market information and the latent dependence structure,
relying instead on publicly available data. In contrast, the second
approach involves algorithms that replicate the asymptotic growth
rate of the log-optimal strategy, which necessitate knowledge of the
dependence structure and complete past side market information for
estimation. These algorithms encounter challenges such as high-dimensional
market feature data and the inaccessibility of all relevant information.
Fortunately, the existence of an optimal constant strategy enables
this type of algorithm to bypass the need for side information and
reduce the memory length of past events, while still guaranteeing
optimality. However, this approach is less favorable due to its computational
complexity.\smallskip

\textbf{Connection between frameworks}. Another significant contribution
of this paper is the potential connection it establishes between existing
frameworks in the literature. First, the validity of the ``\emph{online
learning portfolio}'' framework, as surveyed in \citet{Erven2020},
is considerably strengthened. This framework investigates learning
algorithms that asymptotically approach the growth rate of the best
retrospective constant portfolios (i.e., constant strategies) using
only past assets' returns. Its justification in finance, originally
stated in the pioneering works of \citet{Cover1991,Cover1996,Cover1996a,Cover2006,Ordentlich1996},
is that these algorithms can asymptotically achieve the optimal limiting
growth rate among all dynamic strategies if the market of sole assets'
returns is i.i.d. Now, we extend this framework by demonstrating that
it guarantees the optimal limiting growth rate among all dynamic strategies
in a stationary market with partially observable side information
and latent dependence structure, while still relying solely on past
assets' returns. Additionally, the problem of learning side information,
as studied in \citet{Cover1996a,Bhatt2023}, which is potentially
unobservable and high-dimensional, is reduced to an application of
the online learning portfolio framework. Finally, a connection with
the framework of universal learning and prediction posed by \citet{Algoet1992},
utilizing mixing functionally generated portfolios, is established,
at least under the stationary condition of the market process. Besides,
it should be remarked further that the work by \citet{Cuchiero2019}
also provides a connection between the stochastic portfolio theory
and the log-optimal portfolio theory, restricted to the $1$-order
Markov process of sole assets' returns.

\pagebreak\addcontentsline{toc}{Section}{Bibliography}

\end{document}